\newtheorem{theorem}{Theorem}
\newtheorem{definition}[theorem]{Definition}
\newtheorem{example}[theorem]{Example}
\newtheorem{remark}[theorem]{Remark}
\newenvironment{proof}[1][Proof]{\noindent\textbf{#1.} }{\ \rule{0.5em}{0.5em}}
\begin{document}

\author{}
\title{\textbf{Power divergence approach for one-shot device testing under competing risks. }}
\date{}
\author{N. Balakrishnan\footnote{McMaster University, ON, Canada. email: bala@mcmaster.ca }, E. Castilla\footnote{Complutense University of Madrid, Spain. email: elecasti@ucm.es }, N. Martin\footnote{Complutense University of Madrid, Spain. email: nirian@estad.ucm.es} \ and L. Pardo\footnote{Complutense University of Madrid, Spain. email: lpardo@ucm.es }}
\maketitle

\begin{abstract}
Most work on one-shot devices assume that there is only one possible cause of  device failure. However, in practice, it is often the case that  the products under study can experience any one of various possible causes of failure. Robust estimators and Wald-type tests  are developed here for the case of one-shot devices under competing risks.  An extensive simulation study illustrates the robustness of these divergence-based estimators and test procedures based on them. A data-driven procedure is  proposed for choosing the optimal estimator for any given data set which is then  applied to an example in the context of survival analysis.
\end{abstract}
\section{Introduction}

In lifetime data analysis, it is often the case that the products under study can experience  one of different types of failure. For example, in the context of survival analysis, we can have several different types of failure (death, relapse, opportunistic infection, etc.) that are of interest to us, leading to the so-called ``competing risks'' scenario.  A competing risk is an event whose occurrence precludes the occurrence of the primary event of interest. In a study examining time to death attributable, for instance, to cardiovascular causes, death attributable to noncardiovascular causes would be a competing risk. Crowder (2001) has presented review of this competing risks problem for which one needs to estimate the failure rates for each cause.  Balakrishnan  et al. (2016a, 2016b) and So (2016) have discussed the problem of one-shot devices under competing risk   for the first time. 

One-shot device testing data, also known as current status data in survival analysis, come from testing one-shot devices that are tested only once and get destroyed right after the test. It is very common that one-shot devices contain multiple components and failure of any of them will lead to the failure of the device. In Balakrishnan et al (2019a, 2019b, 2019c, 2020), new estimators and Wald-type tests for one-shot devices models have been proposed. The estimators introduced, namely, weighted minimum density power divergence estimators, and the corresponding Wald-type tests, demonstrated good behaviour and performance in terms of robustness without a significant loss of efficiency. However, it was assumed that there is only one survival endpoint of interest, and that censoring is independent of the event in interest. The main purpose of this work is to develop  weighted minimum density power divergence estimators as well as Wald-type test statistics under competing risk models for one-shot device testing assuming exponential lifetimes.

In Section  \ref{sec:model}, we  present the model formulation as well as the notation to be used the rest of the paper. In this section, we also describe the relation between  maximum likelihood estimator and   minimization of the Kullback-Leibler divergence between appropriate distributions.  The weighted minimum density power divergence estimators  for one-shot device testing exponential model  under competing risks are then developed in Section \ref{sec:MDPDE}.   Their asymptotic distribution and a new family of Wald-type test statistics based on them are also  presented in this section. In Section \ref{sec:sim},  an extensive Monte Carlo simulation study is carried out for demonstrating the robust behaviour of the proposed estimators as well as the testing procedures.  An ad hoc procedure for the choice of the optimal estimator  is also provided in this section. The developed methods are then applied to a pharmacology data for illustrative purposes. Finally, some concluding remarks are made in Section \ref{sec:conclusion}, while the proofs of all the main results are presented in Appendix \ref{sec:app}.

\section{Model formulation \label{sec:model}}
In this section, we shall introduce the notation necessary for the developments in this paper, paying special attention to the maximum likelihood estimator (MLE) of the model, as well as its relation with the minimization of  Kullback-Leibler divergence.  

The setting for an accelerate life-test for one-shot devices under competing risks considered here is stratified in $I$ testing conditions as follows:
\begin{enumerate}
\item The tests are checked at inspection times $IT_i$, for $i=1,\dots,I$;
\item The devices are tested under $J$ different stress levels, $\boldsymbol{x}_i=(x_{i1},\dots,x_{iJ})^T$, for $i=1,\dots,I$;
\item $K_i$ devices are tested in the $i$th test condition, for $i=1,\dots,I$;
\item The number of devices failed due to the $r$-th cause under the $i$-th test condition is denoted by $n_{ir}$, for $i=1,\dots I$, \ $r=1,\dots,R$;
\item The number of devices that survive under the $i$-th test condition is denoted by $n_{i0}=K_i-\sum_{r=1}^Rn_{ir}$.
\end{enumerate}

\begin{table}[htbp]  \tabcolsep4pt  \centering%
\caption{One-shot device testing under competing risks.\label{table:model}}
\begin{tabular}{c ccc ccc c ccc}
\hline
& & & & \multicolumn{3}{c}{Failures} &&\multicolumn{3}{c}{Stress levels}\\
\cline{5-7} \cline{9-11}
Condition & Times &  Devices & Survivals  & Cause $1$ &$\cdots $ & Cause $R$& & Stress $1$ &$\cdots $ & Stress $J$  \\  
\hline
$1$ & $IT_{1}$ & $K_{1}$ & $n_{10}$ &   $n_{11}$& $\cdots $&   $n_{1R}$& &  $x_{11}$ & $\cdots $ & $x_{1J}$ \\ 
$2$ & $IT_{2}$ & $K_{2}$ & $n_{20}$ & $n_{21}$& $\cdots $&   $n_{2R}$ & &$x_{21}$ & $\cdots $ & $x_{2J}$ \\ 
$\vdots $ & $\vdots $ & $\vdots $ &  & $\vdots $& & $\vdots $ &  & $\vdots $& & $\vdots $ \\ 
$I$ & $IT_{I}$ & $K_{I}$ & $n_{I0}$ &  $n_{I1}$& $\cdots $&   $n_{IR}$ && $x_{I1}$ & $\cdots $ & $x_{IJ}$ \\ 
\hline
\end{tabular}%
\end{table}%

This setting is summarized in Table \ref{table:model}. For simplicity, and as considered in Balakrishnan et al. (2016a), we will limit in this paper, the number of stress levels to $J=1$ and the number of competing causes to $R=2$, even though  inference for the general case when $J>1$ and $R>2$ can be presented in an analogous manner.

Let us denote the random variable for the failure time due to causes 1 and 2 as $T_{irk}$, for $r = 1,2$, $i=1,\dots,I$,  and $k = 1,\dots ,K _i$, respectively. We now assume that $T_{irk}$ follows an exponential distribution with failure rate parameter $\lambda_{ir}(\boldsymbol{\theta})$ and  its probability density function 
\begin{align*}
f_r(t;x_i,\boldsymbol{\theta})&=\lambda_{ir}(\boldsymbol{\theta})e^{-\lambda_{ir}(\boldsymbol{\theta})t}, \quad t>0,\\
 \lambda_{ir}(\boldsymbol{\theta})&=\theta_{r0}\exp({\theta_{r1}x_i}),\\\boldsymbol{\theta}&=(\theta_{10},\theta_{11},\theta_{20},\theta_{21})^T, \quad \theta_{r0},\theta_{r1}>0 , \quad r=1,2,
\end{align*}
where $x_i$ is the stress factor of the condition $i$ and $\boldsymbol{\theta}$ is the model parameter vector, with $\boldsymbol{\theta}\in\mathbb{R}^{4}$.

 We shall use $\pi_{i0}(\boldsymbol{\theta})$, $\pi_{i1}(\boldsymbol{\theta})$ and $\pi_{i2}(\boldsymbol{\theta})$ for the survival probability, failure probability due to cause $1$ and failure probability due to cause $2$, respectively. Their expressions are
\begin{align*}
\pi_{i0}(\boldsymbol{\theta})&=(1-F_1(IT_i;x_i,\boldsymbol{\theta}))(1-F_2(IT_i;x_i,\boldsymbol{\theta}))=\exp(-(\lambda_{i1}+\lambda_{i2})IT_i),\\
\pi_{i1}(\boldsymbol{\theta})&=\frac{\lambda_{i1}}{\lambda_{i1}+\lambda_{i2}}(1-\exp(-(\lambda_{i1}+\lambda_{i2})IT_i)),\\
\pi_{i2}(\boldsymbol{\theta})&=\frac{\lambda_{i2}}{\lambda_{i1}+\lambda_{i2}}(1-\exp(-(\lambda_{i1}+\lambda_{i2})IT_i)),
\end{align*}
where $\lambda_{ir}=\lambda_{ir}(\boldsymbol{\theta}), \ r=1,2$. Derivations of these expressions can be found in So (2016, p. 151). Now,  the likelihood function  is given by

\begin{equation}\label{eq:likelihood}
\mathcal{L}(n_{01},\dots,n_{I2};\boldsymbol{\theta})
\propto\prod_{i=1}^I \pi_{i0}(\boldsymbol{\theta})^{n_{i0}}\pi_{i1}(\boldsymbol{\theta})^{n_{i1}}\pi_{i2}(\boldsymbol{\theta})^{n_{i2}},
\end{equation}
where $n_{0i}+n_{1i}+n_{2i}=K_i, \ i=1,\dots,I$.

\begin{definition}[MLE, classical definition]
The maximum likelihood estimator (MLE) of $\boldsymbol{\theta}$, denoted by $\widehat{\boldsymbol{\theta}}$, is obtained by maximizing the likelihood function in (\ref{eq:likelihood}) or, equivalently, its logarithm.
\end{definition}

We will present an alternative definition of the MLE later on (see Definition \ref{def:MLE_Alt}). Let us introduce the following probability vectors:
\begin{align}
\widehat{\boldsymbol{p}}_{i}&=(\widehat{p}_{i0},\widehat{p}_{i1},\widehat{p}_{i2})^T=\frac{1}{K_i}(n_{i0},n_{i1},n_{i2})^T, \quad i=1,\dots,I, \label{eq:observed_vector}\\ 
\boldsymbol{\pi}_{i}(\boldsymbol{\theta})&=(\pi_{i0}(\boldsymbol{\theta}),\pi_{i1}(\boldsymbol{\theta}),\pi_{i2}(\boldsymbol{\theta}))^T, \quad i=1,\dots,I. \label{eq:theo_vector} 
\end{align}
The Kullback-Leibler divergence measure (see, for instance, Pardo (2006)), between $\widehat{\boldsymbol{p}}_{i}$ and $\boldsymbol{\pi }_{i}(\boldsymbol{\theta})$, is given by 
\begin{align*}
d_{KL}(\widehat{\boldsymbol{p}}_{i},\boldsymbol{\pi }_{i}(\boldsymbol{\theta}))& =\sum_{r=0}^2 \widehat{p}_{ir}\log \left( \dfrac{\widehat{p}_{ir}}{\pi _{ir}(\boldsymbol{\theta})}\right)\notag\\
&=\widehat{p}_{i0}\log \left( \dfrac{\widehat{p}_{i0}}{\pi _{i0}(\boldsymbol{\theta})}\right) +\widehat{p}_{i1}\log \left( \dfrac{\widehat{p}_{i1}}{\pi _{i1}(\boldsymbol{\theta})}\right)+\widehat{p}_{i2}\log \left( \dfrac{\widehat{p}_{i2}}{\pi _{i2}(\boldsymbol{\theta})}\right)  \notag \\
& =\frac{n_{i0}}{K_i}\log \left( \dfrac{n_{i0}/K_i}{\pi _{i0}(\boldsymbol{\theta})}\right) +\frac{n_{i1}}{K_i}\log \left( \dfrac{n_{i1}/K_i}{\pi _{i1}(\boldsymbol{\theta})}\right) +\frac{n_{i2}}{K_i}\log \left( \dfrac{n_{i2}/K_i}{\pi _{i2}(\boldsymbol{\theta})}\right) \notag\\
& = \frac{1}{K_i}\left\{ {n_{i0}}\log \left( \dfrac{n_{i0}/K_i}{\pi _{i0}(\boldsymbol{\theta})}\right) +{n_{i1}}\log \left( \dfrac{n_{i1}/K_i}{\pi _{i1}(\boldsymbol{\theta})}\right) +{n_{i2}}\log \left( \dfrac{n_{i2}/K_i}{\pi _{i2}(\boldsymbol{\theta})}\right)\right\},
 \label{eq:kull}
\end{align*}%
and the weighted Kullback-Leibler divergence measure is given by
\begin{align*}
d_{KL}^{W}(\boldsymbol{\theta})&=\sum_{i=1}^{I}\frac{K_{i}}{K}d_{KL}(\widehat{\boldsymbol{p}}_{i},\boldsymbol{\pi }_{i}(\boldsymbol{\theta}))\\
&=\frac{1}{K}\sum_{i=1}^{I}\left\{ {n_{i0}}\log \left( \frac{n_{i0}/K_i}{\pi _{i0}(\boldsymbol{\theta})}\right) 
+{n_{i1}}\log \left( \frac{n_{i1}/K_i}{\pi _{i1}(\boldsymbol{\theta})}\right) 
+{n_{i2}}\log \left( \frac{n_{i2}/K_i}{\pi _{i2}(\boldsymbol{\theta})}\right)\right\} ,
\end{align*}%
with $K=K_{1}+\cdots +K_{I}$.

\begin{theorem}
\label{res:dkull} The likelihood function $\mathcal{L}(n_{01},\dots,n_{I2};\boldsymbol{\theta})$, given in (\ref{eq:likelihood}), is
related to the weighted Kullback-Leibler divergence measure through 
\begin{equation}\label{eq:MLE_ALT}
d_{KL}^{W}(\boldsymbol{\theta})=\sum_{i=1}^{I}\frac{K_{i}}{K}d_{KL}(\widehat{\boldsymbol{p}}_{i},\boldsymbol{\pi }_{i}(\boldsymbol{\theta}))=c-\frac{1}{K}\log \mathcal{L}(n_{01},\dots,n_{I2};\boldsymbol{\theta}),
\end{equation}%
with $c$ being a constant, not dependent on $\boldsymbol{\theta}$.
\end{theorem}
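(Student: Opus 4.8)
The plan is to start directly from the definition of $d_{KL}^{W}(\boldsymbol{\theta})$ and split the logarithm of a ratio into a difference of two logarithms, thereby isolating the piece that depends on $\boldsymbol{\theta}$ from the piece that does not. Concretely, I would write
\begin{equation*}
d_{KL}^{W}(\boldsymbol{\theta})=\frac{1}{K}\sum_{i=1}^{I}\sum_{r=0}^{2}n_{ir}\log\left(\frac{n_{ir}/K_i}{\pi_{ir}(\boldsymbol{\theta})}\right)=\frac{1}{K}\sum_{i=1}^{I}\sum_{r=0}^{2}n_{ir}\log\left(\frac{n_{ir}}{K_i}\right)-\frac{1}{K}\sum_{i=1}^{I}\sum_{r=0}^{2}n_{ir}\log\pi_{ir}(\boldsymbol{\theta}),
\end{equation*}
adopting the usual convention $0\log 0=0$ so that categories with $n_{ir}=0$ cause no difficulty. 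The first double sum involves only the observed counts $n_{ir}$ and the group sizes $K_i$, so it is a fixed number independent of $\boldsymbol{\theta}$; I would name it $c$.

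The next step is to connect the remaining double sum to the likelihood. Taking logarithms in (\ref{eq:likelihood}) gives
\begin{equation*}
\log\mathcal{L}(n_{01},\dots,n_{I2};\boldsymbol{\theta})=\log\kappa+\sum_{i=1}^{I}\sum_{r=0}^{2}n_{ir}\log\pi_{ir}(\boldsymbol{\theta}),
\end{equation*}
where $\kappa$ denotes the proportionality constant implicit in (\ref{eq:likelihood}), arising from the multinomial coefficients $\prod_{i=1}^{I}\binom{K_i}{n_{i0},n_{i1},n_{i2}}$, and hence not depending on $\boldsymbol{\theta}$. Substituting this identity into the display above yields
\begin{equation*}
d_{KL}^{W}(\boldsymbol{\theta})=c-\frac{1}{K}\left(\log\mathcal{L}(n_{01},\dots,n_{I2};\boldsymbol{\theta})-\log\kappa\right)=\left(c+\frac{1}{K}\log\kappa\right)-\frac{1}{K}\log\mathcal{L}(n_{01},\dots,n_{I2};\boldsymbol{\theta}),
\end{equation*}
and absorbing $\frac{1}{K}\log\kappa$ into the constant delivers exactly (\ref{eq:MLE_ALT}).

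I do not anticipate any genuine analytic obstacle here, since the result is an algebraic rearrangement; the only points requiring care are bookkeeping ones. I would make sure that every $\boldsymbol{\theta}$-free quantity, namely the entropy-like first sum together with the proportionality constant $\kappa$, is collected into the single constant $c$, and that the convention $0\log 0=0$ is invoked explicitly so the argument remains valid when some failure category is empty. It is worth recording the immediate consequence that minimizing $d_{KL}^{W}(\boldsymbol{\theta})$ over $\boldsymbol{\theta}$ is equivalent to maximizing $\log\mathcal{L}$, which is precisely the alternative characterization of the MLE promised in the text and formalized later in Definition \ref{def:MLE_Alt}.
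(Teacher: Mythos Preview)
Your proposal is correct and follows essentially the same route as the paper: split the log-ratio in $d_{KL}^{W}(\boldsymbol{\theta})$ into a $\boldsymbol{\theta}$-free entropy term and $-\frac{1}{K}\sum_{i,r} n_{ir}\log\pi_{ir}(\boldsymbol{\theta})$, then identify the latter with $-\frac{1}{K}\log\mathcal{L}$. Your version is in fact slightly more careful, since you explicitly track the proportionality constant $\kappa$ from (\ref{eq:likelihood}) and the convention $0\log 0=0$, both of which the paper's proof leaves implicit.
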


\begin{definition}[MLE, alternative definition] \label{def:MLE_Alt}
The MLE of $\boldsymbol{\theta}$, $\widehat{\boldsymbol{\theta}}$, can be obtained by the minimization of the weighted Kullback-Leibler divergence measure given in (\ref{eq:MLE_ALT}).
\end{definition}
%AÑADIR EL MLE

\begin{example}[The BDC experiment]
 The  benzidine dihydrochloride (BDC)  experiment, studied in  Lindsey and Ryan (1993) and conducted at the National Center for Toxicological Research, examines the incidence in mice of liver tumors induced by the drug. Two different  doses of drug are induced in the mice: 60 parts per million (w=1) and 400 parts per million (w=2) and two causes of death are recorded: died without tumor ($\delta_{ijk}=1$) and died with tumor ($\delta_{ijk}=2$). These data are presented in Table \ref{table:numdata}.
\begin{table}[h!]
\caption{BDC experiment \label{table:numdata}}
\centering
\begin{tabular}{ l l ccc}
\hline
  &  & $\delta_{ijk}=0$ & $\delta_{ijk}=1$ & $\delta_{ijk}=2$ \\
\hline
$IT_1=9.37$  & $w=1$ & 70 & 2 & 0 \\
 & $w=2$ &  22 & 3 & 0 \\ \hline
$IT_2=14.07$  & $w=1$ & 48 & 1 & 0 \\
 & $w=2$ &  14 & 4 & 17 \\ \hline
 $IT_3=18.7$  & $w=1$ & 35 & 4 & 7 \\
 & $w=2$ &  1 & 1 & 9 \\ \hline
\end{tabular}
\end{table}
With these data, we obtained the MLE of the vector of parameters and also measured the  discrepancy of the corresponding estimated rates and the observed ones, given by
\begin{equation}\label{eq:estimated_error}
\frac{1}{3I}\sum_{i=1}^{I}\sum_{r=0}^2\left|\frac{n_{ir}-K_i\pi_{ir}(\widehat{\boldsymbol{\theta}})}{K_i}\right|.
\end{equation}
These results are presented in Table \ref{table:numdata0}. In the ensuing work, we will present alternative estimators to the MLE, which are seen to provide better performance in terms of robustness.

\begin{table}[h!]
\centering
\caption{MLEs of parameters based  on the BDC experiment \label{table:numdata0}}
\small
\begin{tabular}{l cccc c}
\hline
&$\theta_{10}$ & $\theta_{11}$ & $\theta_{20}$ & $\theta_{21}$ & estimated error  \\  \hline
MLE  & 0.00089   & 1.3191   & 0.00028   & 2.493 & 0.1051  \\
\hline
\end{tabular}
\end{table}

\end{example}

\section{Weighted minimum density power divergence estimator\label{sec:MDPDE}}

In this section, we shall introduce the weighted minimum density power divergence estimator  as a natural extension of the MLE. For this purpose, we shall introduce  the ordinary density power divergence (DPD). Given these two probability vectors $\widehat{\boldsymbol{p}}_{i}$ and $\boldsymbol{\pi }_{i}(\boldsymbol{\theta})$, defined in (\ref{eq:observed_vector}) and (\ref{eq:theo_vector}), respectively, the DPD between both probability vectors is given by 
\begin{align*}
d_{\beta }(\widehat{\boldsymbol{p}}_{i},\boldsymbol{\pi }_{i}(\boldsymbol{\theta}%
))& =\left( \pi _{i0}^{\beta +1}(\boldsymbol{\theta})+\pi _{i1}^{\beta +1}(\boldsymbol{\theta})+\pi _{i2}^{\beta +1}(%
\boldsymbol{\theta})\right) \notag \\
& -\frac{\beta +1}{\beta }\left( \widehat{p}_{i0}\pi
_{i0}^{\beta }(\boldsymbol{\theta})+\widehat{p}_{i1}\pi
_{i1}^{\beta }(\boldsymbol{\theta})+\widehat{p}_{i2}\pi _{i2}^{\beta }(%
\boldsymbol{\theta})\right)  \notag \\
& +\frac{1}{\beta }\left(\widehat{p}_{i0}^{\beta +1}+ \widehat{p}_{i1}^{\beta +1}+\widehat{p}%
_{i2}^{\beta +1}\right) ,\quad \text{if }\beta >0,  
\end{align*}%
and $d_{\beta =0}(\widehat{\boldsymbol{p}}_{i},\boldsymbol{\pi }_{i}(\boldsymbol{\theta}))=\lim_{\beta \rightarrow 0^{+}}d_{\beta }(\widehat{\boldsymbol{p}}_{i},\boldsymbol{\pi }_{i}(\boldsymbol{\theta}))=d_{KL}(\widehat{\boldsymbol{p}}_{i},\boldsymbol{\pi }_{i}(\boldsymbol{\theta}))$, for $\beta =0$.

The weighted DPD is given by 
\begin{align*}
d_{\beta}^{W}(\boldsymbol{\theta})=&\sum_{i=1}^{I}\frac{K_{i}}{K}\left[
\left(\pi _{i0}^{\beta +1}(\boldsymbol{\theta})+ \pi _{i1}^{\beta +1}(\boldsymbol{\theta})+\pi _{i2}^{\beta +1}(\boldsymbol{\theta})\right) \right.\\
&\left. -\frac{\beta +1}{\beta }\left( \widehat{p}_{i0}\pi _{i0}^{\beta }(%
\boldsymbol{\theta})+\widehat{p}_{i1}\pi _{i1}^{\beta }(\boldsymbol{\theta})+\widehat{p}_{i2}\pi _{i2}^{\beta }(\boldsymbol{\theta})\right) +\frac{1}{\beta }\left( \widehat{p}_{i0}^{\beta +1}+\widehat{p}_{i1}^{\beta +1}+\widehat{p}_{i2}^{\beta
+1}\right) \right]
\end{align*}
but the term $\frac{1}{\beta }\left( \widehat{p}_{i0}^{\beta +1}+\widehat{p}_{i1}^{\beta +1}+\widehat{p}_{i2}^{\beta +1}\right) $, $i=1,...,I$, does not have any role in its minimization with respect to $\boldsymbol{\theta}$. Therefore, in order to minimize $d_{\beta}^{W}(\boldsymbol{\theta})$, we can consider the equivalent measure 

\begin{small}
\begin{align}
^*d_{\beta}^{W}(\boldsymbol{\theta})=
\sum_{i=1}^I\frac{K_i}{K}&\left[\left(\pi _{i0}^{\beta +1}(\boldsymbol{\theta})+ \pi _{i1}^{\beta +1}(\boldsymbol{\theta})+\pi _{i2}^{\beta +1}(\boldsymbol{\theta})\right) \right. \notag \\
& \left. -\frac{\beta +1}{\beta }\left( \widehat{p}_{i0}\pi _{i0}^{\beta }(\boldsymbol{\theta})+\widehat{p}_{i1}\pi _{i1}^{\beta }(\boldsymbol{\theta})+\widehat{p}_{i2}\pi _{i2}^{\beta }(\boldsymbol{\theta})\right)\right].
\label{eq:dpd}
\end{align}
\end{small}

\begin{definition}
We can define the  weighted minimum density power divergence estimator of $\boldsymbol{\theta}$ as
\begin{equation*}
\widehat{\boldsymbol{\theta}}_{\beta }=\underset{\boldsymbol{\theta}\in \Theta }{\arg
\min }^*d_{\beta}^{W}(\boldsymbol{\theta}),\quad \text{for }\beta >0
\end{equation*}%
and for $\beta =0$ we get the weighted maximum likelihood estimator.
\end{definition}

%\newpage
\begin{theorem}
\label{res:est_equations2} The weighted minimum density power estimator of $\boldsymbol{\theta}$, with tuning parameter $\beta\geq 0$, $\widehat{\boldsymbol{\theta}}_{\beta}$, can be obtained as the solution of the following equation:
\begin{small}
\begin{equation*}
\sum_{i=1}^{I}K_{i}\left\{-\pi_{i0}(\boldsymbol{\theta})IT_i\left[ \pi_{i0}(\boldsymbol{\theta})^{\beta-1}(\pi_{i0}(\boldsymbol{\theta})-p_{i0})-(1-\pi_{i0}(\boldsymbol{\theta}))^{\beta-1}\Gamma_{i,\beta}\right]\boldsymbol{l}_i+(1-\pi_{i0}(\boldsymbol{\theta}))^{\beta}\Gamma^*_{i,\beta}\right\} =\boldsymbol{0}_{4},
\end{equation*}%
\end{small}
where 

\begin{align*}
\Gamma_{i,\beta}&=\dfrac{\lambda_{i1}^{\beta}\left[\frac{\lambda_{i1}}{\lambda_{i1}+\lambda_{i2}}(1-\pi_{i0}(\boldsymbol{\theta}))-p_{i1} \right]+\lambda_{i2}^{\beta}\left[\frac{\lambda_{i2}}{\lambda_{i1}+\lambda_{i2}}(1-\pi_{i0}(\boldsymbol{\theta}))-p_{i2} \right]}{(\lambda_{i1}+\lambda_{i2})^{\beta}},\\
\Gamma^*_{i,\beta}&=\dfrac{\lambda_{i1}^{\beta-1}\left[\frac{\lambda_{i1}}{\lambda_{i1}+\lambda_{i2}}(1-\pi_{i0}(\boldsymbol{\theta}))-p_{i1} \right]-\lambda_{i2}^{\beta-1}\left[\frac{\lambda_{i2}}{\lambda_{i1}+\lambda_{i2}}(1-\pi_{i0}(\boldsymbol{\theta}))-p_{i2} \right]}{(\lambda_{i1}+\lambda_{i2})^{\beta-1}},
\end{align*}
$
\boldsymbol{l}_i=(\lambda_{i1}/\theta_{10},\lambda_{i1}x_i,\lambda_{i2}/\theta_{20},\lambda_{i2}x_i)^T$ and 
$\boldsymbol{r}_i=\frac{\lambda_{i1}\lambda_{i2}}{(\lambda_{i1}+\lambda_{i2})^2}(1/\theta_{10},x_i,-1/\theta_{20},-x_i)^T.
$

\end{theorem}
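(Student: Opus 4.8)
The plan is to derive the estimating equation as the stationarity condition $\nabla_{\boldsymbol{\theta}}\, {}^*d_{\beta}^{W}(\boldsymbol{\theta})=\boldsymbol{0}_4$ and then to massage the resulting gradient into the stated closed form by substituting the explicit derivatives of the cell probabilities $\pi_{ir}(\boldsymbol{\theta})$. First I would differentiate (\ref{eq:dpd}) term by term. Using $\tfrac{d}{d\boldsymbol{\theta}}\pi_{ir}^{\beta+1}=(\beta+1)\pi_{ir}^{\beta}\nabla\pi_{ir}$ and $\tfrac{d}{d\boldsymbol{\theta}}\pi_{ir}^{\beta}=\beta\pi_{ir}^{\beta-1}\nabla\pi_{ir}$, the two groups of terms combine so that
\[
\nabla_{\boldsymbol{\theta}}\, {}^*d_{\beta}^{W}(\boldsymbol{\theta})=\frac{\beta+1}{K}\sum_{i=1}^{I}K_i\sum_{r=0}^{2}\pi_{ir}^{\beta-1}\bigl(\pi_{ir}-p_{ir}\bigr)\nabla\pi_{ir}.
\]
Since the scalar $(\beta+1)/K>0$, setting this to zero is equivalent to $\sum_i K_i\sum_r\pi_{ir}^{\beta-1}(\pi_{ir}-p_{ir})\nabla\pi_{ir}=\boldsymbol{0}_4$, so the whole task reduces to computing the three gradients $\nabla\pi_{i0},\nabla\pi_{i1},\nabla\pi_{i2}$.

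Next I would compute these gradients from the parametrization $\lambda_{ir}=\theta_{r0}\exp(\theta_{r1}x_i)$. One reads off $\nabla\lambda_{i1}=(\lambda_{i1}/\theta_{10},\lambda_{i1}x_i,0,0)^T$ and $\nabla\lambda_{i2}=(0,0,\lambda_{i2}/\theta_{20},\lambda_{i2}x_i)^T$, whence $\nabla(\lambda_{i1}+\lambda_{i2})=\boldsymbol{l}_i$ and therefore $\nabla\pi_{i0}=-IT_i\pi_{i0}\boldsymbol{l}_i$. A quotient-rule computation then establishes the identity $\nabla\bigl(\lambda_{i1}/(\lambda_{i1}+\lambda_{i2})\bigr)=\boldsymbol{r}_i$, which (since the two ratios sum to one) also gives $\nabla\bigl(\lambda_{i2}/(\lambda_{i1}+\lambda_{i2})\bigr)=-\boldsymbol{r}_i$; this is the step that pins down the explicit form of $\boldsymbol{r}_i$ stated in the theorem. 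Differentiating $\pi_{i1}=\tfrac{\lambda_{i1}}{\lambda_{i1}+\lambda_{i2}}(1-\pi_{i0})$ and $\pi_{i2}=\tfrac{\lambda_{i2}}{\lambda_{i1}+\lambda_{i2}}(1-\pi_{i0})$ by the product rule then yields
\[
\nabla\pi_{i1}=(1-\pi_{i0})\boldsymbol{r}_i+\tfrac{\lambda_{i1}}{\lambda_{i1}+\lambda_{i2}}IT_i\pi_{i0}\boldsymbol{l}_i,\qquad \nabla\pi_{i2}=-(1-\pi_{i0})\boldsymbol{r}_i+\tfrac{\lambda_{i2}}{\lambda_{i1}+\lambda_{i2}}IT_i\pi_{i0}\boldsymbol{l}_i.
\]

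Finally I would substitute the three gradients into the estimating equation and regroup each summand into a coefficient of $\boldsymbol{l}_i$ plus a coefficient of $\boldsymbol{r}_i$. The $r=0$ term contributes only to $\boldsymbol{l}_i$, giving $-IT_i\pi_{i0}\,\pi_{i0}^{\beta-1}(\pi_{i0}-p_{i0})\boldsymbol{l}_i$; the $r=1,2$ terms each split into an $\boldsymbol{l}_i$ part and an $\boldsymbol{r}_i$ part that must be accumulated separately. The essential bookkeeping step is to rewrite the powers using $\pi_{ir}=\tfrac{\lambda_{ir}}{\lambda_{i1}+\lambda_{i2}}(1-\pi_{i0})$, so that $\pi_{ir}^{\beta-1}=\bigl(\tfrac{\lambda_{ir}}{\lambda_{i1}+\lambda_{i2}}\bigr)^{\beta-1}(1-\pi_{i0})^{\beta-1}$; the extra factor $\tfrac{\lambda_{ir}}{\lambda_{i1}+\lambda_{i2}}$ carried by the $\boldsymbol{l}_i$ parts raises the exponent on $\lambda_{ir}$ from $\beta-1$ to $\beta$, so that the combined $\boldsymbol{l}_i$ coefficient of the $r=1,2$ terms collapses exactly to $IT_i\pi_{i0}(1-\pi_{i0})^{\beta-1}\Gamma_{i,\beta}$, while the $\boldsymbol{r}_i$ parts (with opposite signs) collapse to $(1-\pi_{i0})^{\beta}\Gamma^*_{i,\beta}\,\boldsymbol{r}_i$. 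Adding the $r=0$ and $r=1,2$ contributions to $\boldsymbol{l}_i$ produces the bracket $\pi_{i0}^{\beta-1}(\pi_{i0}-p_{i0})-(1-\pi_{i0})^{\beta-1}\Gamma_{i,\beta}$, and restoring the weights $K_i$ yields the announced equation. I expect this last regrouping to be the only real obstacle: it is purely organizational, the difficulty lying in keeping the $\beta-1$ versus $\beta$ exponents and the signs of the $\boldsymbol{r}_i$ contributions straight while matching the two accumulated sums to the definitions of $\Gamma_{i,\beta}$ and $\Gamma^*_{i,\beta}$.
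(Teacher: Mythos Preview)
Your proposal is correct and follows exactly the same route as the paper: differentiate ${}^*d_\beta^W$ to reach $\sum_i K_i\sum_r\pi_{ir}^{\beta-1}(\pi_{ir}-p_{ir})\nabla\pi_{ir}=\boldsymbol{0}_4$, compute $\nabla\pi_{i0},\nabla\pi_{i1},\nabla\pi_{i2}$ via $\boldsymbol{l}_i$ and $\boldsymbol{r}_i$, and then regroup. In fact you carry the argument further than the paper does, since the paper stops after recording the three gradients and simply writes ``We then obtain the desired result,'' whereas you actually spell out the regrouping into $\Gamma_{i,\beta}$ and $\Gamma^*_{i,\beta}$ (and correctly attach the vector $\boldsymbol{r}_i$ to the $\Gamma^*_{i,\beta}$ term, which the theorem statement appears to have dropped).
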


%\newpage
Now, by using the  Theorem 3.1 in Ghosh and Basu (2013), we can obtain the asymptotic distribution of the above weighted minimum density power divergence estimator.
\begin{theorem}\label{res:asymp}
 Let $\boldsymbol{\theta}^0$ be the true value of the parameter $\boldsymbol{\theta}$. The asymptotic distribution of the weighted minimum density power divergence estimator of $\boldsymbol{\theta}$, $\widehat{\boldsymbol{\theta}}_{\beta}$, is given by
\[
\sqrt{K}\left(  \widehat{\boldsymbol{\theta}}_{\beta}-\boldsymbol{\theta}^{0}\right)   \overset{\mathcal{L}}{\underset{K\mathcal{\rightarrow}\infty}{\longrightarrow}}\mathcal{N}\left(  \boldsymbol{0}_{4},\boldsymbol{{J}}_{\beta}^{-1}(\boldsymbol{\theta}^{0})\boldsymbol{{K}}_{\beta}(\boldsymbol{\theta}^{0})\boldsymbol{{J}}_{\beta}^{-1}(\boldsymbol{\theta}^{0})\right)  ,
\]
where
\begin{align}
\boldsymbol{{J}}_{\beta}(\boldsymbol{\theta}) &  =\sum_{i=1}^{I}\sum_{r=0}^{2}\frac{K_i}{K}\boldsymbol{u}^*_{ir}(\boldsymbol{\theta})\boldsymbol{u}_{ir}^{*T}(\boldsymbol{\theta})\pi_{ir}^{\beta-1}(\boldsymbol{\theta}), \label{eq:Jbar}\\
\boldsymbol{{K}}_{\beta}(\boldsymbol{\theta}) &  = \sum_{i=1}^{I}\sum_{r=0}^{2}\frac{K_i}{K}\boldsymbol{u}^*_{ir}(\boldsymbol{\theta})\boldsymbol{u}_{ir}^{*T}(\boldsymbol{\theta})\pi_{ir}^{2\beta-1}(\boldsymbol{\theta})-\sum_{i=1}^{I}\frac{K_i}{K}\boldsymbol{\xi}_{i,\beta}(\boldsymbol{\theta})\boldsymbol{\xi}_{i,\beta}^{T}(\boldsymbol{\theta}) \label{eq:Kbar},
\end{align}
with $ \boldsymbol{\xi}_{i,\beta}(\boldsymbol{\theta})   =\sum_{r=0}^{2}\boldsymbol{u}^*_{ir}(\boldsymbol{\theta})\pi_{ir}^{\beta}(\boldsymbol{\theta})$  and  $\boldsymbol{u}^*_{ir}(\boldsymbol{\theta})= \frac{\partial \pi_{ir}(\boldsymbol{\theta})}{\partial \boldsymbol{\theta}^T}$, where
\begin{align*}
\dfrac{\partial \pi_{i0}(\boldsymbol{\theta})}{\partial \boldsymbol{\theta}}&=-IT_i\pi_{i0}(\boldsymbol{\theta})\boldsymbol{l}_i, \\
\dfrac{\partial \pi_{i1}(\boldsymbol{\theta})}{\partial \boldsymbol{\theta}}&=\frac{\lambda_{i1}}{\lambda_{i1}+\lambda_{i2}}IT_i\pi_{i0}(\boldsymbol{\theta})\boldsymbol{l}_i+(1-\pi_{i0}(\boldsymbol{\theta}))\boldsymbol{r}_i,\\
\dfrac{\partial \pi_{i2}(\boldsymbol{\theta})}{\partial \boldsymbol{\theta}}&=\frac{\lambda_{i2}}{\lambda_{i1}+\lambda_{i2}}IT_i\pi_{i0}(\boldsymbol{\theta})\boldsymbol{l}_i-(1-\pi_{i0}(\boldsymbol{\theta}))\boldsymbol{r}_i,
\end{align*}
$
\boldsymbol{l}_i=(\lambda_{i1}/\theta_{10},\lambda_{i1}x_i,\lambda_{i2}/\theta_{20},\lambda_{i2}x_i)^T$ and 
$\boldsymbol{r}_i=\frac{\lambda_{i1}\lambda_{i2}}{(\lambda_{i1}+\lambda_{i2})^2}(1/\theta_{10},x_i,-1/\theta_{20},-x_i)^T.
$

\end{theorem}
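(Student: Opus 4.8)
\section*{Proof proposal for Theorem \ref{res:asymp}}

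The plan is to cast the problem into the independent-but-non-homogeneous (INH) setup of Ghosh and Basu (2013): within condition $i$ each of the $K_i$ tested devices independently realizes one of the three categories (survival, failure by cause $1$, failure by cause $2$) according to the categorical law $\boldsymbol{\pi}_i(\boldsymbol{\theta})$, and these categorical observations are independent across all $K=\sum_{i=1}^I K_i$ devices. Up to the $\boldsymbol{\theta}$-free term already discarded in (\ref{eq:dpd}), the weighted objective $^*d_\beta^W(\boldsymbol{\theta})$ is exactly the averaged empirical DPD objective of that framework, so $\widehat{\boldsymbol{\theta}}_\beta$ is a bona fide minimum DPD functional and Theorem 3.1 of Ghosh and Basu (2013) delivers the asserted sandwich law $\sqrt{K}(\widehat{\boldsymbol{\theta}}_\beta-\boldsymbol{\theta}^0)\to\mathcal{N}(\boldsymbol{0}_4,\boldsymbol{J}_\beta^{-1}\boldsymbol{K}_\beta\boldsymbol{J}_\beta^{-1})$ once its regularity conditions are checked. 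First I would verify those conditions here: the probabilities $\pi_{ir}(\boldsymbol{\theta})$ are $C^\infty$ in $\boldsymbol{\theta}$ through the exponential link $\lambda_{ir}=\theta_{r0}\exp(\theta_{r1}x_i)$ and stay bounded away from $0$ on a neighbourhood of $\boldsymbol{\theta}^0$ (so that $\pi_{ir}^{\beta-1}$ is well behaved), the categorical variables have finite support so every moment and Lindeberg-type condition holds automatically, and the fractions $K_i/K$ are held fixed as $K\to\infty$. The only genuinely model-specific hypotheses to confirm are identifiability of $\boldsymbol{\theta}^0$ and positive-definiteness of $\boldsymbol{J}_\beta(\boldsymbol{\theta}^0)$ in (\ref{eq:Jbar}).

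Granting the general theorem, what remains is to identify the two matrices explicitly. I would start from the estimating function obtained as the gradient of $^*d_\beta^W$ (up to the positive factor $\beta+1$ and an overall sign),
\[
\boldsymbol{\Psi}_K(\boldsymbol{\theta})=\sum_{i=1}^I\frac{K_i}{K}\sum_{r=0}^2\bigl(\widehat{p}_{ir}-\pi_{ir}(\boldsymbol{\theta})\bigr)\pi_{ir}^{\beta-1}(\boldsymbol{\theta})\,\boldsymbol{u}^*_{ir}(\boldsymbol{\theta}),
\]
whose root is $\widehat{\boldsymbol{\theta}}_\beta$; this is the equation of Theorem \ref{res:est_equations2} after the chain-rule substitutions recorded below. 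Since $E_{\boldsymbol{\theta}^0}[\widehat{p}_{ir}]=\pi_{ir}(\boldsymbol{\theta}^0)$, the function is unbiased at the truth. The matrix $\boldsymbol{J}_\beta$ is the negative expected gradient (sensitivity matrix): differentiating $\boldsymbol{\Psi}_K$ in $\boldsymbol{\theta}^T$ and taking expectation at $\boldsymbol{\theta}^0$, every resulting term that still carries a factor $(\widehat{p}_{ir}-\pi_{ir})$ has zero mean, so only the contribution from differentiating that leading factor, namely $-\partial\pi_{ir}/\partial\boldsymbol{\theta}^T=-\boldsymbol{u}_{ir}^{*T}$, survives, giving $\boldsymbol{J}_\beta(\boldsymbol{\theta}^0)=\sum_i\sum_r\frac{K_i}{K}\pi_{ir}^{\beta-1}\boldsymbol{u}^*_{ir}\boldsymbol{u}_{ir}^{*T}$, which is precisely (\ref{eq:Jbar}).

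For $\boldsymbol{K}_\beta$ I would compute the asymptotic variance of $\sqrt{K}\,\boldsymbol{\Psi}_K(\boldsymbol{\theta}^0)$. Using independence across conditions together with the within-condition multinomial covariance $\mathrm{Cov}(\widehat{p}_{ir},\widehat{p}_{is})=K_i^{-1}(\pi_{ir}\delta_{rs}-\pi_{ir}\pi_{is})$, the factors of $K$ and $K_i$ telescope and one obtains
\[
\sum_{i}\frac{K_i}{K}\Bigl[\sum_{r}\pi_{ir}^{2\beta-1}\boldsymbol{u}^*_{ir}\boldsymbol{u}_{ir}^{*T}-\Bigl(\sum_{r}\pi_{ir}^{\beta}\boldsymbol{u}^*_{ir}\Bigr)\Bigl(\sum_{s}\pi_{is}^{\beta}\boldsymbol{u}_{is}^{*T}\Bigr)\Bigr],
\]
and recognizing the quadratic term as $\boldsymbol{\xi}_{i,\beta}\boldsymbol{\xi}_{i,\beta}^{T}$ with $\boldsymbol{\xi}_{i,\beta}=\sum_r\boldsymbol{u}^*_{ir}\pi_{ir}^{\beta}$ reproduces (\ref{eq:Kbar}). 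The last ingredient is purely computational: deriving $\boldsymbol{u}^*_{ir}=\partial\pi_{ir}/\partial\boldsymbol{\theta}$ by the chain rule from $\pi_{i0}=\exp(-(\lambda_{i1}+\lambda_{i2})IT_i)$ and $\pi_{ir}=\frac{\lambda_{ir}}{\lambda_{i1}+\lambda_{i2}}(1-\pi_{i0})$, using $\partial\lambda_{i1}/\partial\boldsymbol{\theta}=\lambda_{i1}(1/\theta_{10},x_i,0,0)^T$ and $\partial\lambda_{i2}/\partial\boldsymbol{\theta}=\lambda_{i2}(0,0,1/\theta_{20},x_i)^T$, which collapse into the vectors $\boldsymbol{l}_i$ and $\boldsymbol{r}_i$ and yield the three displayed derivative formulas. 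I expect the main obstacle to be not any single computation but the careful bookkeeping in the $\boldsymbol{K}_\beta$ step, in particular showing that the $-\pi_{ir}\pi_{is}$ part of the multinomial covariance is exactly what assembles into the correction term $-\sum_i\frac{K_i}{K}\boldsymbol{\xi}_{i,\beta}\boldsymbol{\xi}_{i,\beta}^{T}$, together with confirming the identifiability and positive-definiteness conditions that legitimize the appeal to the Ghosh and Basu (2013) theorem.
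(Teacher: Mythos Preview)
Your proposal is correct and follows the same approach as the paper: both invoke Theorem~3.1 of Ghosh and Basu (2013) for the sandwich asymptotic law and then compute the gradients $\partial\pi_{ir}/\partial\boldsymbol{\theta}$ via the chain rule to pin down $\boldsymbol{l}_i$ and $\boldsymbol{r}_i$. In fact you go further than the paper's own proof, which records only the derivative formulas and leaves the identification of $\boldsymbol{J}_\beta$ and $\boldsymbol{K}_\beta$ (and the regularity checks) implicit in the citation.
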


\subsection{Wald-type test statistics}
Let us consider the function $\boldsymbol{m}:\mathbb{R}^{J+1}\longrightarrow 
\mathbb{R}^{r}$, where $r\leq 4$. Then 
\begin{equation}\label{eq:WALDtest}
\boldsymbol{m}\left( \boldsymbol{\theta}\right) =\boldsymbol{0}_{r},
\end{equation}%
with $\boldsymbol{0}_{r}$ being the null column vector of dimension $r$, which represents the  null hypothesis. We assume that the $4 \times r$ matrix  %either composite ($r<4$) or multiple ($r=4$)
\begin{equation*}
\boldsymbol{M}\left( \boldsymbol{\theta}\right) =\frac{\partial \boldsymbol{m}%
^{T}\left( \boldsymbol{\theta}\right) }{\partial \boldsymbol{\theta}}
\end{equation*}%
exists and is continuous in \textquotedblleft $\boldsymbol{\theta}$%
\textquotedblright\ and that rank$(\boldsymbol{M}\left( \boldsymbol{\theta}\right) )=r.$
For testing 
\begin{equation}
H_{0}:\boldsymbol{\theta\in \Theta }_{0}\text{ against }H_{1}:\boldsymbol{\theta\notin
\Theta }_{0},  \label{eq:HComp}
\end{equation}%
where 
\begin{equation*}
\boldsymbol{\Theta }_{0}=\left\{ \boldsymbol{\theta\in \Theta }_{0}:\boldsymbol{m}%
\left( \boldsymbol{\theta}\right) =\boldsymbol{0}_{r}\right\} ,
\end{equation*}%
we can consider the following Wald-type test statistics:
\begin{equation*}
W_{K}\left( \widehat{\boldsymbol{\theta}}_{\beta }\right) =K\boldsymbol{m}%
^{T}\left( \widehat{\boldsymbol{\theta}}_{\beta }\right) \left( \boldsymbol{M}%
^{T}\left( \widehat{\boldsymbol{\theta}}_{\beta }\right) \boldsymbol{\Sigma }%
\left( \widehat{\boldsymbol{\theta}}_{\beta }\right) \boldsymbol{M}\left( 
\widehat{\boldsymbol{\theta}}_{\beta }\right) \right) ^{-1}\boldsymbol{m}\left( 
\widehat{\boldsymbol{\theta}}_{\beta }\right) ,
\end{equation*}%
where%
\begin{equation*}
\boldsymbol{\Sigma }_{\beta }\left( \widehat{\boldsymbol{\theta}}_{\beta }\right)
={\boldsymbol{J}}_{\beta }^{-1}\left( \widehat{\boldsymbol{\theta}}_{\beta }\right) {\boldsymbol{K}}_{\beta }\left( \widehat{\boldsymbol{\theta}}_{\beta }\right) {\boldsymbol{J}}_{\beta}^{-1}\left( \widehat{\boldsymbol{\theta}}_{\beta }\right) ,
\end{equation*}%
and ${\boldsymbol{J}}_{\beta }\left( \boldsymbol{\theta}\right) $ and ${\boldsymbol{K}}_{\beta }\left( \boldsymbol{\theta}\right) $\ are
as given in (\ref{eq:Jbar}) and (\ref{eq:Kbar}), repectively. Wald-type test statistics based on weighted minimum density power divergence estimator
have been considered previously by Basu et al. (2015) and Ghosh et al. (2016).

\begin{theorem}
\label{th:test_asym} Under the null hypothesis, we have 
\begin{equation*}
W_{K}\left( \widehat{\boldsymbol{\theta}}_{\beta }\right) \underset{K\rightarrow
\infty }{\overset{\mathcal{L}}{\longrightarrow }}\chi _{r}^{2}.
\end{equation*}
\end{theorem}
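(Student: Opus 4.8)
The plan is to combine the asymptotic normality of $\widehat{\boldsymbol{\theta}}_{\beta}$ established in Theorem \ref{res:asymp} with a first-order Taylor expansion of the constraint function $\boldsymbol{m}$ and the classical quadratic-form characterization of the chi-square law. First I would invoke Theorem \ref{res:asymp}, which gives
\[
\sqrt{K}\bigl(\widehat{\boldsymbol{\theta}}_{\beta}-\boldsymbol{\theta}^{0}\bigr)\overset{\mathcal{L}}{\longrightarrow}\mathcal{N}\bigl(\boldsymbol{0}_{4},\boldsymbol{\Sigma}_{\beta}(\boldsymbol{\theta}^{0})\bigr),
\]
with $\boldsymbol{\Sigma}_{\beta}(\boldsymbol{\theta}^{0})=\boldsymbol{J}_{\beta}^{-1}(\boldsymbol{\theta}^{0})\boldsymbol{K}_{\beta}(\boldsymbol{\theta}^{0})\boldsymbol{J}_{\beta}^{-1}(\boldsymbol{\theta}^{0})$. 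Since under $H_{0}$ we have $\boldsymbol{m}(\boldsymbol{\theta}^{0})=\boldsymbol{0}_{r}$ while $\boldsymbol{M}(\boldsymbol{\theta})$ exists and is continuous, a Taylor expansion of $\boldsymbol{m}$ around $\boldsymbol{\theta}^{0}$ yields
\[
\sqrt{K}\,\boldsymbol{m}(\widehat{\boldsymbol{\theta}}_{\beta})=\boldsymbol{M}^{T}(\boldsymbol{\theta}^{0})\sqrt{K}\bigl(\widehat{\boldsymbol{\theta}}_{\beta}-\boldsymbol{\theta}^{0}\bigr)+o_{P}(1).
\]

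Next I would apply the delta method: combining the two displays above gives
\[
\sqrt{K}\,\boldsymbol{m}(\widehat{\boldsymbol{\theta}}_{\beta})\overset{\mathcal{L}}{\longrightarrow}\mathcal{N}\bigl(\boldsymbol{0}_{r},\boldsymbol{M}^{T}(\boldsymbol{\theta}^{0})\boldsymbol{\Sigma}_{\beta}(\boldsymbol{\theta}^{0})\boldsymbol{M}(\boldsymbol{\theta}^{0})\bigr).
\]
Because $\mathrm{rank}(\boldsymbol{M}(\boldsymbol{\theta}^{0}))=r$ and $\boldsymbol{\Sigma}_{\beta}(\boldsymbol{\theta}^{0})$ is positive definite, the $r\times r$ limiting covariance $\boldsymbol{M}^{T}\boldsymbol{\Sigma}_{\beta}\boldsymbol{M}$ is nonsingular. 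The standard fact that $\boldsymbol{Z}^{T}\boldsymbol{\Sigma}^{-1}\boldsymbol{Z}\sim\chi_{r}^{2}$ whenever $\boldsymbol{Z}\sim\mathcal{N}(\boldsymbol{0}_{r},\boldsymbol{\Sigma})$ with $\boldsymbol{\Sigma}$ nonsingular then yields
\[
K\,\boldsymbol{m}^{T}(\widehat{\boldsymbol{\theta}}_{\beta})\bigl(\boldsymbol{M}^{T}(\boldsymbol{\theta}^{0})\boldsymbol{\Sigma}_{\beta}(\boldsymbol{\theta}^{0})\boldsymbol{M}(\boldsymbol{\theta}^{0})\bigr)^{-1}\boldsymbol{m}(\widehat{\boldsymbol{\theta}}_{\beta})\overset{\mathcal{L}}{\longrightarrow}\chi_{r}^{2}.
\]

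The remaining and most delicate step is to replace the unknown $\boldsymbol{\theta}^{0}$ by the estimator $\widehat{\boldsymbol{\theta}}_{\beta}$ inside the weighting matrix, since the statistic $W_{K}(\widehat{\boldsymbol{\theta}}_{\beta})$ is defined using $\boldsymbol{M}(\widehat{\boldsymbol{\theta}}_{\beta})$ and $\boldsymbol{\Sigma}_{\beta}(\widehat{\boldsymbol{\theta}}_{\beta})$. Here I would use the consistency of $\widehat{\boldsymbol{\theta}}_{\beta}$, a byproduct of Theorem \ref{res:asymp}, together with the continuity of $\boldsymbol{M}(\cdot)$, $\boldsymbol{J}_{\beta}(\cdot)$ and $\boldsymbol{K}_{\beta}(\cdot)$ in $\boldsymbol{\theta}$, to conclude via the continuous mapping theorem that
\[
\boldsymbol{M}^{T}(\widehat{\boldsymbol{\theta}}_{\beta})\boldsymbol{\Sigma}_{\beta}(\widehat{\boldsymbol{\theta}}_{\beta})\boldsymbol{M}(\widehat{\boldsymbol{\theta}}_{\beta})\overset{P}{\longrightarrow}\boldsymbol{M}^{T}(\boldsymbol{\theta}^{0})\boldsymbol{\Sigma}_{\beta}(\boldsymbol{\theta}^{0})\boldsymbol{M}(\boldsymbol{\theta}^{0}).
\]
An application of Slutsky's theorem then shows that substituting the estimated weighting matrix does not alter the limiting law, so that $W_{K}(\widehat{\boldsymbol{\theta}}_{\beta})\overset{\mathcal{L}}{\longrightarrow}\chi_{r}^{2}$, completing the argument. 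The main obstacle is precisely this plug-in justification: one must ensure the limiting covariance is nonsingular and that matrix inversion is a continuous operation in a neighborhood of $\boldsymbol{\theta}^{0}$, so that the convergence survives inversion and the quadratic form retains its $\chi_{r}^{2}$ limit.
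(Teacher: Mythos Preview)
Your argument is correct and follows the standard route for establishing the asymptotic $\chi_r^2$ law of Wald-type statistics: asymptotic normality of $\widehat{\boldsymbol{\theta}}_{\beta}$ from Theorem~\ref{res:asymp}, the delta method applied to $\boldsymbol{m}$, the quadratic-form characterization of the chi-square distribution, and a Slutsky/continuous-mapping step to justify the plug-in of $\widehat{\boldsymbol{\theta}}_{\beta}$ in the weighting matrix.

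The paper itself does not supply a proof of Theorem~\ref{th:test_asym}; it states the result and refers the reader to Basu et al.\ (2015) and Ghosh et al.\ (2016), where Wald-type tests based on minimum DPD estimators are developed. The argument in those references is precisely the one you outline, so your proposal is in line with what the paper implicitly relies on. Your identification of the plug-in step as the only nontrivial point, requiring nonsingularity of $\boldsymbol{M}^{T}\boldsymbol{\Sigma}_{\beta}\boldsymbol{M}$ and continuity of the map $\boldsymbol{\theta}\mapsto\bigl(\boldsymbol{M}^{T}(\boldsymbol{\theta})\boldsymbol{\Sigma}_{\beta}(\boldsymbol{\theta})\boldsymbol{M}(\boldsymbol{\theta})\bigr)^{-1}$ near $\boldsymbol{\theta}^{0}$, is accurate and well placed.
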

Based on Theorem \ref{th:test_asym} , we can reject the null hypothesis,  in (\ref{eq:HComp}), if 
\begin{equation}\label{eq:reject}
W_{K}\left( \widehat{\boldsymbol{\theta}}_{\beta }\right) >\chi _{r,\alpha }^{2},
\end{equation}
where $\chi _{r,\alpha }^{2}$ is the upper $\alpha$ percentage point  of $\chi _{r}^{2}$ distribution.

Some results about the power function of the proposed Wald-type tests are presented in Appendix \ref{app:power}.\\

\begin{remark}[Robustness properties]
The influence function (IF) is a classical tool to measure the local robustness of an estimator (Hampel et al.,1968). In Balakrishnan et al. (2019a, 2020), the robustness of the weighted minimum density estimators and tests, for $\beta > 0$, was theoretically derived through loal dependence under the exponential assumption but in a non-competing risk framework, for large leverages $\boldsymbol{x}_is$. Analogous computations would result in the same conclusion for the competing risks scenario.   However, we could not directly infer about the robustness against outliers in the response variable which are, in fact, the misspecification errors. In the next section, a simulation study is carried out in order to empirically illustrate the  robustness of the proposed statistics with $\beta>0$, and the non-robustness when $\beta=0$, also against such misspecification errors.
\end{remark}

%\section{Robustness properties \label{sec:influence} }

%\clearpage

\section{Simulation Study \label{sec:sim}}
In this section, a Monte Carlo simulation study that examines the accuracy of the proposed weighted minimum density power divergence estimators is presented. Section \ref{sec:sim_eff} focuses on the efficiency, measured in terms of  root of mean square error (RMSE), mean  bias  error (MBE) and mean absolute error (MAE), of the estimators of model parameters, while Section \ref{sec:sim_test} examines the behavior of  Wald-type tests developed in preceding sections.  Finally, in Section \ref{sec:sim_choice}, an ad hoc procedure for the choice of the tuning parameter  is proposed. Every step of simulation was tested under S = 5,000 replications with  R statistical software.

Paying special attention to the robustness issue, we will consider in this context, ``outlying cells'' rather than ``outlying observations'' (see Balakrishnan et al., 2019a, 2019b). This means that devices under a specific testing condition (cell) will not follow the general one-shot device model considered, contributing to an increase in the values of the divergence measure between the data and the corresponding fitted values.
In this cell, the number of devices failed will be lower or higher than expected. This is similar to the principle of inflated models in distribution theory (see Lambert (1992) and Heilbron (1994)). The main purpose of this study is to show that within the family of weighted minimum density power divergence estimators, developed in the preceding sections, there are estimators with better robustness properties than the MLE, and the Wald-type tests constructed based on them are at the same time more robust than the classical Wald test constructed based on the MLE.

\subsection{Weighted minimum density power divergence estimators \label{sec:sim_eff}}
The lifetimes of devices are simulated  for different levels of reliability and different sample sizes, under 4 different stress conditions with 1 stress factor at 4 levels. Then, all devices under each stress condition are inspected at 3 different inspection times, depending on the level of reliability. The corresponding data will then be collected under $I=12$ test conditions.
\subsubsection{Balanced data: Effect of the sample size}
Firstly, a balanced data with equal sample size for each group was considered. $K_i$ was taken to range from small to large sample sizes, two causes of failure were considered, and the model parameters were set to be $\boldsymbol{\theta}=(\theta_{10},0.05,\theta_{20},0.08)^T$ with $\theta_{10}\in\{0.008,0.004,0.001\}$ and $\theta_{20}\in\{0.0008,0.0004,0.0001\}$  for devices with  low, moderate and high reliability, respectively.  To prevent many zero-observations in test groups, the inspection times were set as $IT \in \{5,10,20\}$ for the case of low reliability, $IT \in \{7,15,25\}$ for the case of moderate reliability, and $IT \in \{10,20,30\}$ for the case of high reliability. To evaluate the robustness of the weighted minimum density power divergence estimators, we studied their behavior in the presence of an outlying cell for the first testing condition in our table. This cell was generated under the parameters $\tilde{\boldsymbol{\theta}}=(\theta_{10},0.05,\theta_{20},0.15)^T$. See Table  \ref{table:modelsim1} for a summary of these scenarios. RMSEs, MAEs and MBEs of model parameters were then computed for the cases of both pure and contaminated data and are plotted in Figures \ref{fig:RMSE}, \ref{fig:MAE} and \ref{fig:MBE}, respectively, with similar conclusions for the three error measures.

For the case of pure data, MLE presents the best behaviour (overall in the model with high reliability) and an increment in the tuning parameter $\beta$ leads to a gradual loss in terms of efficiency. However, in the case of contaminated data, MLE turns to be the worst estimator, and weighted minimum density power divergence estimators with $\beta>0$ present  much more robust behaviour. Note that, as expected, an increase in the sample size improves the efficiency of the estimators, both for pure and contaminated data.

\begin{table}[h!!!]
\small
\caption{Parameter values used in the simulation. Study of efficiency. \label{table:modelsim1} }
\centering
\begin{tabular}{l l c c}
\hline
\textbf{Reliability }& \textbf{Parameters} & \textbf{Symbols} & \textbf{Values} \\ 
\hline
 &Risk 1   & $\theta_{10}$, $\theta_{11}$ & $0.008,0.05$ \\ 
\textit{Low reliability}&Risk 2   & $\theta_{20}$, $\theta_{21}$ & $0.0008,0.08$ \\ 
& Contamination & $\tilde{\theta}_{21}$ & $0.15$\\ 
&Temperature ($^\circ$C)   & $\boldsymbol{x}^T=(x_1,x_2,x_3,x_4)$ & $(35,45,55,65)$ \\ 
&Inspection Time (days) & $IT=\{IT_1,IT_2,IT_3\}$ & $\{5, 10, 20\}$ \\ 
\hline
&Risk 1   & $\theta_{10}$, $\theta_{11}$ & $0.004,0.05$ \\ 
\textit{Moderate reliability} &Risk 2   & $\theta_{20}$, $\theta_{21}$ & $0.0004,0.08$ \\ 
& Contamination & $\tilde{\theta}_{21}$ & $0.15$\\ 
&Temperature ($^\circ$C)   & $\boldsymbol{x}^T=(x_1,x_2,x_3,x_4)$ & $(35,45,55,65)$ \\ 
&Inspection Time (days) & $IT=(IT_1,IT_2,IT_3)$ & $\{7, 15, 25\}$ \\ 
\hline 
&Risk 1   & $\theta_{10}$, $\theta_{11}$ & $0.001,0.05$ \\ 
\textit{High reliability} &Risk 2   & $\theta_{20}$, $\theta_{21}$ & $0.0001,0.08$ \\
& Contamination & $\tilde{\theta}_{21}$ & $0.15$\\  
&Temperature ($^\circ$C)   & $\boldsymbol{x}^T=(x_1,x_2,x_3,x_4)$ & $(35,45,55,65)$ \\ 
&Inspection Time (days) & $IT=(IT_1,IT_2,IT_3)$ & $\{10, 20, 30\}$ \\ 
\hline 
\end{tabular} 
\end{table}

\subsubsection{Unbalanced data: Effect of the degree of contamination}
Now, we consider an unbalanced data with unequal sample sizes for the test conditions. This data set, which consists a total of $K=300$ devices, is presented in Table  \ref{table:ALTnb}.  A competing risks model, with two different causes of failure, was generated with parameters $\boldsymbol{\theta}= (0.001,0.05, 0.0001,0.08)^T$. To examine the robustness in this accelerated life test (ALT) plan (in which the devices are tested under high stress levels, so that more failures can be observed), we increased each  of the parameters of the outlying first cell (Figure \ref{fig:unbalanced}). The contaminated parameters are expressed by $\tilde{\theta}_{10},\tilde{\theta}_{11},\tilde{\theta}_{20}$ and $\tilde{\alpha}_{21}$, respectively.

\begin{table}[!!h!!!]  
\caption{ALT plan, unbalanced data.\label{table:ALTnb}}
\centering
\begin{tabular}{c|ccc}
\hline 
i & $x_i$ & $IT_i$ & $K_i$ \\  %
\hline 
1 & 35 & 10 & 50 \\ 
2 & 45 & 10 & 40 \\ 
3 & 55 & 10 & 20 \\ 
4 & 65 & 10 & 40 \\ 
5 & 35 & 20 & 20 \\ 
6 & 45 & 20 & 20 \\ 
7 & 55 & 20 & 30 \\ 
8 & 65 & 20 & 20 \\ 
9 & 35 & 30 & 20 \\ 
10 & 45 & 30 & 20 \\ 
11 & 55 & 30 & 10 \\ 
12 & 65 & 30 & 10 \\
\hline 
\end{tabular} 
\end{table}

When there is no contamination in the cell or the degree of contamination is very low, and in concordance with results obtained in the previous scenario,  MLE is observed to be the most efficient estimator. However, when  the degree of contamination increases, there is an increase in the error for all the estimators,  but weighted minimum density power divergence estimators are shown to be much more robust. This is also the case for whatever choice of the contamination parameters we considered.

\begin{figure}[h!]
\centering
\begin{tabular}{cc}
\includegraphics[scale=0.4]{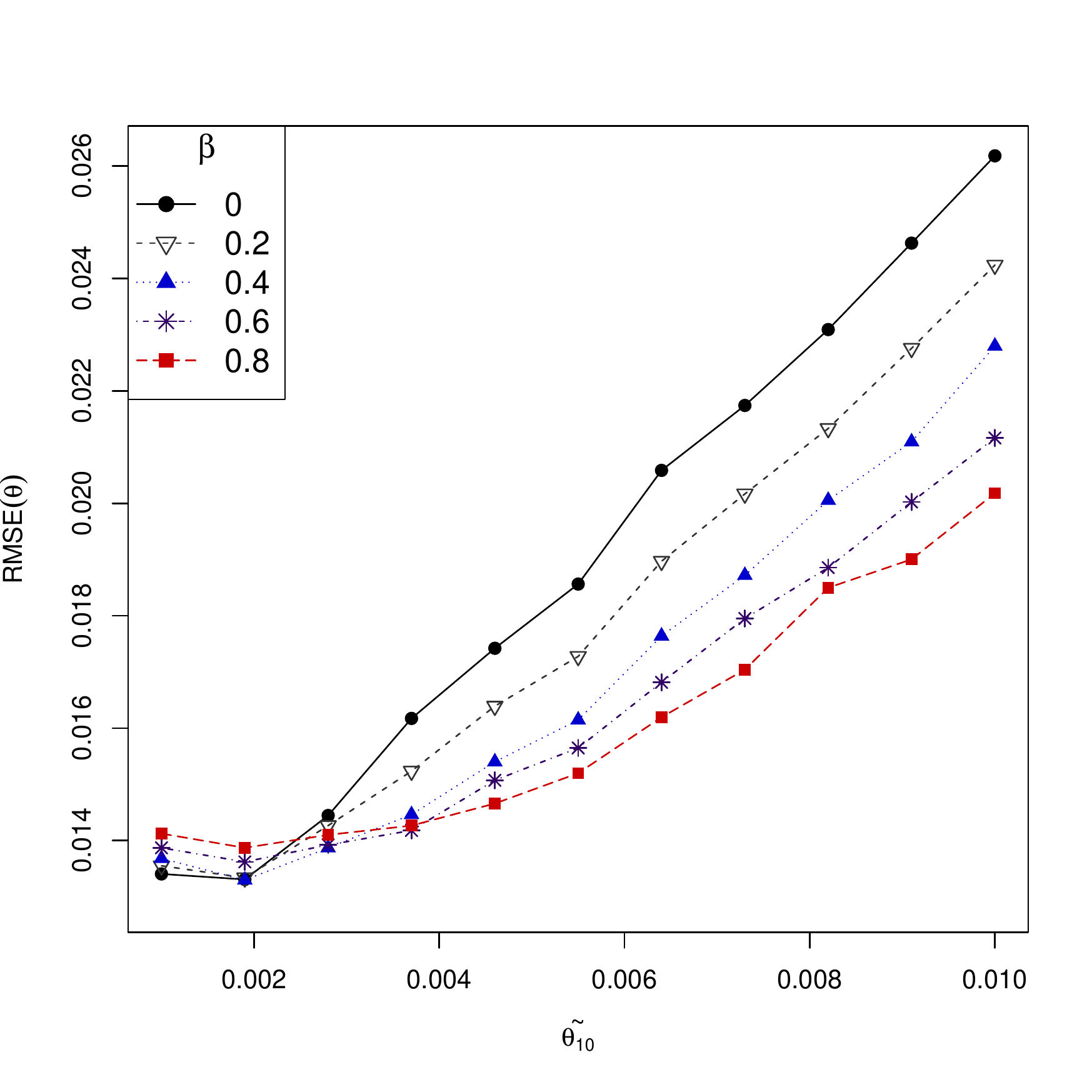} & %level_pure_inicial
\includegraphics[scale=0.4]{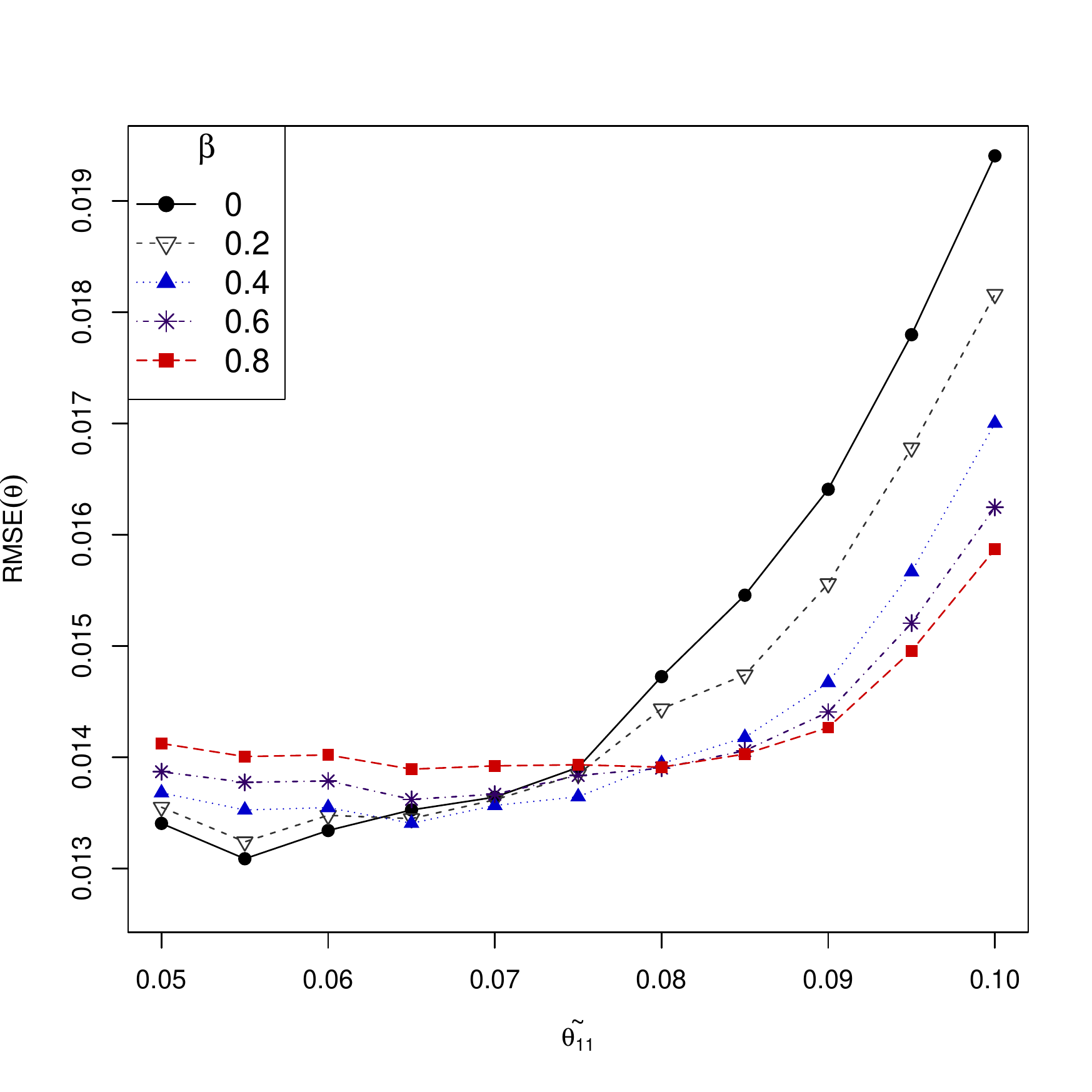} \\ %level_cont_inicial
\includegraphics[scale=0.4]{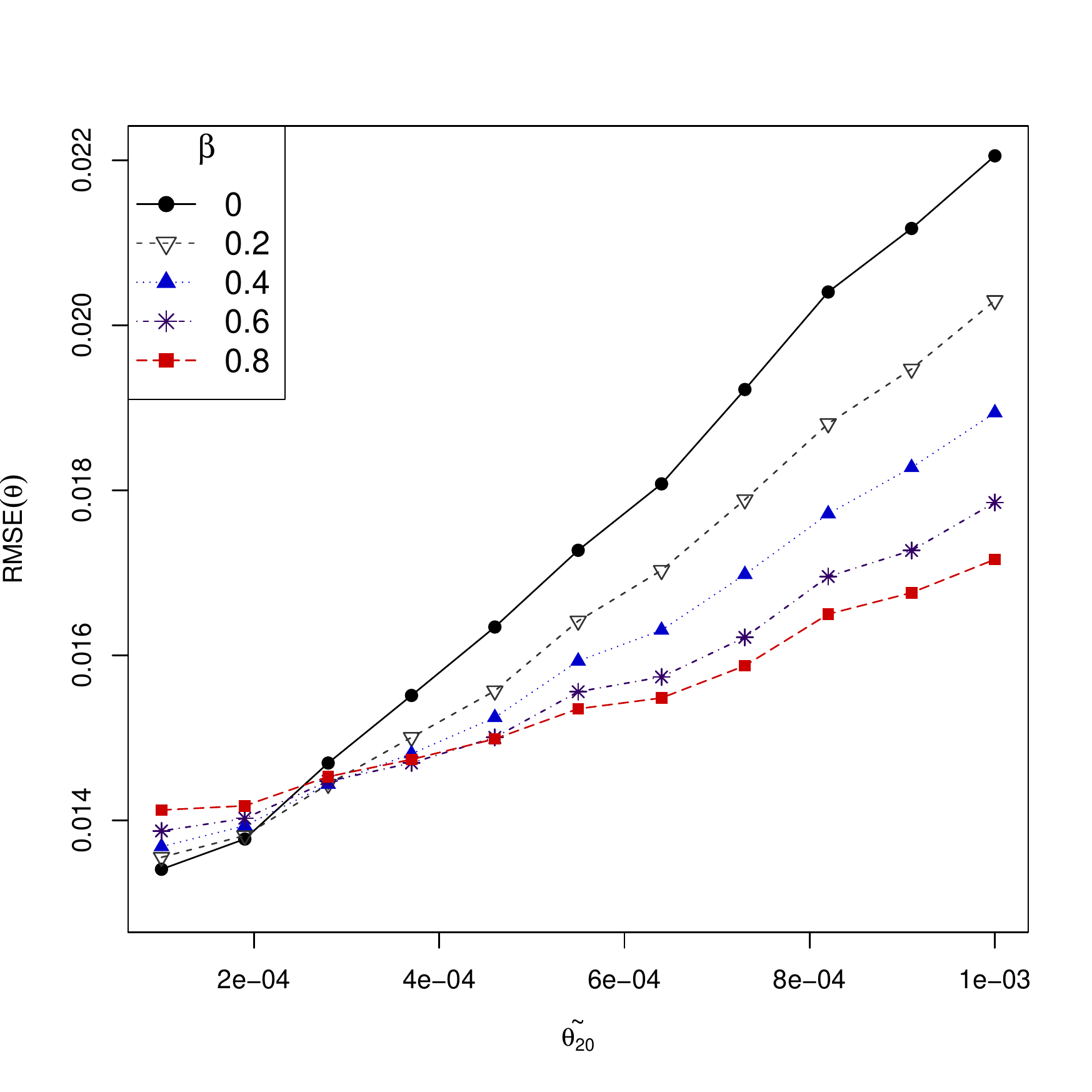} & %ower_pure_inicial
\includegraphics[scale=0.4]{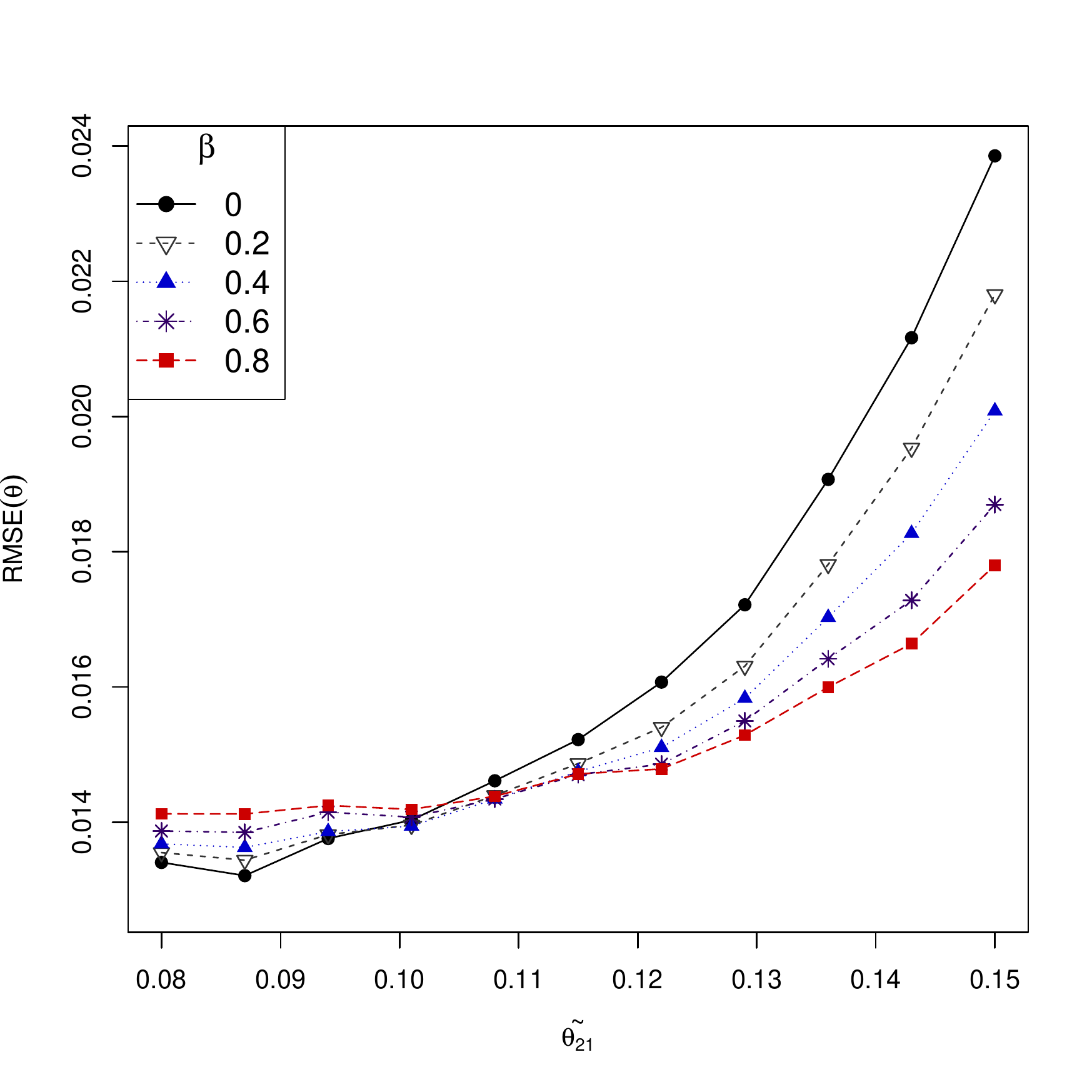} \\ %power_cont_inicial
\end{tabular}%
\caption{ RMSEs of the weighted minimum density power divergence estimators of $\boldsymbol{\theta}$ for different contamination parameter values. Unbalanced data.}
\label{fig:unbalanced}
\end{figure}

\subsection{Wald-type tests \label{sec:sim_test}}

Let us consider the balanced data under moderate reliability defined in the previous section. To compute the accuracy in terms of contrast, we  consider the  testing problem
\begin{equation}\label{eq:sim_hyp}
H_0:\theta_{21}=0.08 \quad \text{vs.} \quad H_1:\theta_{21}\neq 0.08.
\end{equation}
%The level of significance of a test is defined as the probability of rejecting the null hypothesis by the test when it is really true, while the power of a test is the probability of rejecting the null hypothesis  when a specific alternative hypothesis  is true. 
For computing the empirical test level, we measured the proportion of test statistics exceeding the corresponding chi-square critical value. The simulated test powers were also obtained under $H_1$ in (\ref{eq:sim_hyp}) in a similar manner. We used a nominal level of $0.05$. Table \ref{table:modelsimW} summarizes the model  considered for this purpose. As in the previous section, an outlying cell with $\tilde{\theta}_{21}=0.15$ is considered to illustrate the robustness of the proposed Wald-type tests (Figure \ref{fig:Wald}).

\begin{table}[h!!!!!] 
%\small
\caption{Parameter values used in the simulation study of Wald-type tests.\label{table:modelsimW}}
\centering
\begin{tabular}{l l c c}
\hline 
\textbf{Study} &  \textbf{Parameters} & \textbf{Symbols} & \textbf{Values} \\ 
\hline
\textit{Levels} & Model True Parameters  & $\boldsymbol{\theta}^T=(\theta_{10}, \theta_{11}, \theta{20}, \theta_{21})$ & $(0.004,0.05,0.0004,0.08)$ \\ 
\hline
\textit{Powers} & Model True Parameters  & $\boldsymbol{\theta}^T=(\theta_{10}, \theta_{11}, \theta{20}, \theta_{21})$ & $(0.004,0.05,0.0004,0.09)$ \\  

\hline  
\end{tabular} 
\end{table}

\begin{figure}[h!]
\centering
\begin{tabular}{ll}
\includegraphics[scale=0.4]{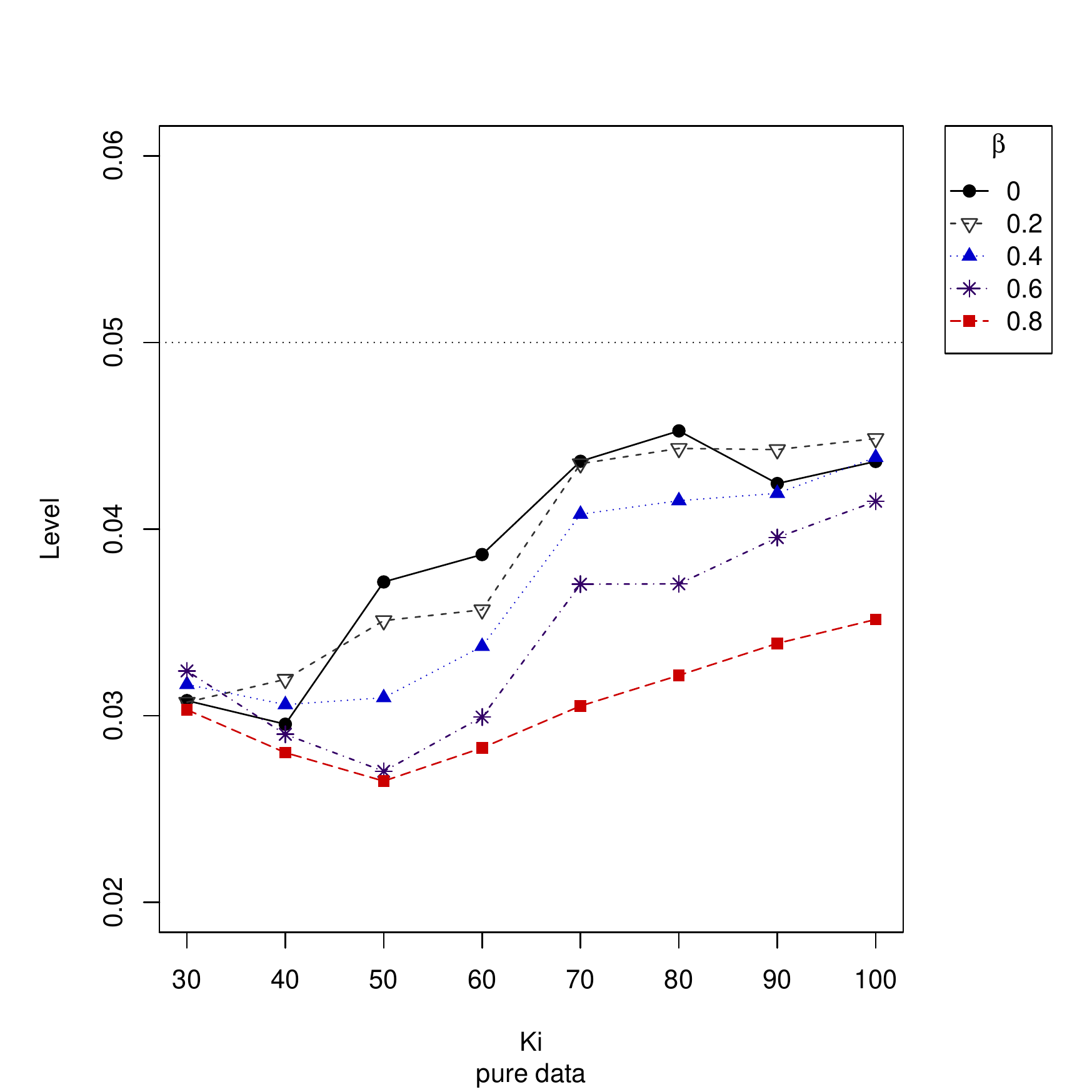} & %level_pure_inicial
\includegraphics[scale=0.4]{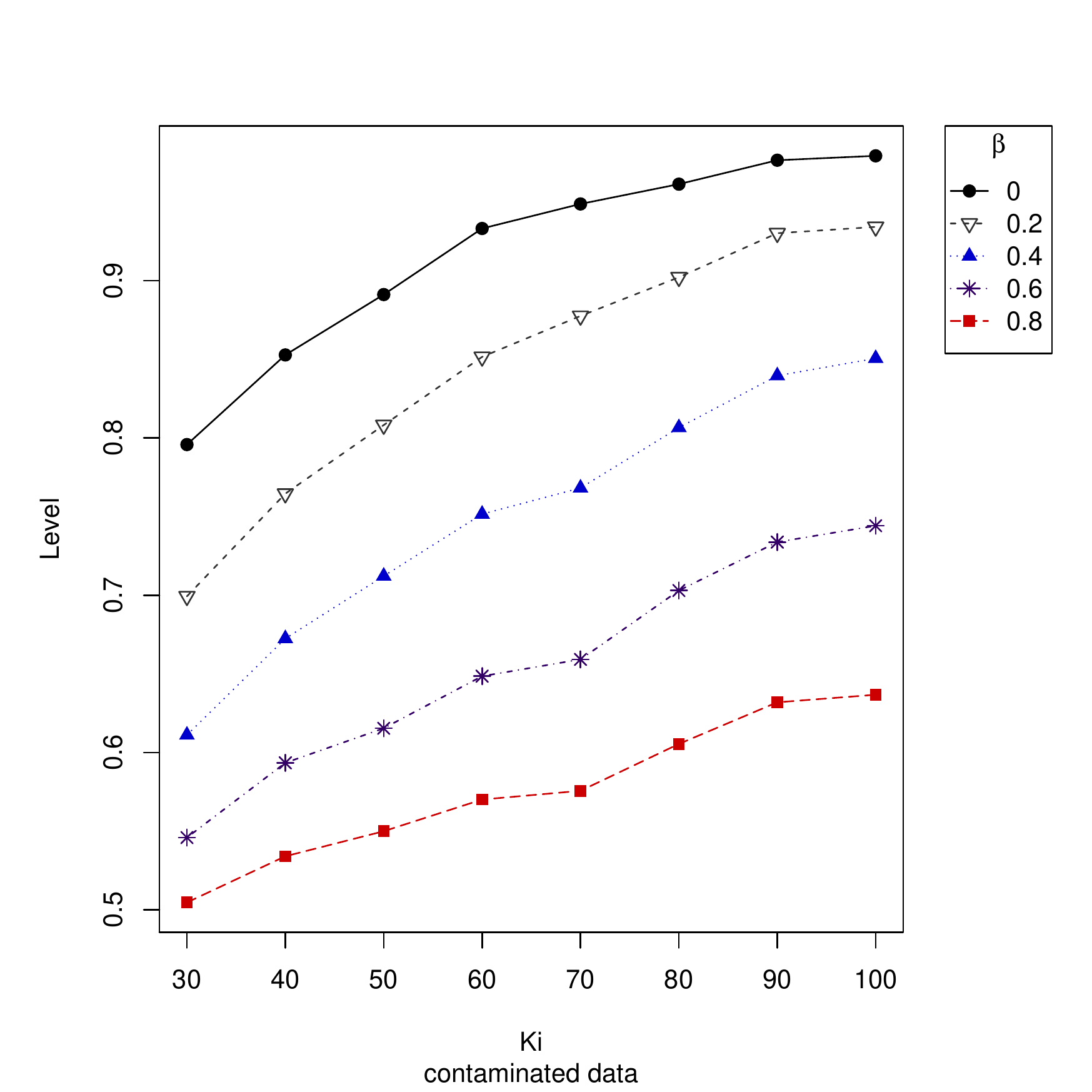} \\ %level_cont_inicial
\includegraphics[scale=0.4]{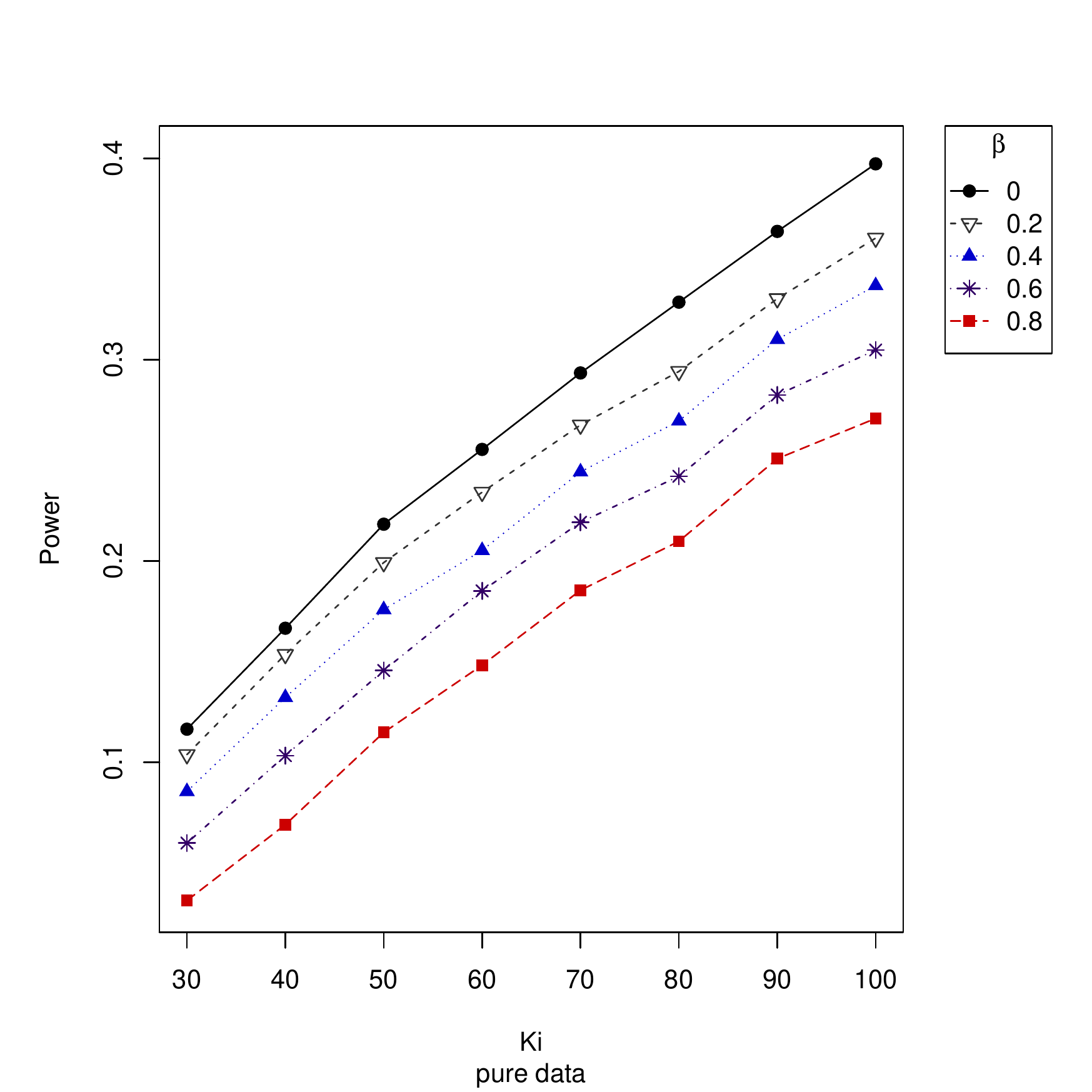} & %ower_pure_inicial
\includegraphics[scale=0.4]{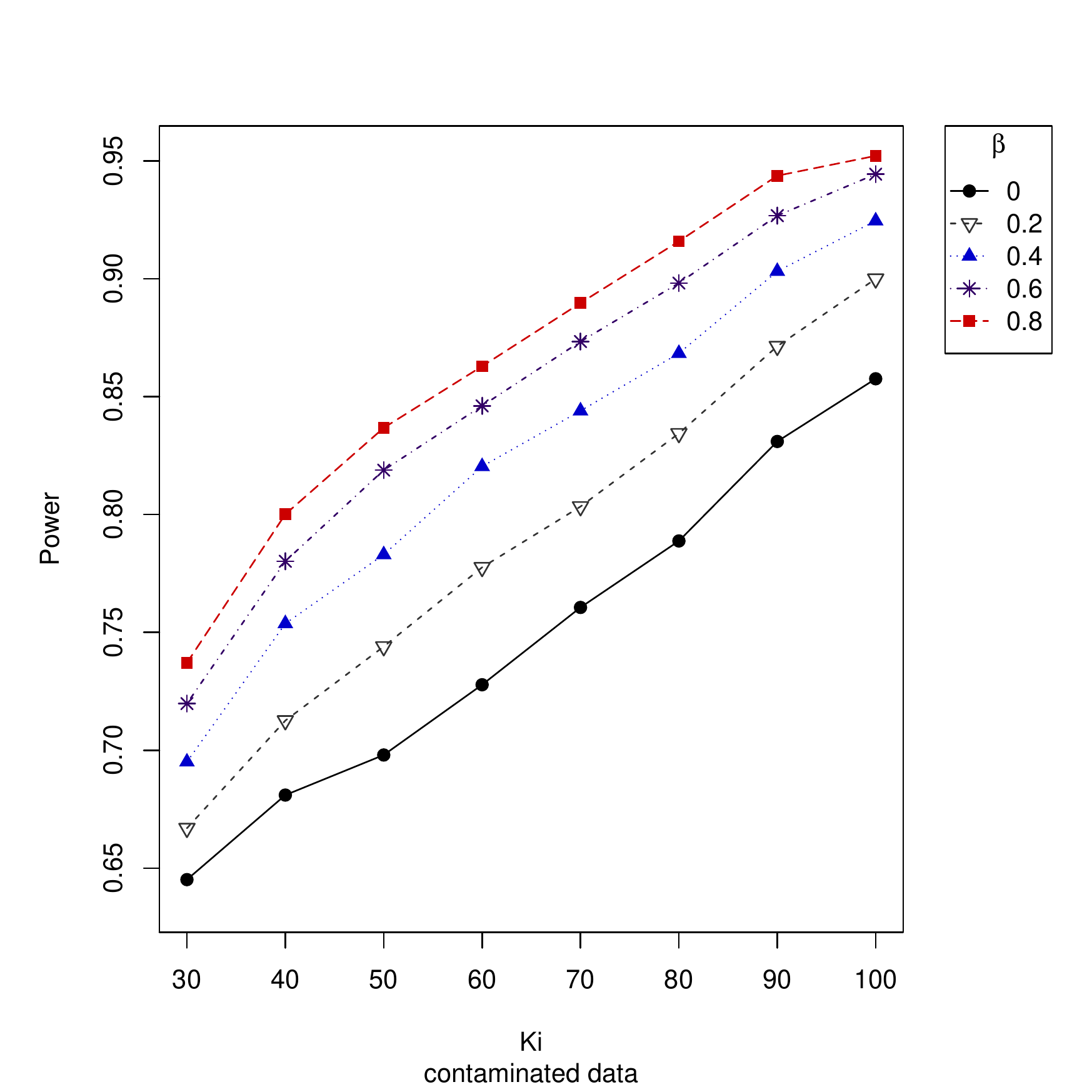} \\ %power_cont_inicial
\end{tabular}%
\caption{Levels and Powers of the weighted minimum density power divergence estimators-based Wald-type tests for different values of $K_i$ with pure (left) and contaminated data (right), for the case of moderate reliability.}
\label{fig:Wald}
\end{figure}

In the case of pure data,  we see how a big sample size is needed to obtain empirical tests close to the nominal level. In the case of contaminated data, empirical test levels are far away from the nominal level, with the MLE again presenting the least robust behaviour. 

This simulation study has illustrated well the robust properties of the weighted minimum density power divergence estimators for $\beta>0$, which is inevitably accompanied with a loss of efficiency in a the case of pure data.  It seems that a moderate low value of the tuning parameter can be a good choice when applying the estimators to a real data set. However, when dealing with specific data sets, especially when we have small data sets, a data driven procedure for the choice of tuning parameter will become necessary.

%\clearpage
\subsection{Choice of  tuning parameter \label{sec:sim_choice}}
The problem of choosing the optimal tuning parameter in a DPD-based family of estimators has been extensively discussed in the literature; see, for example, Hong and Kim (2001), Warwick (2001), Warwick and Jones (2005), and Ghosh and Basu (2015). We now adopt the procedure proposed by Warwick and Jones (2005), which consists  minimizing the estimated mean square error of the estimators,  computed as the sum of estimated squared bias and variance; that is,
\begin{align*}
\widehat{MSE}_{\beta}&=(\widehat{\boldsymbol{\theta}}_{\beta}-\boldsymbol{\theta}_P)^T(\widehat{\boldsymbol{\theta}}_{\beta}-\boldsymbol{\theta}_P)  +\frac{1}{K}\text{trace}\left[  \boldsymbol{{J}}_{\beta}^{-1}(\widehat{\boldsymbol{\theta}}_{\beta})\boldsymbol{{K}}_{\beta}(\widehat{\boldsymbol{\theta}}_{\beta})\boldsymbol{{J}}_{\beta}^{-1}(\widehat{\boldsymbol{\theta}}_{\beta})  \right],
\end{align*}
where $\boldsymbol{\theta}_P$ is a pilot estimator, whose choice will be empirically discussed, since the overall procedure depends on this choice. If we take $\boldsymbol{\theta}_P=\widehat{\boldsymbol{\theta}}_{\beta}$, the approach  coincides with that of Hong and Kim (2001), but it does not take into account the model misspecification.

\begin{figure}[h!]
\center
\begin{tabular}{ll}
\includegraphics[scale=0.4]{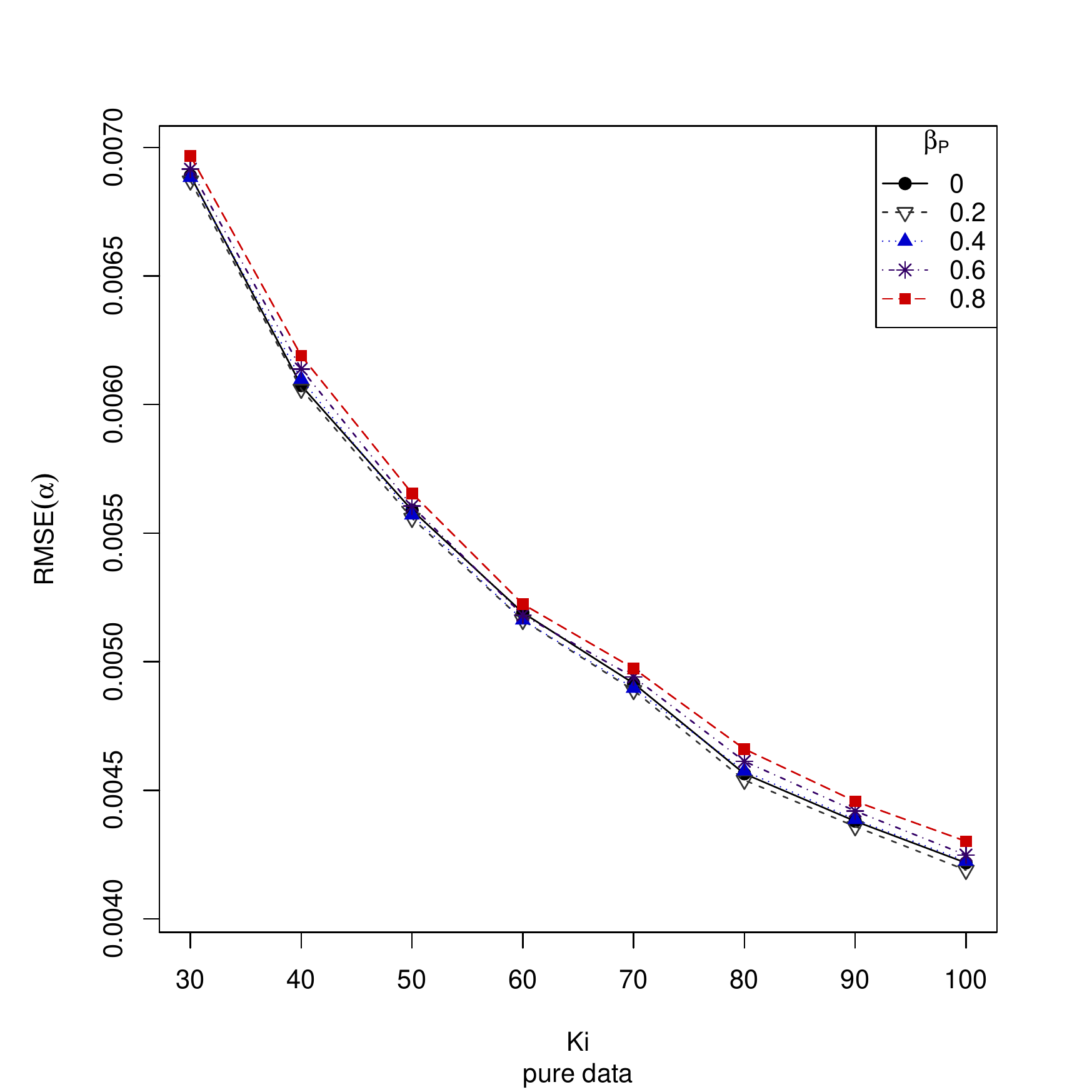} & %level_pure_inicial
\includegraphics[scale=0.4]{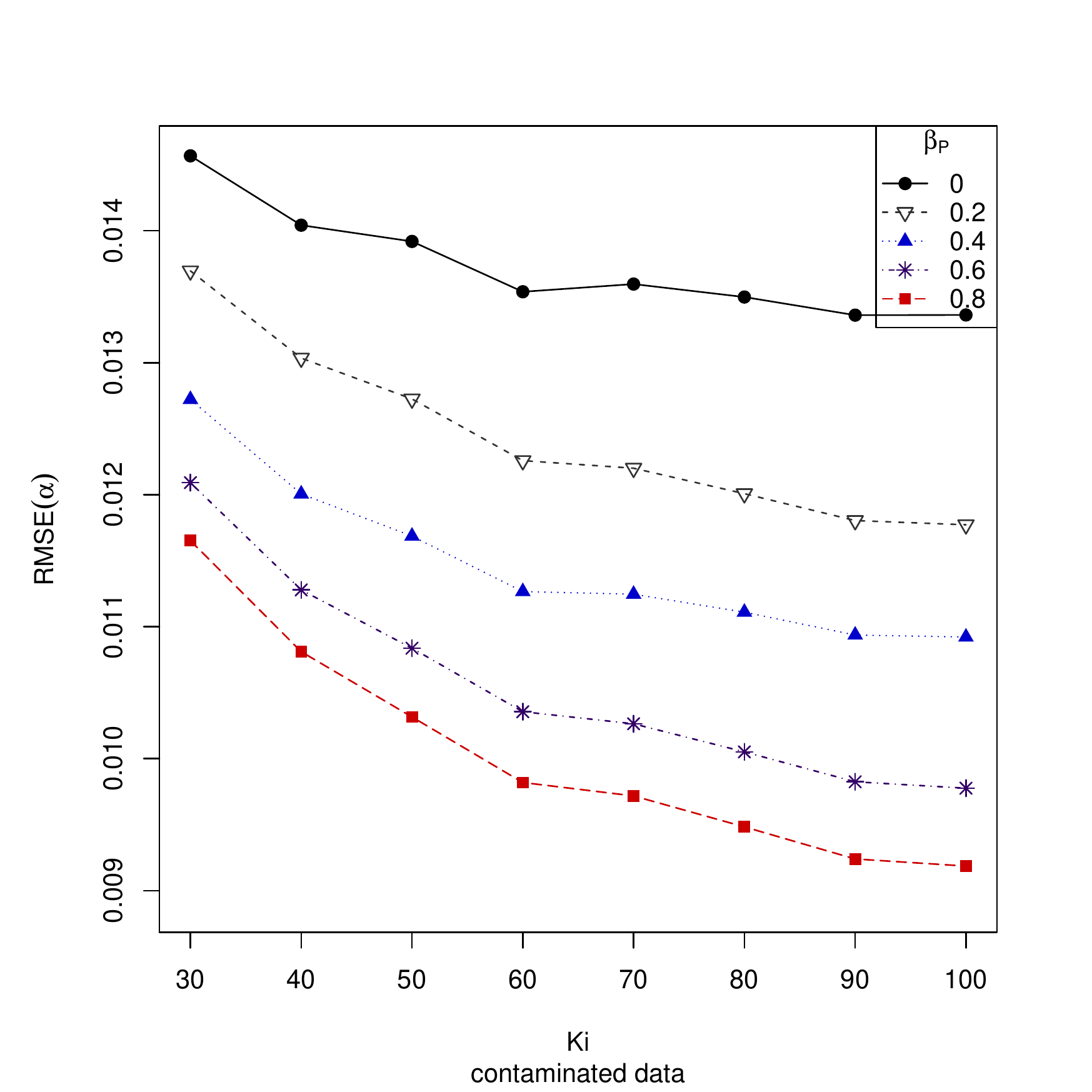} \\ %level_cont_inicial
\includegraphics[scale=0.4]{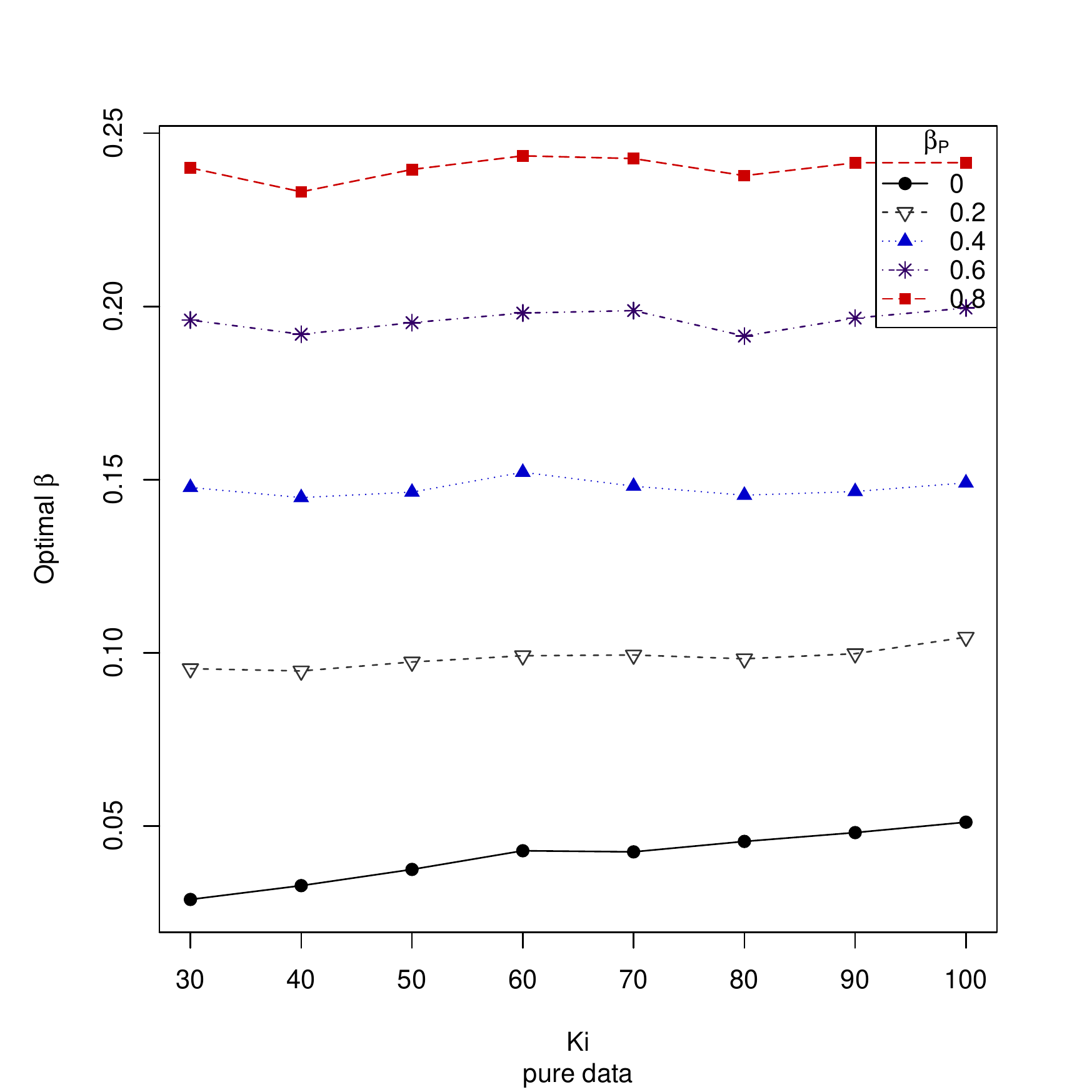} & %ower_pure_inicial
\includegraphics[scale=0.4]{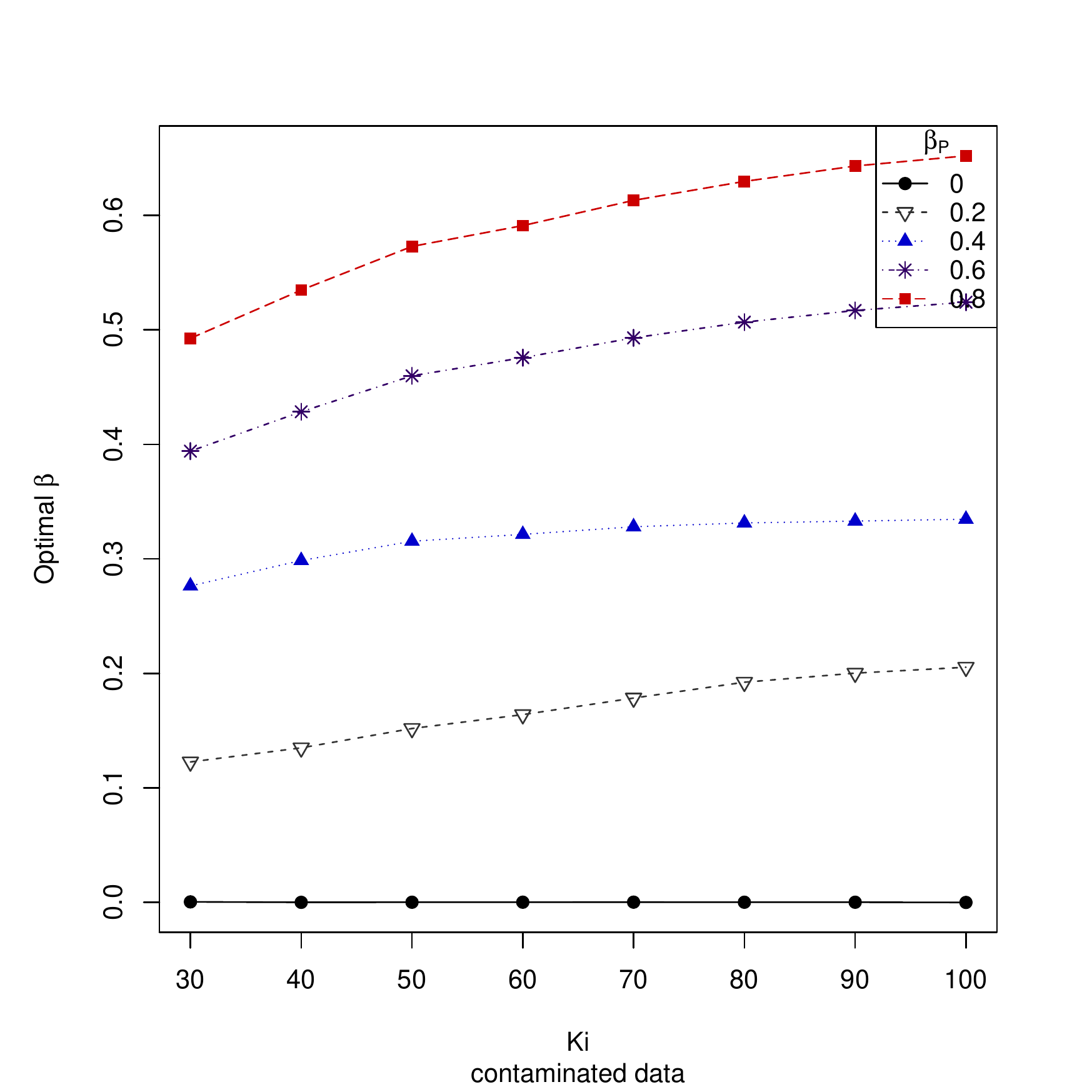} \\ %power_cont_inicial
\end{tabular}%
\caption{Estimated  optimal $\beta$ and the corresponding RMSEs for different pilot estimators in the proposed ad-hoc approach for the case of moderate reliabiulity.}
\label{fig:choice}
\end{figure}

We consider again the balanced scenario under moderate reliability discussed earlier.  For different pilot estimators and a grid of $100$ points, optimal tuning parameters and their corresponding RMSEs are computed.  The optimal tuning parameter increases when the contamination level increases in the data, and it seems that a moderate value of $\beta$ is the best choice for the pilot estimator, as suggested in the work of Warwick and Jones (2005). 

\begin{figure}[h!]
\centering
\begin{tabular}{cc}
\includegraphics[scale=0.41]{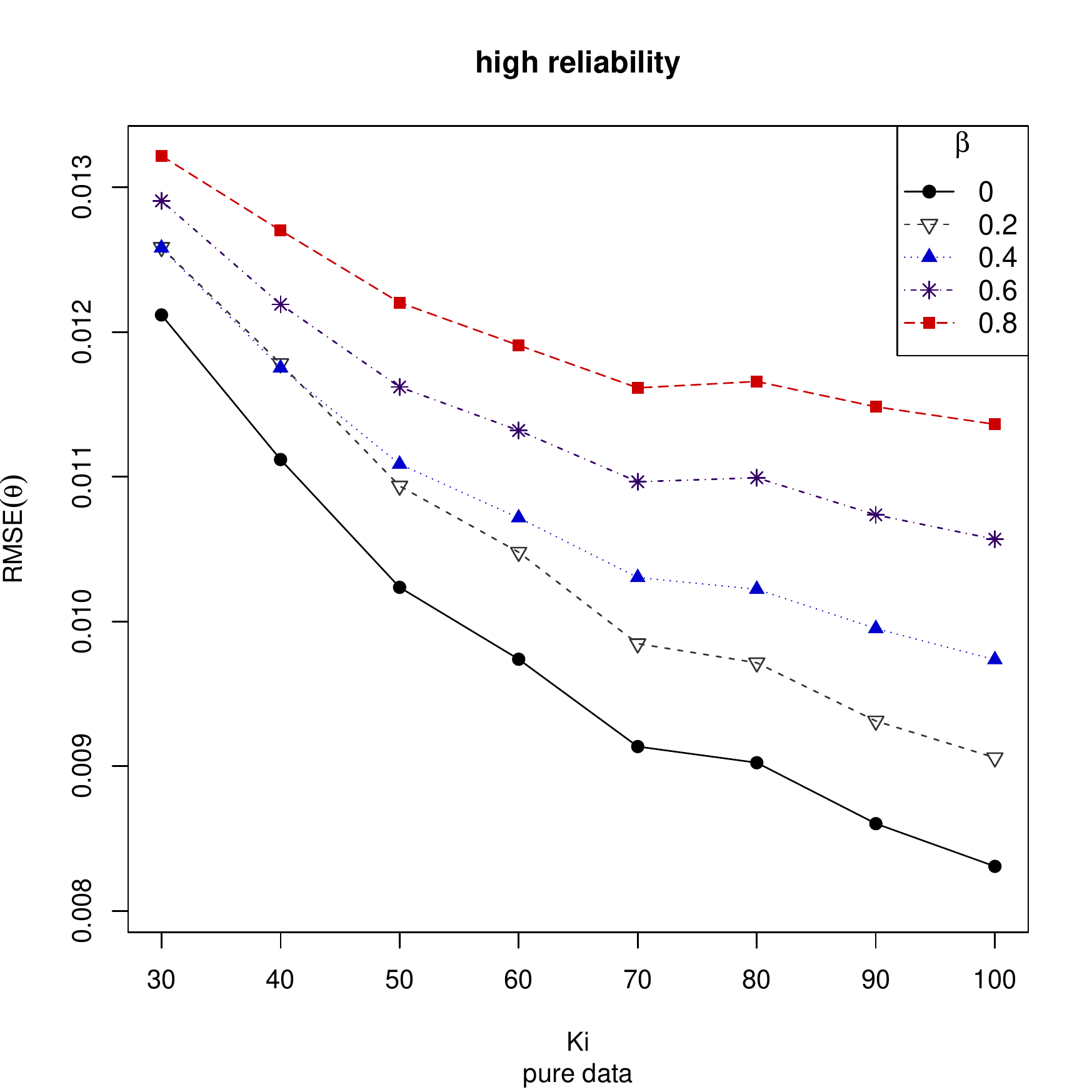} & 
\includegraphics[scale=0.41]{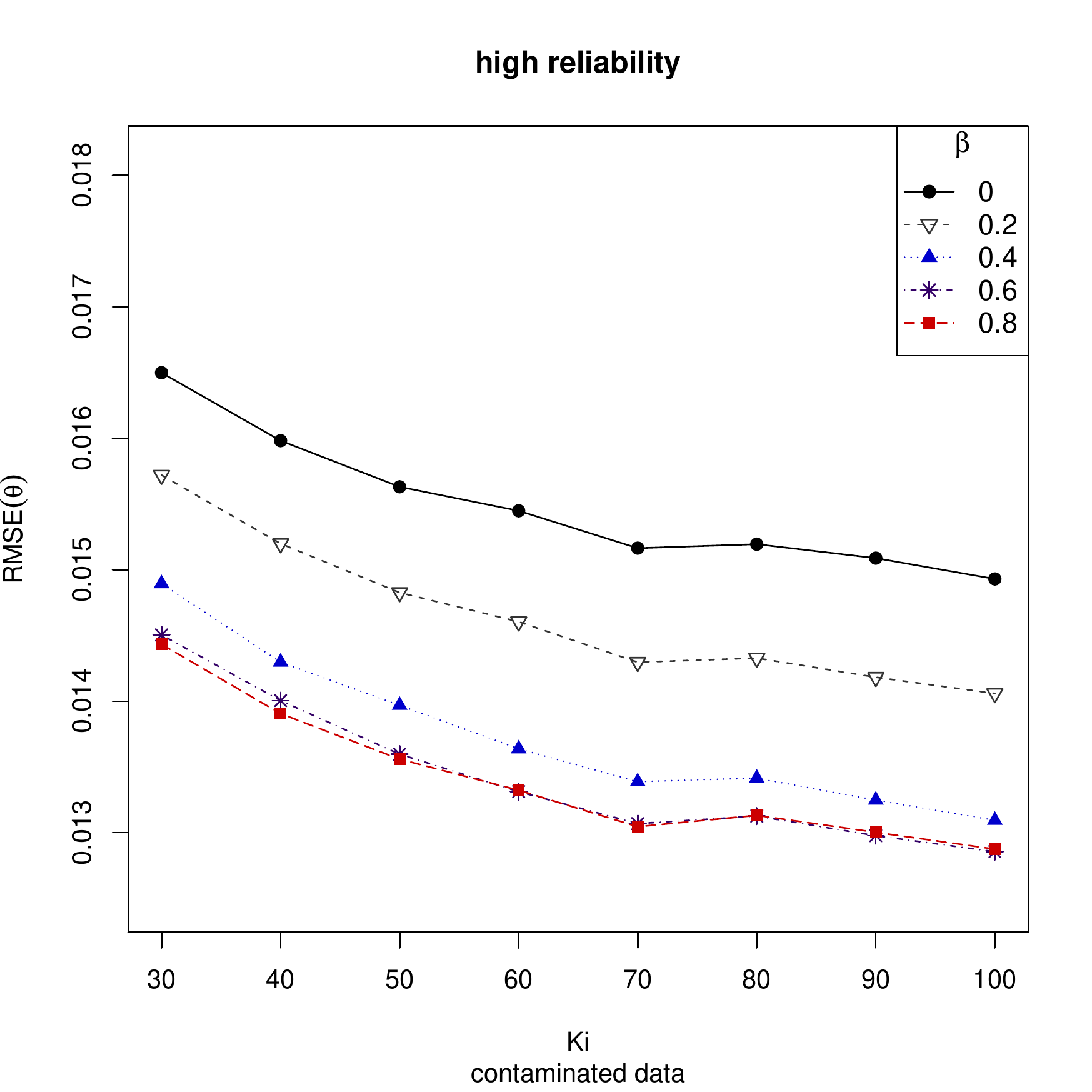} \\ 
\includegraphics[scale=0.41]{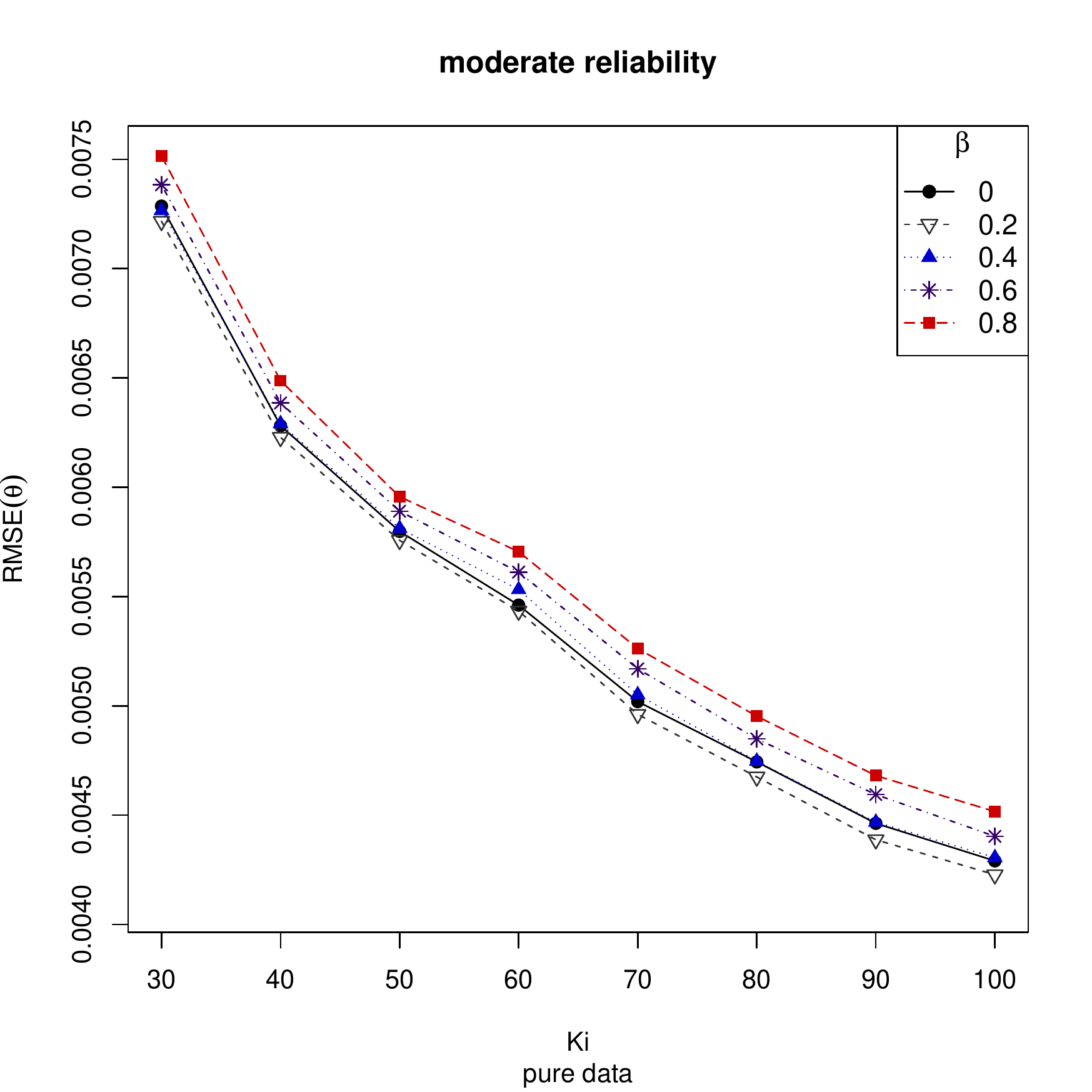} & 
\includegraphics[scale=0.41]{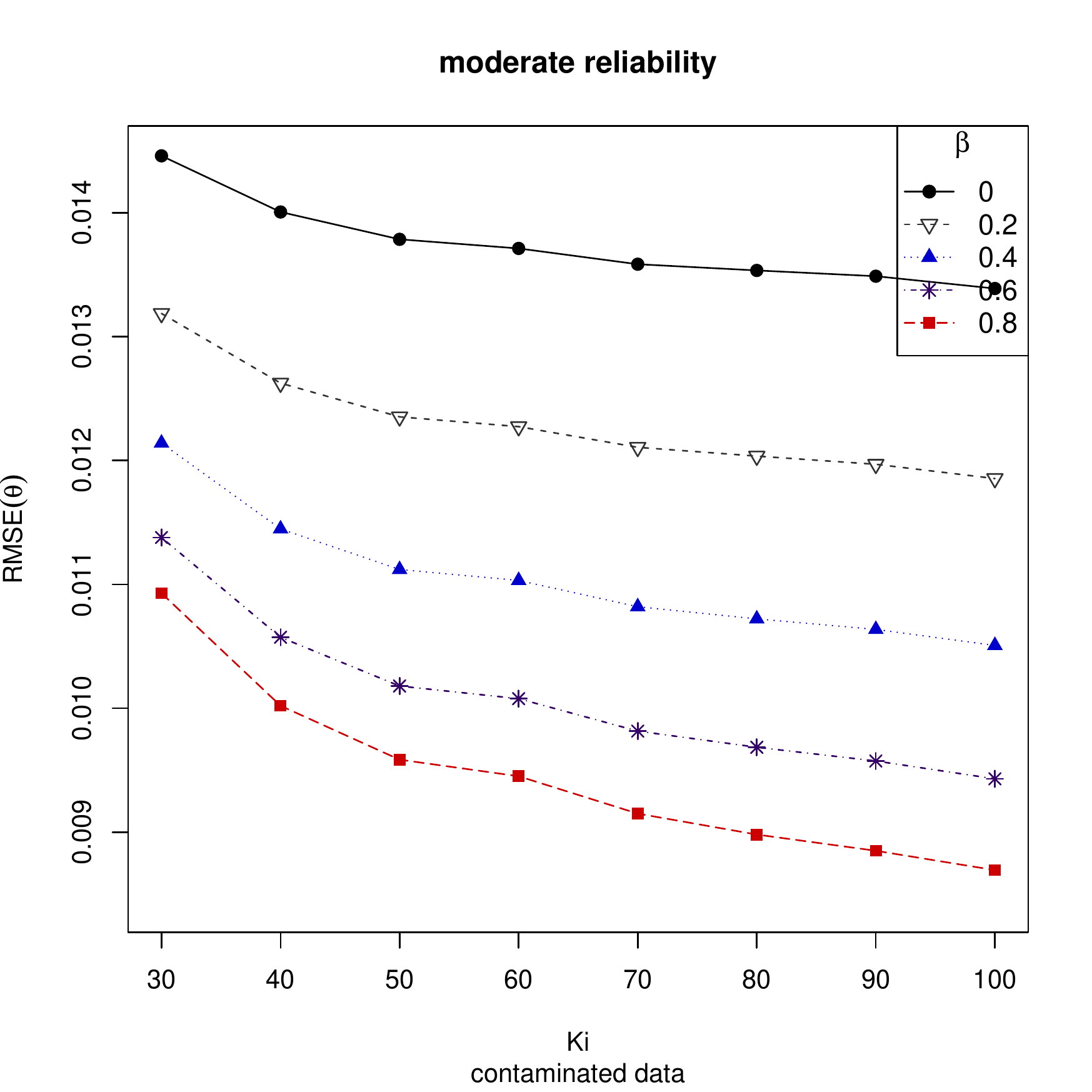} \\ 
\includegraphics[scale=0.41]{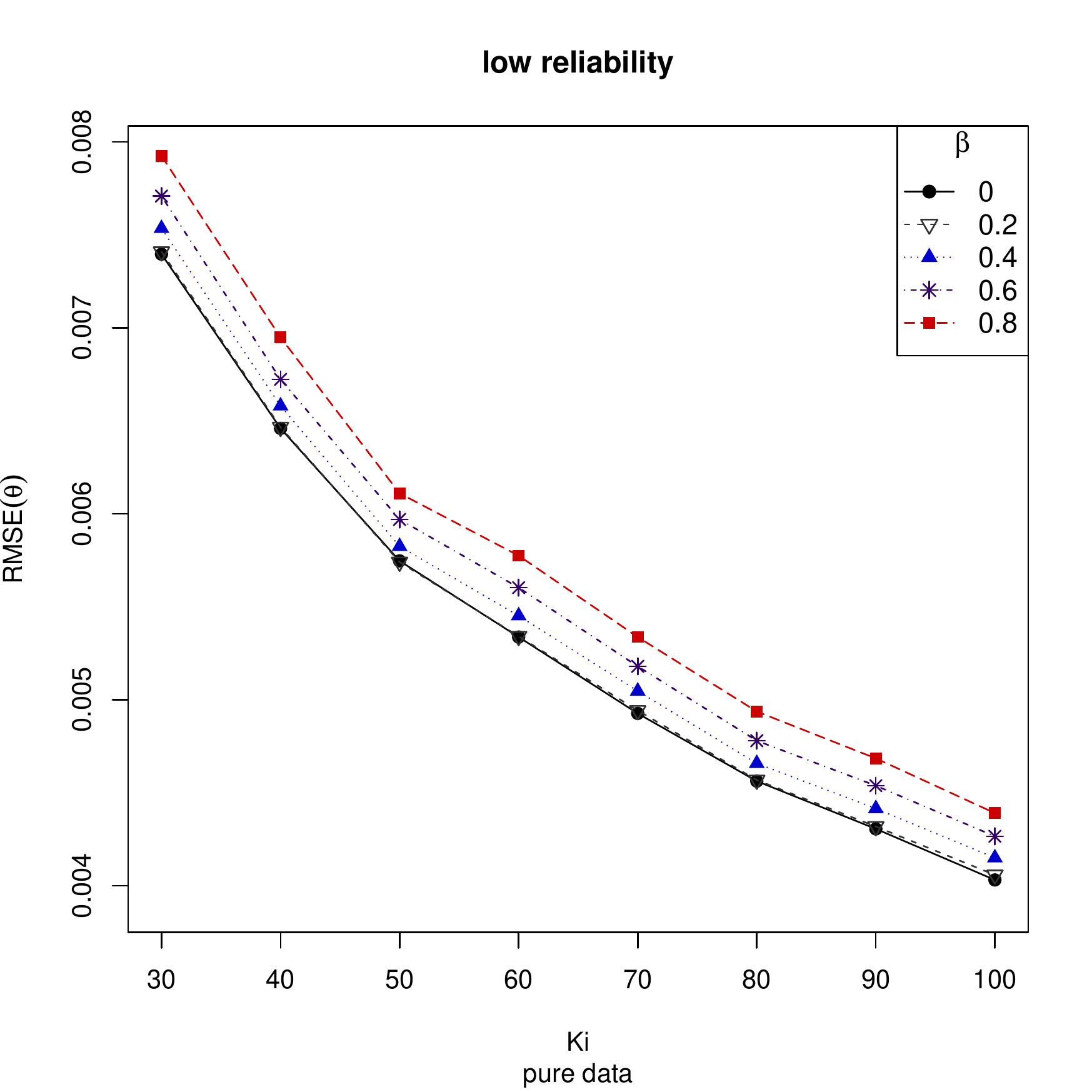} & 
\includegraphics[scale=0.41]{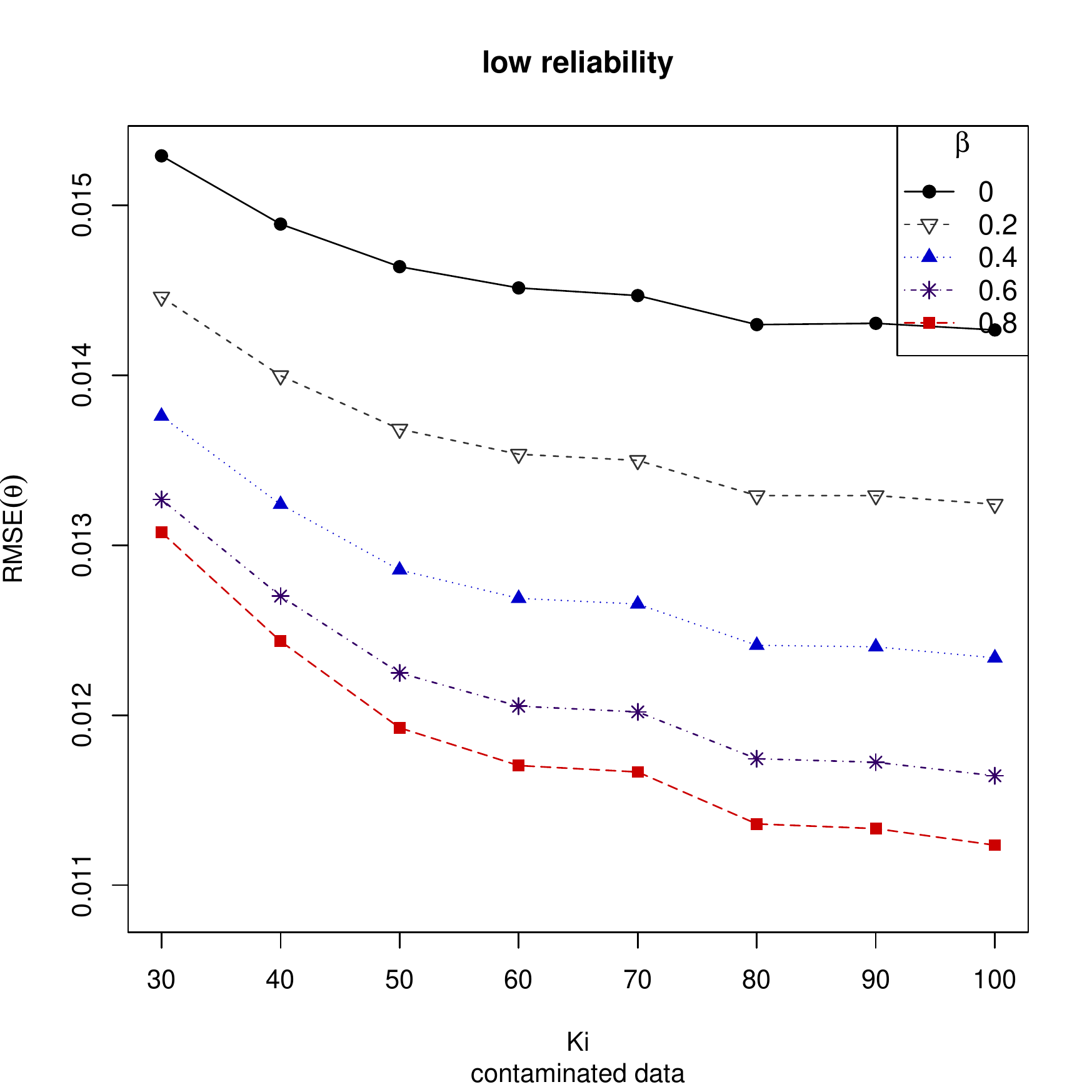}%
\end{tabular}%
\caption{RMSEs  of the weighted minimum density power divergence estimators of $\boldsymbol{\theta }$ for different values of reliability with pure (left) and contaminated data (right)}
\label{fig:RMSE}
\end{figure}

\begin{figure}[h!]
\centering
\begin{tabular}{cc}
\includegraphics[scale=0.41]{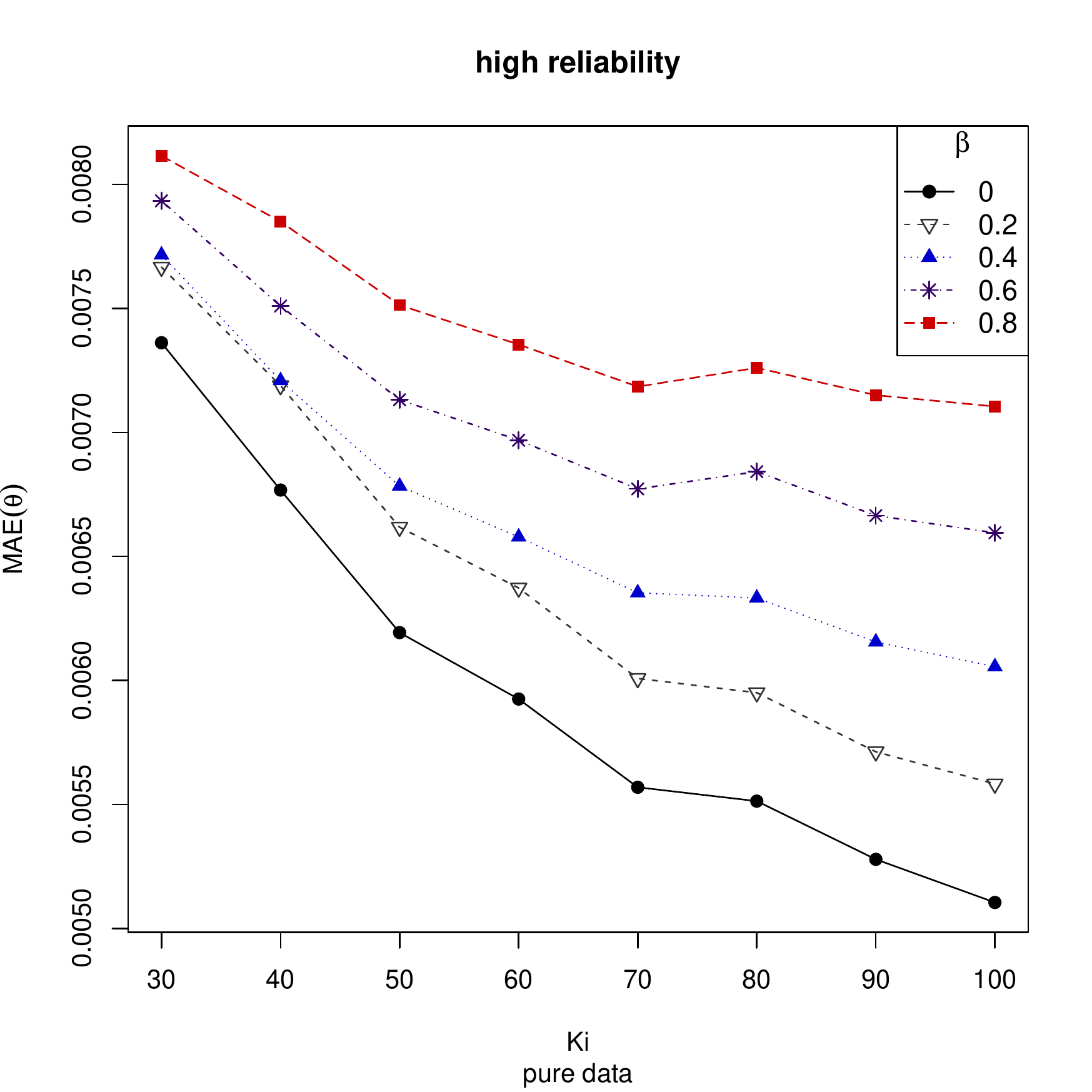} & 
\includegraphics[scale=0.41]{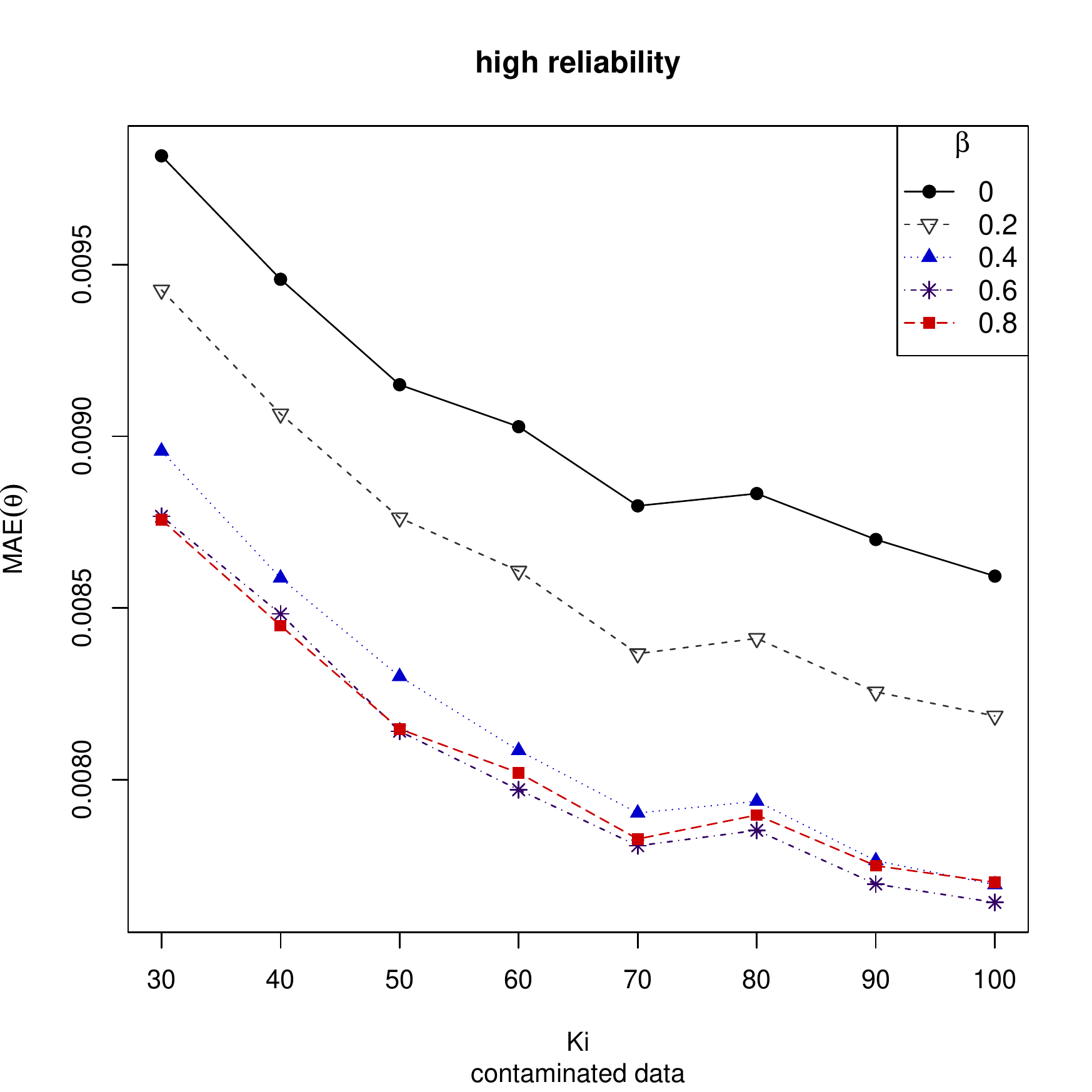} \\ 
\includegraphics[scale=0.41]{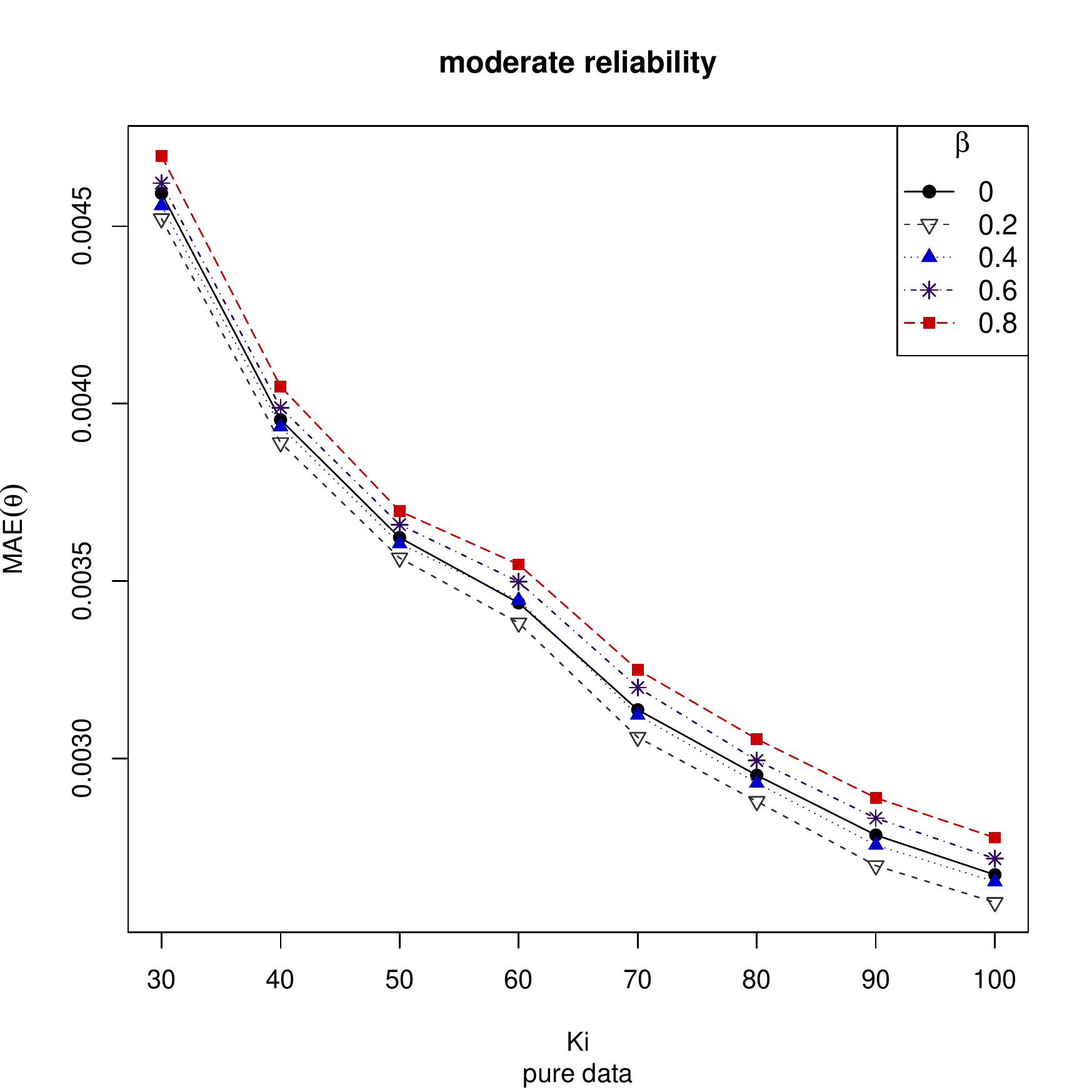} & 
\includegraphics[scale=0.41]{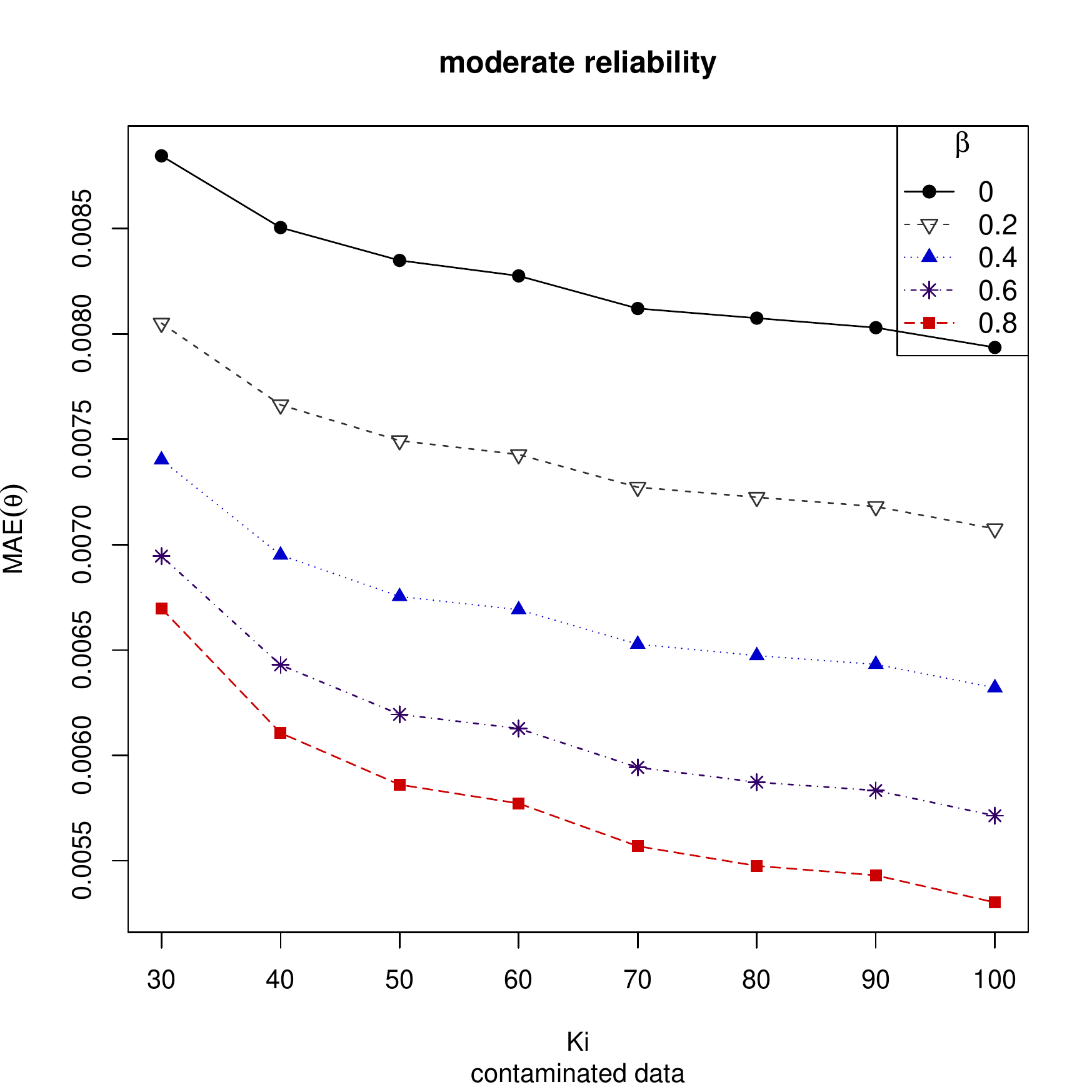} \\ 
\includegraphics[scale=0.41]{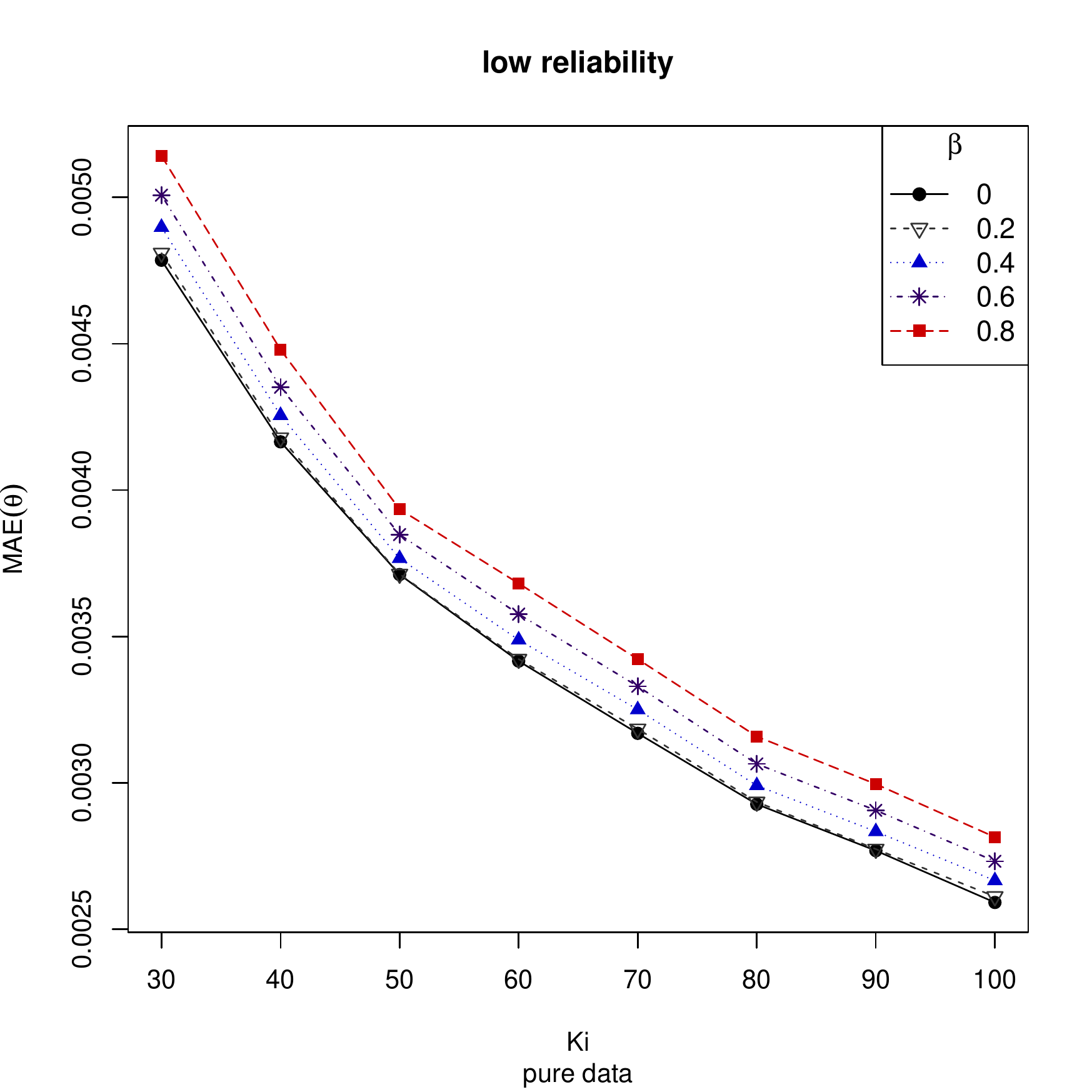} & 
\includegraphics[scale=0.41]{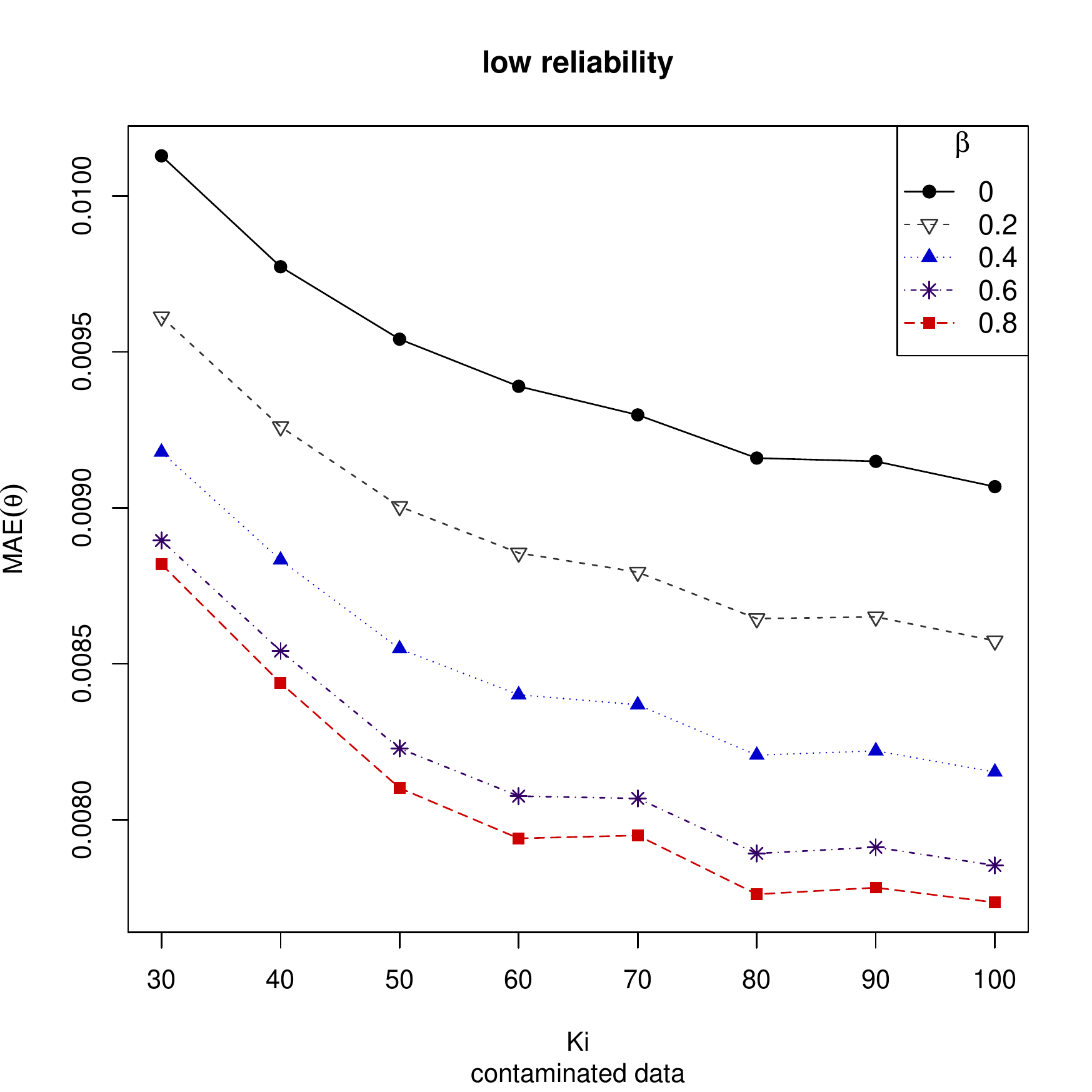}%
\end{tabular}%
\caption{MAEs  of the weighted minimum density power divergence estimators of $\boldsymbol{\theta }$ for different values of reliability with pure (left) and contaminated data (right).}
\label{fig:MAE}
\end{figure}

\begin{figure}[h!]
\centering
\begin{tabular}{cc}
\includegraphics[scale=0.41]{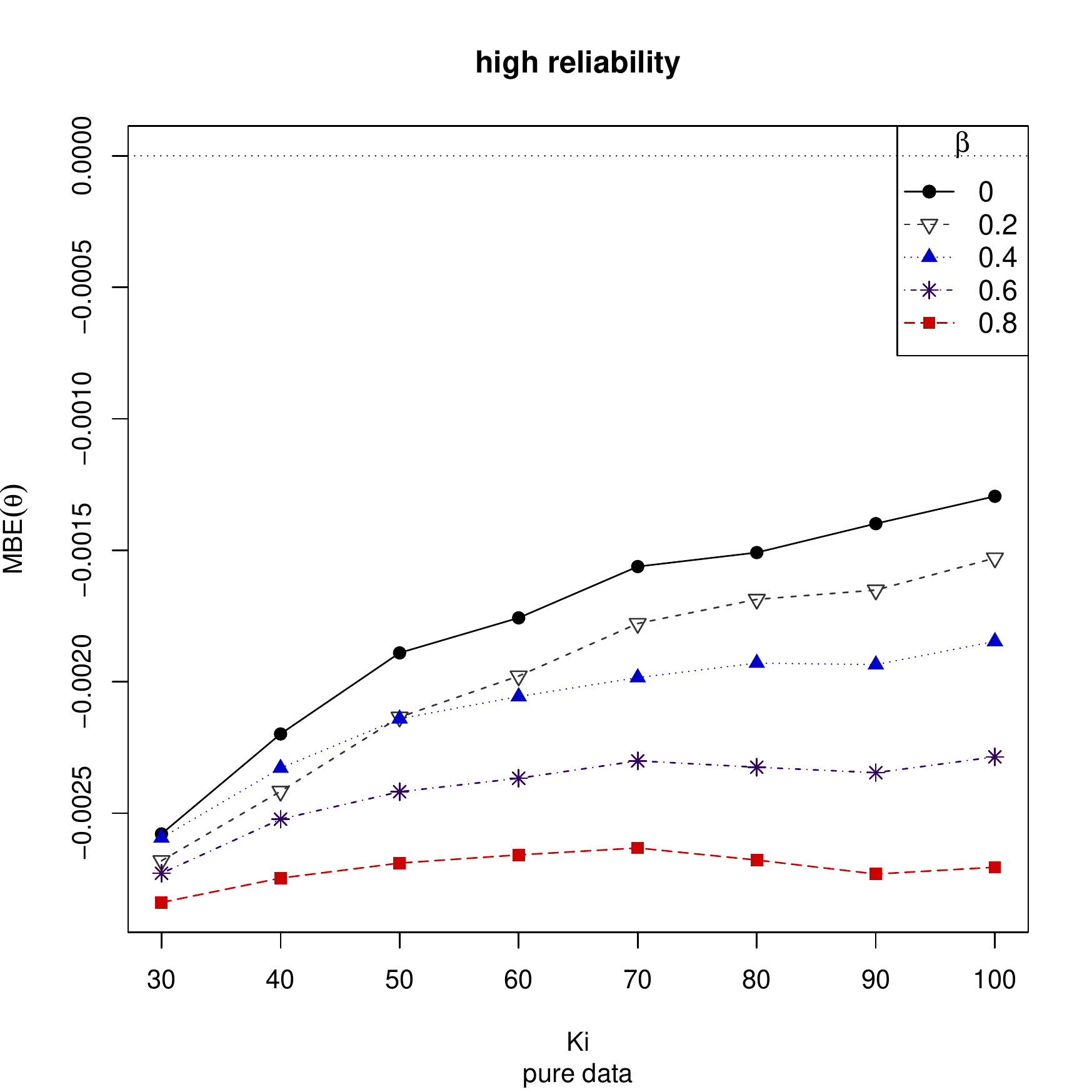} & 
\includegraphics[scale=0.41]{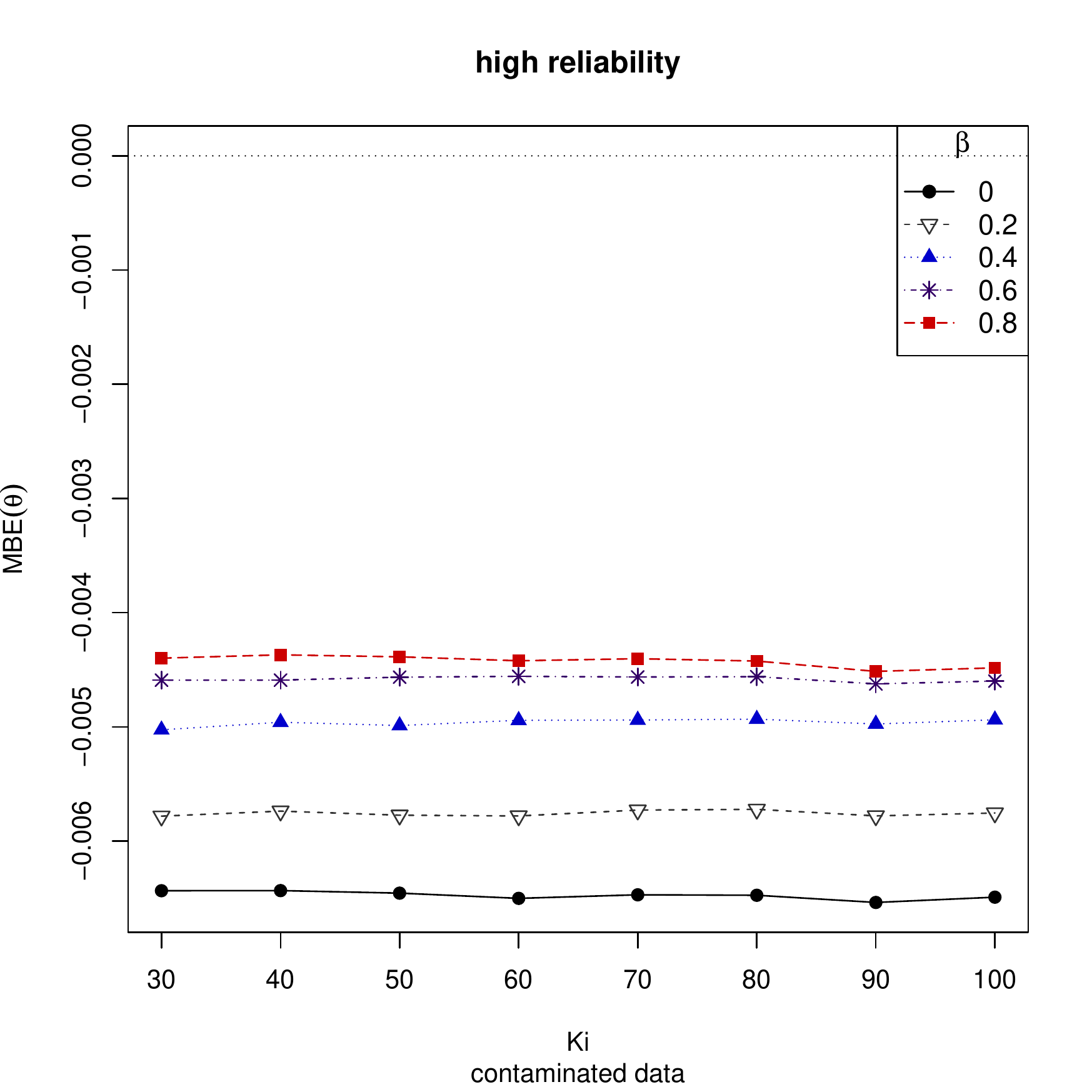} \\ 
\includegraphics[scale=0.41]{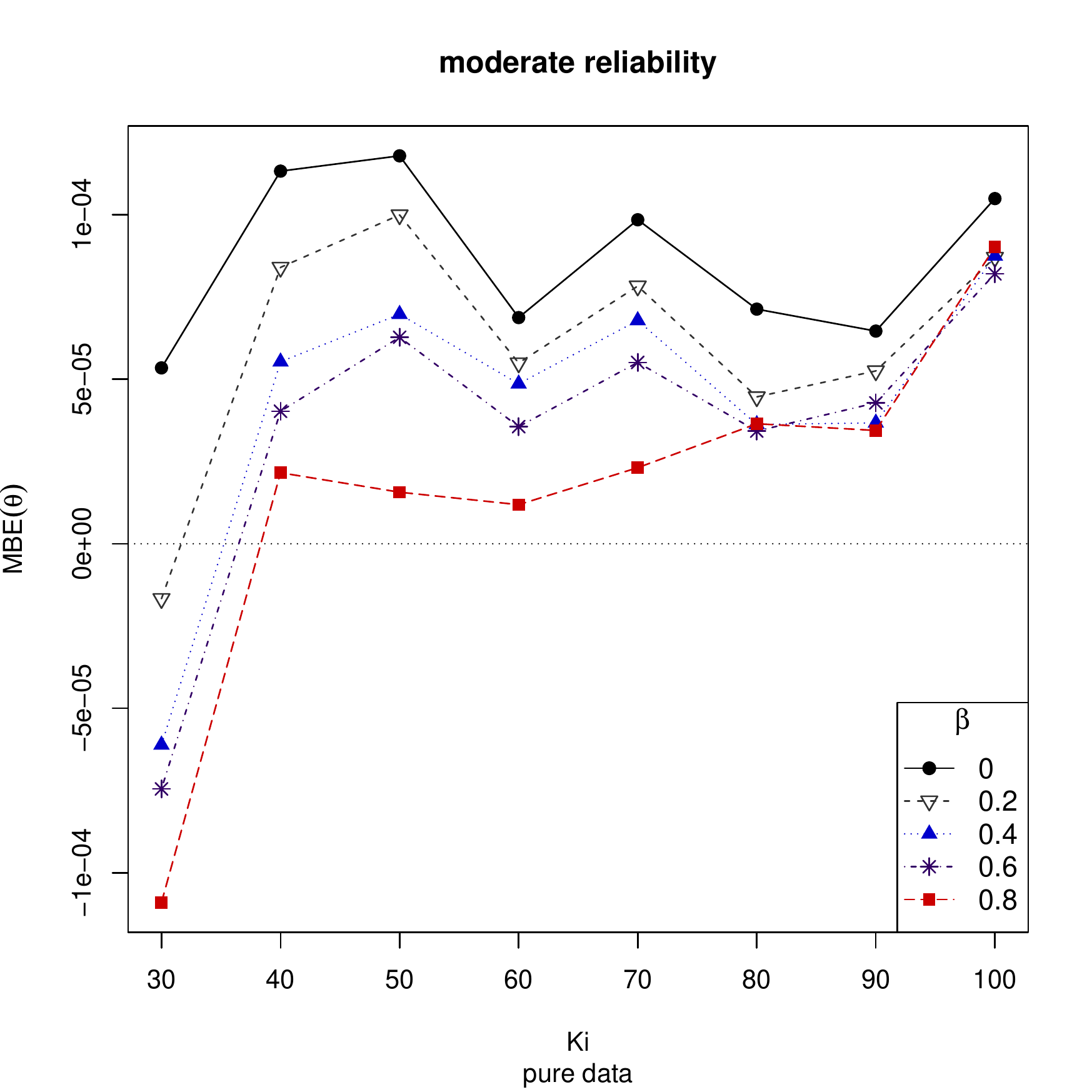} & 
\includegraphics[scale=0.41]{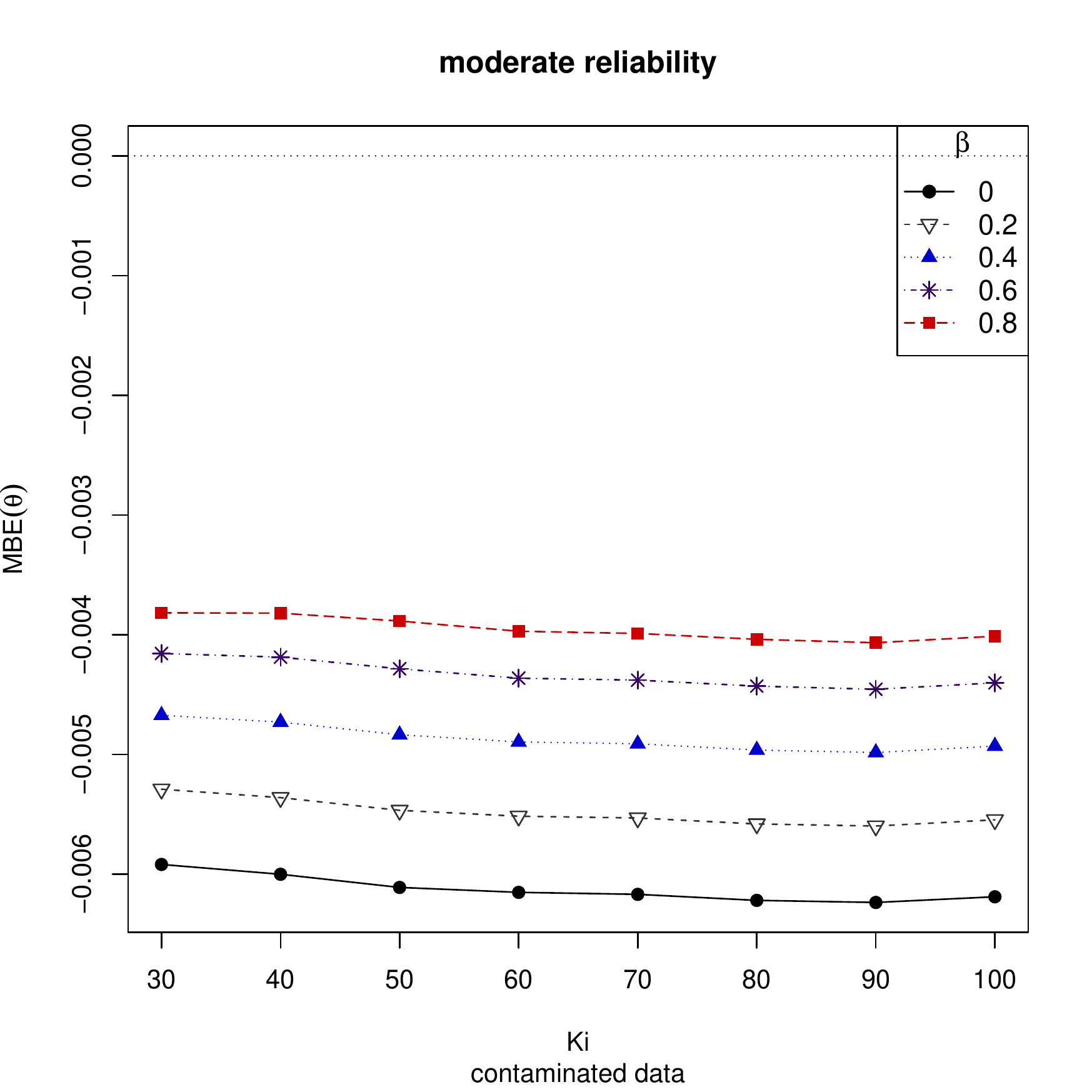} \\ 
\includegraphics[scale=0.41]{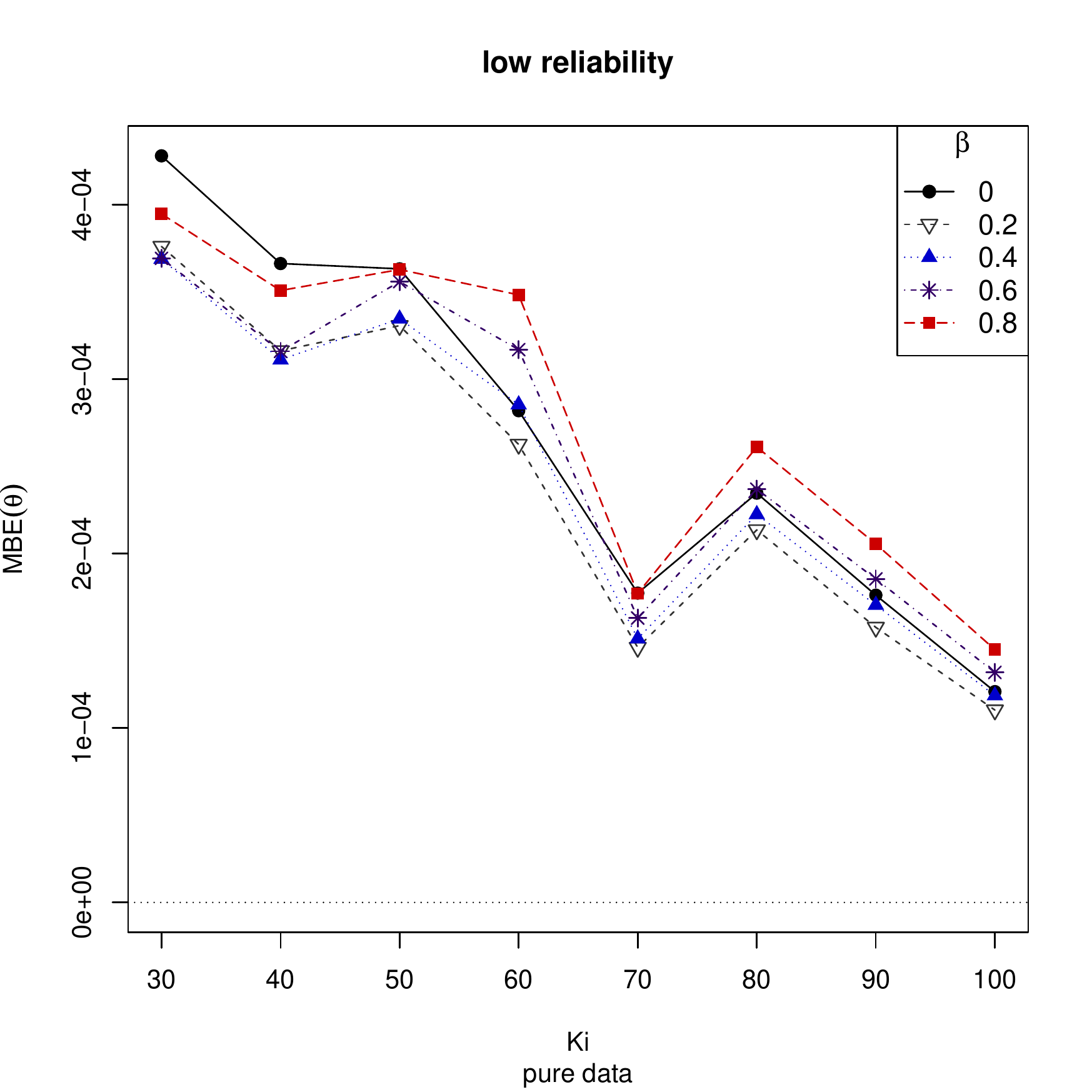} & 
\includegraphics[scale=0.41]{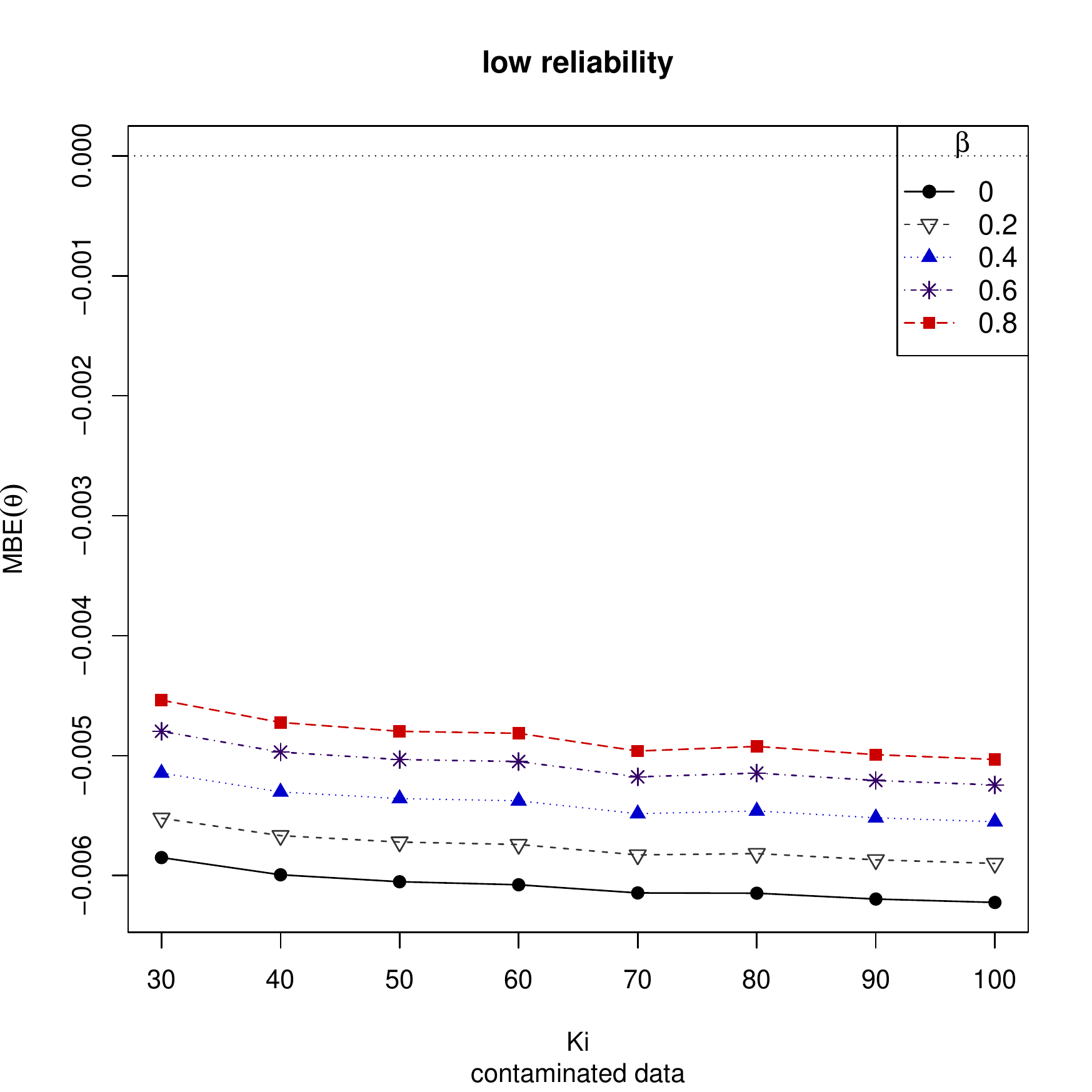}%
\end{tabular}%
\caption{MBEs  of the weighted minimum density power divergence estimators of $\boldsymbol{\theta }$ for different values of reliability with pure (left) and contaminated data (right).}
\label{fig:MBE}
\end{figure}

\clearpage
%\section{Numerical Examples}
\section{Benzidine dihydrochloride (BDC) experiment \label{sec:num}}
%\subsection{The benzidine dihydrochloride (BDC) experiment}
Let us reconsider our motivating example, the BDC experiment, to study the performance of the proposed  procedures.  As noted in Section \ref{sec:model}, this experiment considers two different  doses of drug  induced in the mice: 60 parts per million ($\omega=1$) and 400 parts per million ($\omega=2$) and two causes of death are recorded: died without tumor ($\delta_{ijk}=1$) and died with tumor ($\delta_{ijk}=2$). The data are presented in Table \ref{table:numdata}.

\begin{table}[h!]
\caption{Estimations for the BDC experiment for different hoices of tuning parameters\label{table:numdata2}} \setlength{\tabcolsep}{5pt}
\small
\begin{tabular}{l rrrr rr rrr r}
\hline
$\beta$&$\theta_{10}$ & $\theta_{11}$ & $\theta_{20}$ & $\theta_{21}$ & $E_{w=1}^1$  & $E_{w=2}^1$    & $E_{w=1}$     & $E_{w=2}$     & $P_{w=1}^1$   & $P_{w=2}^1$  \\  \hline
0  & 0.00089   & 1.3191   & 0.00028   & 2.493 & 300.545 & 80.355  & 150.203 & 18.952 & 0.4997 & 0.2358 \\
0.1       & 0.00091   & 1.3072   & 0.00029   & 2.465 & 297.876 & 80.593  & 146.984 & 18.872 & 0.4934 & 0.2341 \\
0.2       & 0.00094   & 1.2844   & 0.00031   & 2.441 & 295.010 & 81.658  & 144.138 & 18.869 & 0.4885 & 0.2310 \\
0.3       & 0.00097   & 1.2627   & 0.00033   & 2.408 & 291.902 & 82.572  & 140.528 & 18.818 & 0.4814 & 0.2279 \\
0.4       & 0.00281   & 0.5329   & 0.00027   & 2.531 & 208.917 & 122.608 & 122.893 & 19.891 & 0.5882 & 0.1622 \\
0.5       & 0.00104   & 1.2150   & 0.00036   & 2.367 & 285.233 & 84.626  & 135.755 & 18.859 & 0.4759 & 0.2228 \\
0.6       & 0.00285   & 0.5253   & 0.00028   & 2.511 & 207.847 & 122.908 & 121.491 & 19.884 & 0.5845 & 0.1617 \\
0.7       & 0.00282   & 0.5277   & 0.00028   & 2.503 & 209.051 & 123.322 & 121.277 & 19.824 & 0.5801 & 0.1607 \\
0.8       & 0.00112   & 1.1412   & 0.00041   & 2.313 & 284.037 & 90.723  & 130.889 & 18.988 & 0.4608 & 0.2093 \\
0.9       & 0.00271   & 0.5458   & 0.00029   & 2.496 & 213.458 & 123.669 & 122.077 & 19.741 & 0.5719 & 0.1596 \\
1         & 0.00263   & 0.5514   & 0.00030   & 2.488 & 219.303 & 126.339 & 123.241 & 19.715 & 0.5619 & 0.1560 \\ \hline
0.37      & 0.00279   & 0.5378   & 0.00026   & 2.537 & 209.275 & 122.221 & 123.529 & 19.946 & 0.5902 & 0.1632 \\ \hline
\end{tabular}
\end{table}

Estimators of parameters were obtained for different choices of tuning parameters. We then  computed the expected mean lifetime of the devices under the two doses of drug, both for the whole population ($E_{\omega=1}$ and $E_{\omega=2}$) and particularly for the mice that died without tumor ($E_{\omega=1}^1$ and $E_{\omega=2}^1$). We have also computed the  probability of failure due to cause $1$ (die without tumor) given failure, for both doses of drug ($P_{\omega=1}^1$ and $P_{\omega=2}^1$). 

We applied the procedure described in Section \ref{sec:sim_choice} to determine  the optimal tuning parameter for this data set, over a grid of $100$ points. The resulting optimal tuning parameter, $0.37$, and its corresponding estimators are presented in Table \ref{table:numdata2}.

Finally, we estimate the errors, as given in (\ref{eq:estimated_error}), for different tuning parameters $\beta$, and the corresponding results in Table \ref{table:BDCError}. The minimum is obtained for $\beta=0.8$, while $\beta=0.37$ also presents a lower estimated error, which is in concordance with the estimate obtained earlier.
%In any case, no great differences among estimators are found in this example.

\begin{table}[h]
\caption{Estimated errors for the BDC experiment \label{table:BDCError}}\setlength{\tabcolsep}{5pt}
\centering
\small 
\begin{tabular}{cccccccccccc}
\hline 
$\beta$ & $0$ & $0.1$ & $0.2$ & $0.3$ & $0.37$ & $0.4$ & $0.6$ & $0.7$ & $0.8$ & $0.9$ \\ 
\hline 
est. error & 0.1051 & 0.1049 & 0.1047 & 0.1044 &  0.1043 & 0.1051  & 0.1052 & 0.1050 & 0.1040 & 0.1048 \\ 
\hline 
\end{tabular} 
\end{table}

%\subsection{The ED01 Data}

\section{Concluding Remarks and Future Work \label{sec:conclusion}}
In this paper, a robust divergence-based approach has been developed for the evaluation of one-shot devices with competing causes of failure, under the exponential distribution. The performance of the estimators as well as tests procedures based on them have been compared, through a simulation study and a numerical example, with these based on the classical maximum likelihood estimator.

For further study, we can consider developing results for other lifetime distributions, such as Weibull and gamma. While the exponential distribution has constant hazard rate, Weibull and gamma lifetime distributions presents a non-constant hazard and practically useful aging properties and may therefore provide a more practical model,  even though it will result in a much more complicated analysis. We are currently working on this problem and hope to report the findings in a future paper.
%\clearpage

\bigskip

\noindent \textbf{Acknowledgments} This research was partially supported by Grant PGC2018-095194-B-I00 and Grant FPU16/03104 from Ministerio de Ciencia, Innovacion y Universidades (Spain). E. Castilla, N. Martin and L. Pardo are members of the Instituto de Matematica Interdisciplinar, Complutense University of Madrid.\\

%\clearpage
\appendix
\section{Appendix \label{sec:app}}
\subsection{Proofs of Results}
\subsubsection{Proof of Theorem \ref{res:dkull}}
\begin{proof}
We have
\begin{align*}
\sum_{i=1}^{I}\frac{K_{i}}{K}d_{KL}(\widehat{\boldsymbol{p}}_{i},\boldsymbol{\pi}_{i}(\boldsymbol{\theta}))&=\frac{1}{K}\sum_{i=1}^{I}\left\{ {n_{i0}}\log \left( \dfrac{n_{i0}/K_i}{\pi _{i0}(\boldsymbol{\theta})}\right) +{n_{i1}}\log \left( \dfrac{n_{i1}/K_i}{\pi _{i1}(\boldsymbol{\theta})}\right) +{n_{i2}}\log \left( \dfrac{n_{i2}/K_i}{\pi _{i2}(\boldsymbol{\theta})}\right)\right\} \\
&=\frac{1}{K}\sum_{i=1}^{I}\left\{ {n_{i0}}\log \left( \dfrac{n_{i0}}{K_i}\right) +{n_{i1}}\log \left( \dfrac{n_{i1}}{K_i}\right) +{n_{i2}}\log \left( \dfrac{n_{i2}}{K_i}\right)\right\} \\
&-\frac{1}{K}\sum_{i=1}^{I}\left\{ {n_{i0}}\log \left( \pi_{i0}(\boldsymbol{\theta})\right) +{n_{i1}}\log \left( \pi_{i1}(\boldsymbol{\theta})\right) +{n_{i2}}\log \left( \pi_{i2}(\boldsymbol{\theta})\right)\right\}\\
&=c-\frac{1}{K}\log \left(\prod_{i=1}^I \pi_{i0}(\boldsymbol{\theta})^{n_{i0}}\pi_{i1}(\boldsymbol{\theta})^{n_{i1}}\pi_{i2}(\boldsymbol{\theta})^{n_{i2}}\right)\\
&=c-\frac{1}{K}\log \left(\mathcal{L}(\boldsymbol{\theta}|\boldsymbol{\delta},\boldsymbol{IT},\boldsymbol{x})\right),
\end{align*}
where $c=\frac{1}{K}\sum_{i=1}^{I}\sum_{r=0}^{2}\left\{ {n_{ir}}\log \left( \frac{n_{ir}}{K_i}\right) \right\}$ and it does not depend on the  parameter vector $\boldsymbol{\theta}$.
\end{proof}

\subsubsection{Proof of Theorem \ref{res:est_equations2}}
\begin{proof}
The estimating equations are given by
\begin{equation}\label{eq:weigh_eq}
\dfrac{\partial}{\partial\boldsymbol{\theta} } {}^*d_{\beta}^{W}(\boldsymbol{\theta})=\boldsymbol{0}_4,
\end{equation}
where $^*d_{\beta}^{Weighted}(\boldsymbol{\theta})$ is as given in (\ref{eq:dpd}). Equation (\ref{eq:weigh_eq}) is equivalent to
\begin{equation}
\frac{1}{\beta+1}\frac{\partial}{\partial\boldsymbol{\theta}}\sum_{i=1}^I\sum_{r=0}^2 K_i\pi_{ir}^{\beta+1}(\boldsymbol{\theta})-\frac{1}{\beta}\frac{\partial}{\partial\boldsymbol{\theta}}\sum_{i=1}^I\sum_{r=0}^2K_ip_{ir}\pi_{ir}^{\beta}(\boldsymbol{\theta})=\boldsymbol{0}_4;
\end{equation}
that is,
\begin{equation*}
\frac{1}{\beta+1}\sum_{i=1}^I\sum_{r=0}^2 K_i(\beta+1)\pi_{ir}^{\beta}(\boldsymbol{\theta})\frac{\partial\pi_{ir}(\boldsymbol{\theta})}{\partial\boldsymbol{\theta}}-\frac{1}{\beta}\sum_{i=1}^I\sum_{r=0}^2K_ip_{ir}\beta\pi_{ir}^{\beta-1}(\boldsymbol{\theta})\frac{\partial\pi_{ir}(\boldsymbol{\theta})}{\partial\boldsymbol{\theta}}=\boldsymbol{0}_4,
\end{equation*}
or, equivalently
\begin{equation*}
\sum_{i=1}^I\sum_{r=0}^2 K_i\pi_{ir}^{\beta-1}(\boldsymbol{\theta})\frac{\partial\pi_{ir}(\boldsymbol{\theta})}{\partial\boldsymbol{\theta}}[\pi_{ir}(\boldsymbol{\theta})-p_{ir}]=\boldsymbol{0}_4.
\end{equation*}
But,
\begin{align*}
\pi_{i0}(\boldsymbol{\theta})&=\exp(-(\lambda_{i1}+\lambda_{i2})IT_i),\\
\pi_{i1}(\boldsymbol{\theta})&=\frac{\lambda_{i1}}{\lambda_{i2}+\lambda_{2i}}(1-\exp(-(\lambda_{i1}+\lambda_{i2})IT_i))=\frac{\lambda_{i1}}{\lambda_{i2}+\lambda_{2i}}(1-\pi_{i0}(\boldsymbol{\theta})),\\
\pi_{i2}(\boldsymbol{\theta})&=\frac{\lambda_{i2}}{\lambda_{i1}+\lambda_{i2}}(1-\exp(-(\lambda_{i1}+\lambda_{i2})IT_i))=\frac{\lambda_{i2}}{\lambda_{i1}+\lambda_{i2}}(1-\pi_{i0}(\boldsymbol{\theta})),
\end{align*}
and so
$$
\frac{\partial \pi_{i0}(\boldsymbol{\theta})}{\partial \boldsymbol{\theta}}=-IT_i\pi_{i0}(\boldsymbol{\theta}) \frac{\partial}{\partial \boldsymbol{\theta}}[\lambda_{i1}+\lambda_{i2}]=-IT_i\pi_{i0}(\boldsymbol{\theta})(\lambda_{i1}/\theta_{10},\lambda_{i1}x_i,\lambda_{i2}/\theta_{20},\lambda_{i2}x_i)^T=-IT_i\pi_{i0}(\boldsymbol{\theta})\boldsymbol{l}_i,
$$

\begin{align*}
\dfrac{\partial \pi_{i1}(\boldsymbol{\theta})}{\partial \boldsymbol{\theta}}&=(1-\pi_{i0}(\boldsymbol{\theta}))\left[ \frac{\partial}{\partial \boldsymbol{\theta}}\frac{\lambda_{i1}}{\lambda_{i1}+\lambda_{i2}}\right]-\frac{\lambda_{i1}}{\lambda_{i1}+\lambda_{i2}}\dfrac{\partial \pi_{i0}(\boldsymbol{\theta})}{\partial \boldsymbol{\theta}},\\
\dfrac{\partial \pi_{i2}(\boldsymbol{\theta})}{\partial \boldsymbol{\theta}}&=(1-\pi_{i0}(\boldsymbol{\theta}))\left[ \frac{\partial}{\partial \boldsymbol{\theta}}\frac{\lambda_{i2}}{\lambda_{i1}+\lambda_{i2}}\right]-\frac{\lambda_{i2}}{\lambda_{i1}+\lambda_{i2}}\dfrac{\partial \pi_{i0}(\boldsymbol{\theta})}{\partial \boldsymbol{\theta}},
\end{align*}
where
$$
\left[ \frac{\partial}{\partial \boldsymbol{\theta}}\frac{\lambda_{i1}}{\lambda_{i1}+\lambda_{i2}}\right]=-\left[ \frac{\partial}{\partial \boldsymbol{\theta}}\frac{\lambda_{i2}}{\lambda_{i1}+\lambda_{i2}}\right]=\frac{\lambda_{i1}\lambda_{i2}}{(\lambda_{i1}+\lambda_{i2})^2}(1/\alpha_{10},x_i,-1/\alpha_{20},-x_i)^T.
$$
We then obtain the  desired result.
\end{proof}

\subsubsection{Proof of Theorem \ref{res:asymp}}

\begin{proof}
We have
\begin{align*}
\pi_{i0}(\boldsymbol{\theta})&=\exp(-(\lambda_{i1}+\lambda_{i2})IT_i),\\
\pi_{i1}(\boldsymbol{\theta})&=\frac{\lambda_{i1}}{\lambda_{i2}+\lambda_{2i}}(1-\exp(-(\lambda_{i1}+\lambda_{i2})IT_i))=\frac{\lambda_{i1}}{\lambda_{i1}+\lambda_{i2}}(1-\pi_{i0}(\boldsymbol{\theta})),\\
\pi_{i2}(\boldsymbol{\theta})&=\frac{\lambda_{i2}}{\lambda_{i1}+\lambda_{i2}}(1-\exp(-(\lambda_{i1}+\lambda_{i2})IT_i))=\frac{\lambda_{i2}}{\lambda_{i1}+\lambda_{i2}}(1-\pi_{i0}(\boldsymbol{\theta})).
\end{align*}
It is clear that
$$
\frac{\partial \pi_{i0}(\boldsymbol{\theta})}{\partial \boldsymbol{\theta}}=-IT_i\pi_{i0}(\boldsymbol{\theta}) \frac{\partial}{\partial \boldsymbol{\theta}}[\lambda_{i1}+\lambda_{i2}]=-IT_i\pi_{i0}(\boldsymbol{\theta})(\lambda_{i1}/\alpha_{10},\lambda_{i1}x_i,\lambda_{i2}/\alpha_{20},\lambda_{i2}x_i)^T.
$$
On the other hand $\pi_{i1}(\boldsymbol{\theta})=\frac{\lambda_{i1}}{\lambda_{i1}+\lambda_{i2}}(1-\pi_{i0}(\boldsymbol{\theta}))$ and $\pi_{i2}(\boldsymbol{\theta})=\frac{\lambda_{i2}}{\lambda_{i2}+\lambda_{i2}}(1-\pi_{i0}(\boldsymbol{\theta}))$, and so
\begin{align*}
\dfrac{\partial \pi_{i1}(\boldsymbol{\theta})}{\partial \boldsymbol{\theta}}&=(1-\pi_{i0}(\boldsymbol{\theta}))\left[ \frac{\partial}{\partial \boldsymbol{\theta}}\frac{\lambda_{i1}}{\lambda_{i1}+\lambda_{i2}}\right]-\frac{\lambda_{i1}}{\lambda_{i1}+\lambda_{i2}}\dfrac{\partial \pi_{i0}(\boldsymbol{\theta})}{\partial \boldsymbol{\theta}},\\
\dfrac{\partial \pi_{i2}(\boldsymbol{\theta})}{\partial \boldsymbol{\theta}}&=(1-\pi_{i0}(\boldsymbol{\theta}))\left[ \frac{\partial}{\partial \boldsymbol{\theta}}\frac{\lambda_{i2}}{\lambda_{i1}+\lambda_{i2}}\right]-\frac{\lambda_{i2}}{\lambda_{i1}+\lambda_{i2}}\dfrac{\partial \pi_{i0}(\boldsymbol{\theta})}{\partial \boldsymbol{\theta}}.
\end{align*}
Here, we have
$$
\left[ \frac{\partial}{\partial \boldsymbol{\theta}}\frac{\lambda_{i1}}{\lambda_{i1}+\lambda_{i2}}\right]=-\left[ \frac{\partial}{\partial \boldsymbol{\theta}}\frac{\lambda_{i2}}{\lambda_{i1}+\lambda_{i2}}\right]=\frac{\lambda_{i1}\lambda_{i2}}{(\lambda_{i1}+\lambda_{i2})^2}(1/\alpha_{10},x_i,-1/\alpha_{20},-x_i)^T.
$$
\end{proof}
\clearpage
\subsection{Power function of Wald-type tests \label{app:power}}
In many cases, the power function of the test procedure cannot be derived explicitly in small-sample situation. In the following result, we present a useful asymptotic result for approximating the power function of the  Wald-type test statistic given in equation (\ref{eq:WALDtest}).

\begin{theorem}
\label{th:wald2} Let $\boldsymbol{\theta}^{\ast }\notin \boldsymbol{\Theta }_{0}$ be the true value of the parameter such that $\widehat{\boldsymbol{\theta}}_{\beta }\underset{K\rightarrow \infty }{\overset{P}{\longrightarrow }}\boldsymbol{\theta}^{\ast }$, and let us denote 
\begin{equation*}
\ell _{\beta }\left( \boldsymbol{\theta}_{1},\boldsymbol{\theta}_{2}\right) =\boldsymbol{m}^{T}\left( \boldsymbol{\theta}_{1}\right) \left( \boldsymbol{M}^{T}\left( \boldsymbol{\theta}_{2}\right) \boldsymbol{\Sigma }_{\beta }\left( \boldsymbol{\theta}_{2}\right) \boldsymbol{M}\left( \boldsymbol{\theta}_{2}\right)\right) ^{-1}\boldsymbol{m}\left( \boldsymbol{\theta}_{1}\right) .
\end{equation*} 
We then have
\begin{equation*}
\sqrt{K}\left( \ell _{\beta }\left( \widehat{\boldsymbol{\theta}}_{1},\widehat{%
\boldsymbol{\theta}}_{2}\right) -\ell _{\beta }\left( \boldsymbol{\theta}^{\ast },%
\boldsymbol{\theta}^{\ast }\right) \right) \underset{K\rightarrow \infty }{%
\overset{\mathcal{L}}{\longrightarrow }}\mathcal{N}(0,\sigma _{W_{K},\beta
}^{2}\left( \boldsymbol{\theta}^{\ast })\right) ,
\end{equation*}%
where 
\begin{equation*}
\sigma _{W_{K},\beta }^{2}\left( \boldsymbol{\theta}^{\ast }\right) =\left. \frac{\partial \ell _{\beta }\left( \boldsymbol{\theta},\boldsymbol{\theta}^{\ast }\right) }{\partial \boldsymbol{\theta}^{T}}\right\vert _{\boldsymbol{\theta}=\boldsymbol{\theta}^{\ast }}\boldsymbol{\Sigma }_{\beta }\left( \boldsymbol{\theta}^{\ast }\right)\left. \frac{\partial \ell _{\beta }\left( \boldsymbol{\theta},\boldsymbol{\theta}^{\ast }\right) }{\partial \boldsymbol{\theta}}\right\vert _{\boldsymbol{\theta}=%
\boldsymbol{\theta}^{\ast }}.
\end{equation*}
\end{theorem}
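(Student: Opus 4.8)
The plan is to read Theorem \ref{th:wald2} as a direct application of the multivariate delta method to the weighted minimum density power divergence estimator, now centred at the fixed alternative $\boldsymbol{\theta}^{\ast}$ rather than at $\boldsymbol{\theta}^{0}$. The one probabilistic ingredient I would establish first is that the convergence in Theorem \ref{res:asymp} persists under $H_1$: since $\boldsymbol{\theta}^{\ast}$ is, by hypothesis, the probability limit of $\widehat{\boldsymbol{\theta}}_{\beta}$ (equivalently the minimiser of the population version of ${}^*d_{\beta}^{W}$), the very M-estimation argument underlying Theorem 3.1 of Ghosh and Basu (2013) yields $\sqrt{K}(\widehat{\boldsymbol{\theta}}_{\beta}-\boldsymbol{\theta}^{\ast})\overset{\mathcal{L}}{\longrightarrow}\mathcal{N}(\boldsymbol{0}_{4},\boldsymbol{\Sigma}_{\beta}(\boldsymbol{\theta}^{\ast}))$, with $\boldsymbol{\Sigma}_{\beta}=\boldsymbol{J}_{\beta}^{-1}\boldsymbol{K}_{\beta}\boldsymbol{J}_{\beta}^{-1}$ as in (\ref{eq:Jbar})--(\ref{eq:Kbar}). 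Everything after this is a deterministic smoothness computation combined with Slutsky's theorem.

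Next I would check that $\ell_{\beta}$ is continuously differentiable in a neighbourhood of $(\boldsymbol{\theta}^{\ast},\boldsymbol{\theta}^{\ast})$. This needs $\boldsymbol{m}$ and $\boldsymbol{M}$ to be $C^{1}$, which is assumed before (\ref{eq:WALDtest}), together with nonsingularity of the middle matrix $\boldsymbol{M}^{T}(\boldsymbol{\theta}^{\ast})\boldsymbol{\Sigma}_{\beta}(\boldsymbol{\theta}^{\ast})\boldsymbol{M}(\boldsymbol{\theta}^{\ast})$; the latter follows from $\mathrm{rank}(\boldsymbol{M}(\boldsymbol{\theta}^{\ast}))=r$ and positive definiteness of $\boldsymbol{\Sigma}_{\beta}(\boldsymbol{\theta}^{\ast})$. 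Since matrix inversion is smooth where it is defined, the map $\boldsymbol{\theta}\mapsto A(\boldsymbol{\theta})=(\boldsymbol{M}^{T}\boldsymbol{\Sigma}_{\beta}\boldsymbol{M})^{-1}(\boldsymbol{\theta})$ is $C^{1}$ near $\boldsymbol{\theta}^{\ast}$, so the quadratic form $\ell_{\beta}(\boldsymbol{\theta}_{1},\boldsymbol{\theta}_{2})=\boldsymbol{m}^{T}(\boldsymbol{\theta}_{1})A(\boldsymbol{\theta}_{2})\boldsymbol{m}(\boldsymbol{\theta}_{1})$ is jointly differentiable there.

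I would then Taylor-expand $\ell_{\beta}(\widehat{\boldsymbol{\theta}}_{\beta},\widehat{\boldsymbol{\theta}}_{\beta})$ about $(\boldsymbol{\theta}^{\ast},\boldsymbol{\theta}^{\ast})$ to first order. The two-argument device is used precisely to separate the dependence of the quadratic form through the vector $\boldsymbol{m}(\cdot)$ (first slot) from its dependence through the plug-in Wald matrix $A(\cdot)$ (second slot); holding the second slot at its consistent limit $\boldsymbol{\theta}^{\ast}$, as in Basu et al. (2015) and Ghosh et al. (2016), the leading stochastic term is $\left.\frac{\partial \ell_{\beta}(\boldsymbol{\theta},\boldsymbol{\theta}^{\ast})}{\partial \boldsymbol{\theta}^{T}}\right|_{\boldsymbol{\theta}=\boldsymbol{\theta}^{\ast}}(\widehat{\boldsymbol{\theta}}_{\beta}-\boldsymbol{\theta}^{\ast})+o_{P}(K^{-1/2})$, the first-argument gradient being nonzero because $\boldsymbol{m}(\boldsymbol{\theta}^{\ast})\neq\boldsymbol{0}_{r}$ under $H_1$. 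Multiplying by $\sqrt{K}$ and invoking the first step together with Slutsky's theorem gives a centred normal limit whose variance is the scalar sandwich $\frac{\partial \ell_{\beta}}{\partial \boldsymbol{\theta}^{T}}\boldsymbol{\Sigma}_{\beta}\frac{\partial \ell_{\beta}}{\partial \boldsymbol{\theta}}$ evaluated at $\boldsymbol{\theta}^{\ast}$, which is exactly $\sigma_{W_{K},\beta}^{2}(\boldsymbol{\theta}^{\ast})$.

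The step I expect to be the main obstacle is justifying that the plug-in Wald matrix does not contribute to the first-order fluctuation: naively $A(\widehat{\boldsymbol{\theta}}_{\beta})-A(\boldsymbol{\theta}^{\ast})$ is itself $O_{P}(K^{-1/2})$, so one must argue that for the purpose of this power approximation the second argument is frozen at its probability limit, which is exactly what the notation $\ell_{\beta}(\cdot,\boldsymbol{\theta}^{\ast})$ encodes and is the standard convention in the divergence-based Wald-type literature. The remaining work, namely writing out $\partial \ell_{\beta}(\boldsymbol{\theta},\boldsymbol{\theta}^{\ast})/\partial \boldsymbol{\theta}$ explicitly through $\boldsymbol{M}$ and $A$, is routine and I would not carry it out in detail.
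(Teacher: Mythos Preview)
Your proposal is correct and matches the paper's own proof: freeze the second (plug-in) argument at $\boldsymbol{\theta}^{\ast}$ by consistency, Taylor-expand $\ell_{\beta}(\cdot,\boldsymbol{\theta}^{\ast})$ to first order in the first argument, and conclude from $\sqrt{K}(\widehat{\boldsymbol{\theta}}_{\beta}-\boldsymbol{\theta}^{\ast})\overset{\mathcal{L}}{\longrightarrow}\mathcal{N}(\boldsymbol{0}_{4},\boldsymbol{\Sigma}_{\beta}(\boldsymbol{\theta}^{\ast}))$. The paper is terser---it simply asserts that the asymptotic distribution of $\ell_{\beta}(\widehat{\boldsymbol{\theta}}_{\beta},\widehat{\boldsymbol{\theta}}_{\beta})$ coincides with that of $\ell_{\beta}(\widehat{\boldsymbol{\theta}}_{\beta},\boldsymbol{\theta}^{\ast})$ and then writes the one-line expansion---but the content, including the step you flag as the main obstacle, is identical.
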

\begin{proof}
Under the assumption that 
\begin{equation*}
\widehat{\boldsymbol{\theta}}_{\beta }\underset{K\rightarrow \infty }{\overset{P}{%
\longrightarrow }}\boldsymbol{\theta}^{\ast },
\end{equation*}%
the asymptotic distribution of $\ell _{\beta }\left( \widehat{\boldsymbol{\theta}}_{1},\widehat{\boldsymbol{\theta}}_{2}\right) $ coincides with the asymptotic distribution of $\ell _{\beta }\left( \widehat{\boldsymbol{\theta}}_{1},\boldsymbol{\theta}^{\ast }\right) .$ A first-order Taylor expansion of $\ell_{\beta }\left( \widehat{\boldsymbol{\theta}}_{\beta },\boldsymbol{\theta}\right) $ at $\widehat{\boldsymbol{\theta}}_{\beta }$ around, $\boldsymbol{\theta}^{\ast }$, yields

\begin{align*}
\left( \ell _{\beta }\left( \widehat{\boldsymbol{\theta}}_{\beta },\boldsymbol{\theta}^{\ast }\right) -\ell _{\beta }\left( \boldsymbol{\theta}^{\ast },\boldsymbol{\theta}^{\ast }\right) \right) &=\left. \frac{\partial \ell _{\beta }\left( \boldsymbol{\theta},\boldsymbol{\theta}^{\ast }\right) }{\partial \boldsymbol{\theta}^{T}}\right\vert _{\boldsymbol{\theta}=\boldsymbol{\theta}^{\ast }}\left( \widehat{\boldsymbol{\theta}}_{\beta }-\boldsymbol{\theta}^{\ast }\right)+o_{p}(K^{-1/2}).
\end{align*}%
Now, the result follows readily since 
\begin{equation*}
\sqrt{K}\left( \widehat{\boldsymbol{\theta}}_{\beta }-\boldsymbol{\theta}^{\ast}\right) \underset{K\rightarrow \infty }{\overset{\mathcal{L}}{\longrightarrow }}\mathcal{N}\left( \boldsymbol{0}_{4},\boldsymbol{\Sigma }_{\beta }\left( \boldsymbol{\theta}^{\ast }\right) \right) .
\end{equation*} 
\end{proof}

\begin{remark}
Based on Theorem \ref{th:wald2}, an approximation of the power function of the Wald-type test statistic in (\ref{eq:reject}) at $\boldsymbol{\theta}^*$ can be provided as follows:
\begin{align*}
\pi _{W,K}\left( \boldsymbol{\theta}^{\ast }\right)  &=\Pr \left( W_{K}\left( 
\widehat{\boldsymbol{\theta}}_{\beta }\right) >\chi _{r,\alpha }^{2}\right) =\Pr \left( K\left( \ell _{\beta }\left( \widehat{\boldsymbol{\theta}}_{\beta },\boldsymbol{\theta}^{\ast }\right) -\ell _{\beta }\left( \boldsymbol{\theta}^{\ast },\boldsymbol{\theta}^{\ast }\right) \right) >\chi _{r,\alpha }^{2}-K\ell _{\beta}\left( \boldsymbol{\theta}^{\ast },\boldsymbol{\theta}^{\ast }\right) \right) \\
& =\Pr \left( \frac{\sqrt{K}\left( \ell _{\beta }\left( \widehat{\boldsymbol{\theta}}_{\beta },\boldsymbol{\theta}^{\ast }\right) -\ell _{\beta }\left( \boldsymbol{\theta}^{\ast },\boldsymbol{\theta}^{\ast }\right) \right) }{\sigma _{W_{K},\beta}\left( \boldsymbol{\theta}^{\ast }\right) }  >\frac{1}{\sigma _{W_{K},\beta}\left( \boldsymbol{\theta}^{\ast }\right) }\left( \frac{\chi _{r,\alpha }^{2}}{\sqrt{K}}-\sqrt{K}\ell _{\beta }\left( \boldsymbol{\theta}^{\ast },\boldsymbol{\theta}^{\ast }\right) \right) \right) \\
& =1-\Phi _{K}\left( \frac{1}{\sigma _{W_{K},\beta }\left( \boldsymbol{\theta}%
^{\ast }\right) }\left( \frac{\chi _{r,\alpha }^{2}}{\sqrt{K}}-\sqrt{K}\ell
_{\beta }\left( \boldsymbol{\theta}^{\ast },\boldsymbol{\theta}^{\ast }\right) \right)
\right)
\end{align*}%
for a sequence of distributions functions $\Phi _{K}\left( x\right) $ tending uniformly to the standard normal distribution $\Phi \left( x\right) $. It is clear that 
\begin{equation*}
\lim_{K\rightarrow \infty }\pi _{W,K}\left( \boldsymbol{\theta}^{\ast }\right) =1,
\end{equation*}%
i.e., the Wald-type test statistic is consistent in the  sense of Fraser.
\end{remark}

The above approximation of the power function of the Wald-type test statistic can be used to obtain the sample size $K$ necessary in order to attain a prefixed power $\pi _{W,K}\left( \boldsymbol{\theta}^{\ast }\right) =\pi_{0}$. To do so,  it is necessary to solve the equation 
\begin{equation*}
\pi _{0}=1-\Phi _{K}\left( \frac{1}{\sigma _{W_{K},\beta }\left( \boldsymbol{%
a}^{\ast }\right) }\left( \frac{\chi _{r,\alpha }^{2}}{\sqrt{K}}-\sqrt{K}%
\ell _{\beta }\left( \boldsymbol{\theta}^{\ast },\boldsymbol{\theta}^{\ast }\right)
\right) \right) .
\end{equation*}%
The solution, in $K$, of the above equation yields  $\widehat{K}_{\beta }=\left[ \widehat{K}_{\beta }^{\ast }\right] +1$, where
\begin{equation*}
\widehat{K}_{\beta }^{\ast }=\frac{\widehat{A}_{\beta }+\widehat{B}_{\beta }+\sqrt{\widehat{A}_{\beta }(\widehat{A}_{\beta }+2\widehat{B}_{\beta })}}{2\ell _{\beta }^{2}\left( \boldsymbol{\theta}^{\ast },\boldsymbol{\theta}^{\ast
}\right) },
\end{equation*}%
with  \ $\widehat{A}_{\beta }=\sigma _{W_{K},\beta }^{2}\left( \boldsymbol{\theta}^{\ast}\right) \left( \Phi ^{-1}\left( 1-\pi _{0}\right) \right) ^{2}$ and \ $\widehat{B}_{\beta }=2\ell _{\beta }\left( \boldsymbol{\theta}^{\ast },\boldsymbol{\theta}^{\ast }\right) \chi _{r,\alpha }^{2}$.

%\newpage

%\subsection{Some results about the power }

\end{document}